\DeclareFontFamily{OT1}{pzc}{}
\DeclareFontShape{OT1}{pzc}{m}{it}{<-> s * [0.900] pzcmi7t}{}
\DeclareMathAlphabet{\mathpzc}{OT1}{pzc}{m}{it}
\acrodef{AD}{arbitrary deployment}
\acrodef{aDEP}{approximate DEP}
\acrodef{aDEP-GS}{approximate DEP based on GS}
\acrodef{aMMSE-GS}{approximate MMSE based on GS}
\acrodef{aDEP-NSE}{approximate DEP based on NSE}
\acrodef{aDEP-CG}{approximate DEP based on CG}
\acrodef{AN}{atomic norm}
\acrodef{APP}{a posteriori probabilities}
\acrodef{AoA}{angle of arrival}
\acrodef{AoD}{angle of departure}
\acrodef{AWGN}{additive white Gaussian noise}
\acrodef{BEP}{block expectation propagation}
\acrodef{BER}{bit error rate}
\acrodef{BF}{beamforming}
\acrodef{BP}{belief propagation}
\acrodef{BP-EP}{belief propagation expectation propagation}
\acrodef{CG}{conjugate gradient}
\acrodef{CRLB}{Cram\'er-Rao lower bound}
\acrodef{CSI}{channel state information}
\acrodef{CHEMP}{channel hardening-exploiting message passing}
\acrodef{DEP}{\textit{double}-EP}
\acrodef{DEP-GS}{approximate DEP based on GS}
\acrodef{DEP-NSE}{approximate DEP based on NSE}
\acrodef{DEP-CG}{approximate DEP based on CG}
\acrodef{DFE}{decision feedback equalization}
\acrodef{EC}{expectation consistency}
\acrodef{ECC}{error correction code}
\acrodef{EMB}{enhanced mobile broadband}
\acrodef{EP}{expectation propagation}
\acrodef{EP-NSE}{approximate EP based on NSE}
\acrodef{ESPRIT}{estimation of signal parameters via rotational invariance techniques}	
\acrodef{EXIT}{extrinsic information transfer}
\acrodef{FEC}{forward error correction}
\acrodef{FIR}{finite impulse response}
\acrodef{GMP}{Gaussian message passing} 
\acrodef{GS}{Gauss-Seidel}
\acrodef{GTA}{Gaussian tree approximation}
\acrodef{IDD}{iterative detection and decoding}
\acrodef{i.i.d.}{independent and identically distributed} 
\acrodef{ICI}{intercarrier interference}
\acrodef{ISI}{intersymbol interference}
\acrodef{KL}{Kullback-Leibler} 
\acrodef{KSEP}{Kalman smoothing expectation propagation}
\acrodef{LDPC}{low-density parity-check}
\acrodef{LTI}{linear time invariant}
\acrodef{LMMSE}{linear minimum mean square error}
\acrodef{LMMSE-GS}{approximate LMMSE based on GS}
\acrodef{LMMSE-NSE}{approximate LMMSE based on NSE}
\acrodef{LLRs}{log-likelihood ratios}
\acrodef{LLR}{log-likelihood ratio}
\acrodef{LTI}{linear time invariant}
\acrodef{MAP}{maximum a posteriori}
\acrodef{MCMC}{Markov chain Monte Carlo}
\acrodef{ML}{maximum likelihood}
\acrodef{MLT}{multi--level Toeplitz}
\acrodef{MIMO}{multiple-input multiple-output}
\acrodef{MMSE}{minimum mean square error}
\acrodef{MMT}{massive machine type}
\acrodef{MSE}{mean-square-error}
\acrodef{MU-MIMO}{multi-user MIMO}
\acrodef{MUSIC}{multiple signal classification}
\acrodef{NLS}{nonlinear least squares}
\acrodef{NSE}{Neumann series expansion}
\acrodef{nuBEP}{non-uniform block expectation propagation}
\acrodef{pmf}{probability mass functions}
\acrodef{pdf}{probability density function}
\acrodef{PSD}{positive semidefinite}
\acrodef{SD}{sphere decoding}
\acrodef{SDP}{semidefinite programming}
\acrodef{SEP}{smoothing expectation propagation}
\acrodef{SER}{symbol error rate}
\acrodef{SISO}{single-input single-output}
\acrodef{SNR}{signal-to-noise ratio}
\acrodef{SPA}{sum-product algorithm} 
\acrodef{S/P}{serial to paralell} 
\acrodef{TBEP}{turbo block expectation propagation}
\acrodef{TLMMSE}{turbo linear minimum mean square error}
\acrodef{URLL}{ultra-reliable low-latency}
\acrodef{ZF}{zero forcing}
\def\addlegendimage{\pgfplots@addlegendimage}
\newtheorem{theorem}{Theorem}%[section]
\newtheorem{lemma}{Lemma}
\newtheorem{example}{Example}
\newtheorem{corollary}{Corollary}
\newtheorem{remark}{Remark}
\newtheorem{definition}{Definition}
\newtheorem{assumption}{Assumption}
\newtheorem{conjecture}{Conjecture}
\newcommand{\qed}{\nobreak \ifvmode \relax \else
      \ifdim\lastskip<1.5em \hskip-\lastskip
      \hskip1.5em plus0em minus0.5em \fi \nobreak
      \vrule height0.75em width0.5em depth0.25em\fi}
\def\dd{d}
\def\ttt{t}
\def\qq{q}
\def\TT{{\boldsymbol{\mathsf T}}}
\def\QQ{{\boldsymbol{\mathsf Q}}}
\def\well{{\em well structured~}}
\def\mie{{\mathbf e}}
\def\miu{{\mathbf u}}
\def\miy{{\mathbf y}}
\def\mib{{\mathbf b}}
\def\mib{{\mathbf b}}
\def\mic{{\mathbf c}}
\def\mid{{\mathbf d}}
\def\mig{{\mathbf g}}
\def\mid{{\mathbf d}}
\def\mip{{\mathbf p}}
\def\mis{{\mathbf s}}
\def\mix{{\mathbf x}}
\def\miv{{\mathbf v}}
\def\j{{\mathtt j}}
\def\a{{\mathsf a}}
\def\b{{\mathsf b}}
\def\c{{\mathsf c}}
\def\m{{\mathsf \Delta}}
\def\x{{\mathsf x}}
\def\y{{\mathsf y}}
\def\z{{\mathsf z}}
\def\g{{\mathsf g}}
\def\nn{{\boldsymbol{\mathsf n}}}
\def\XX{{\boldsymbol{\mathsf X}}}
\def\YY{{\boldsymbol{\mathsf Y}}}
\def\ZZ{{\boldsymbol{\mathsf Z}}}
\def\Xsf{{{\mathsf X}}}
\def\Ysf{{{\mathsf Y}}}
\def\Zsf{{{\mathsf Z}}}
\def\Csf{{{\mathsf c}}}
\def\NNbar{{\bar{\boldsymbol{\mathsf N}}}}
\def\NN{{\boldsymbol{\mathsf N}}}
\def\mif{{\boldsymbol{\mathsf f}}}
\def\mig{{\boldsymbol{\mathsf g}}}
\def\mir{{\mathbf r}}
\def\miv{{\mathbf v}}
\def\miw{{\mathbf w}}
\def\mik{{\mathbf k}}
\def\mir{{\mathbf r}}
\def\miA{{\mathbf A}}
\def\miG{{\mathbf G}}
\def\miB{{\mathbf B}}
\def\miD{{\mathbf D}}
\def\miE{{\mathbf E}}
\def\miK{{\mathbf K}}
\def\miP{{\mathbf P}}
\def\miO{{\mathbf O}}
\def\miR{{\mathbf R}}
\def\miS{{\mathbf S}}
\def\miX{{\mathbf X}}
\def\miQ{{\mathbf Q}}
\def\miC{{\mathbf C}}
\def\miI{{\mathbf I}}
\def\miU{{\mathbf U}}
\def\miT{{\mathbf T}}
\def\miV{{\mathbf V}}
\def\miZ{{\mathbf Z}}
\def\miY{{\mathbf Y}}
\def\C{{\mathbb C}}
\def\Z{{\mathbb Z}}
\def\R{{\mathbb R}}
\def\T{{\mathbb T}}
\def\miXi{\boldsymbol{\Xi}}
\def\snr{\mathsf{SNR}}
\def\E{{\mathbb E}}
\def\H{\dag}
\def\error{{\footnotesize\frac{1}{d}\E\left\{\|\mif_k^\circ-\mif_k^\star\|_1\right\}}}
\newcommand{\LUno}[0]{$\ell_1$-AN}
\newcommand{\LZero}[0]{$\ell_0$-AN}
\newcommand{\LUnoDos}[0]{$\ell_2+\ell_1$-AN}
\newcommand{\LemmaUno}[0]{Lm. \ref{PrettyLemma}}
\DeclareMathOperator{\tr}{Tr}
\DeclareMathOperator{\diag}{diag}
\DeclareMathOperator{\rank}{rank}
\long\def\symbolfootnote[#1]#2{\begingroup%
\def\thefootnote{\fnsymbol{footnote}}\footnote[#1]{#2}\endgroup}
\newenvironment{opteq}{%
	\let\c@equation\c@defcounter% equation is equivalent to defcounter
	
	\begin{equation}}
{\end{equation}}
\newcommand*{\QEDwhite}{\hfill\ensuremath{\square}}
\begin{document}

% paper title

\title{Gridless Multidimensional Angle of Arrival Estimation for Arbitrary 3D Antenna Arrays}

%\author{\authorblockN{Matilde~S\'{a}nchez-Fern\'{a}ndez, \IEEEmembership{Senior Member, IEEE}}, \authorblockN{Vahid Jamali,  \IEEEmembership{Member, IEEE}} \and \authorblockN{Jaime Llorca,  \IEEEmembership{Senior Member, IEEE}}, \and \authorblockN{Antonia Tulino, \IEEEmembership{Fellow, IEEE}}\thanks{M. S\'anchez-Fern\'andez is with the Signal Theory \& Communications Dept., Universidad Carlos III de Madrid, Spain.  E-mail:  {\tt mati@tsc.uc3m.es}. V. Jamali is with Institute for Digital Communications, University of Erlangen-Nuremberg, Germany. E-mail:  {\tt vahid.jamali@fau.de}. A. Tulino, and J. Llorca  are with the ECE Dept. of New York University, Brooklyn, NY. Email: {\tt \{atulino, jllorca\}@nyu.edu}. A. Tulino is with the DIETI, University of Naples Federico II, Italy. Email: {\tt antoniamaria.tulino@unina.it}. This work has been partly funded by the Spanish Government through project TEC2017-90093-C3-2-R (TERESA-ADA) (MINECO/AEI/FEDER, UE) and by Universidad Carlos III de Madrid through 2018 Chair of Excellence program.}}

\author{
\vspace{-0.1cm}
\authorblockN{Matilde~S\'{a}nchez-Fern\'{a}ndez, \IEEEmembership{Senior Member, IEEE}}, \authorblockN{Vahid Jamali,  \IEEEmembership{Member, IEEE}} \and \authorblockN{Jaime Llorca,  \IEEEmembership{Senior Member, IEEE}}, \and \authorblockN{Antonia Tulino, \IEEEmembership{Fellow, IEEE}}\thanks{M. S\'anchez-Fern\'andez is with Universidad Carlos III de Madrid, Spain. V. Jamali is with University of Erlangen-Nuremberg, Germany. A. Tulino, and J. Llorca  are with New York University, NY. A. Tulino is with University of Naples Federico II, Italy.}
}

\maketitle
\vspace{-2cm}
\begin{abstract}

A full multi--dimensional characterization of the \ac{AoA} has immediate applications to the efficient operation of modern wireless communication systems.
In this work, we develop a compressed sensing based method to extract multi-dimensional \ac{AoA} information exploiting the sparse nature of the signal received by a sensor array. The proposed solution, based on the atomic $\ell_0$ norm, enables accurate gridless resolution of the AoA in systems with arbitrary 3D antenna arrays. Our approach allows characterizing the maximum number of distinct sources (or scatters) that can be identified for a given number of antennas and array geometry. Both noiseless and noisy measurement scenarios are addressed, deriving and evaluating the resolvability of the AoA propagation parameters through a multi--level Toeplitz matrix $\rank$--minimization problem. To facilitate the implementation of the proposed solution, we also present a least squares approach regularized by a convex relaxation of the $\rank$-minimization problem and characterize its conditions for resolvability. 
\end{abstract} 
\vspace{-0.5cm}
\begin{IEEEkeywords}
\vspace{-0.3cm}
Multi--dimensional AoA, atomic norm, $3$D antenna array, multi--level Vandermonde decomposition.
\end{IEEEkeywords} 
% To define again in this section the acronyms that have been included in the abstract
\acresetall 
\vspace{-0.5cm}
\section{Introduction}
Optimizing the performance of wireless communication systems critically relies on extracting relevant information from the wireless propagation channel. Array processing techniques have been extensively used to reveal, extract, and exploit key propagation parameters such as \ac{AoA} and \ac{AoD} in order to tailor and optimize the transmission to individual users in systems with multiple antennas \cite{Heath16}. 

In particular, \ac{AoA} (and analogously \ac{AoD}) estimation refers to the process of retrieving multi--dimensional direction of propagation parameters (e.g., azimuth, elevation) associated with possibly multiple electromagnetic sources (e.g., transmitters, scatters), from the observation of a receiving vector $\miy$, whose elements represent the signal outputs at each of the receiving antennas that form a sensor array. 
The received signal associated with a given source $k\in\{1,\ldots,K\}$ impinging on an $N$-antenna array from direction $(\theta_k, \phi_k)$ can be represented by a so-called {\em steering vector} $\mir_N\left(\theta_k, \phi_k\right)\in\C^N$, characterized by the relative phase shifts at each of the antennas. The observable $\miy$ is hence a  noisy linear combination of such steering vectors, one for each source. The goal is to identify each steering vector and extract their associated multi--dimensional AoA parameters, represented by multi--dimensional {\em frequencies},  from the observation of $\miy$.

 Existing methods for angular characterization of signal propagation include initial beamforming--based approaches \cite{Capon69} (highly limited in their resolution), covariance-based subspace methods such as Capon’s beamformer, \ac{MUSIC}, \ac{ESPRIT}, and their extensions \cite{Stoica89,Roy89}, and (determistic) \ac{ML} methods based on \ac{NLS} optimization problems \cite{Stoica89}. However, all aforementioned methods suffer from well--known limitations. Subspace and \ac{NLS} based methods require a priori knowledge of the number of sources, which may be difficult to obtain in practice. Capon’s beamformer, \ac{MUSIC}, and \ac{ESPRIT} need accurate estimates of the observable covariance matrix, which in turn requires a slow time-varying environment, a large number of observations (to build the sample covariance matrix), and a low level of source correlation (to avoid rank deficiency in the sample covariance matrix). In addition, \ac{NLS} algorithms require very accurate initialization since their objective function has a complicated multimodal  shape with a sharp global minimum.

On the other hand, an important observation is that \ac{AoA} estimation can be seen as a sparse data representation/separation problem, where {\em compressed sensing} methods \cite{Candes06} can be used to significantly enhance accuracy and robustness. In particular, compressed sensing does not require any statistical knowledge of the parameters to be estimated or associated observations. Hence, it avoids the need to estimate the observable covariance matrix or even having prior knowledge of the number of sources to identify, while still providing strong performance guarantees. Such obliviousness to the number of sources and the statistics of the received signals makes compressed sensing techniques notably suitable to overcome the limitations of previous approaches in terms of robustness to mismatches with respect to prior knowledge.

The estimation of \ac{AoA} information via compressed sensing techniques can be classified into three main categories: on--grid, off--grid, and gridless estimation. Despite \ac{AoA} being an intrinsically continuous variable, initial approaches for \ac{AoA} estimation using sparse methods were based on sampling the angular space, forcing the angles to be estimated to lay on a discrete grid \cite{Stoica11,Heckel18}. The problem with this approach is clear -- even with very fine gridding, there is always a mismatch between grid and real values. Off--grid solutions try to overcome this mismatch by means of adaptive grids \cite{Hu12} or by estimating and compensating the grid offset \cite{Yang13}. Finally, gridless approaches do not rely on any discretization of the angular space -- they directly operate in the continuous domain, completely avoiding the grid mismatch problem,  at the expense of increased computational complexity \cite{Bhaskar13,Tang13}. 

A number of recent works have studied the use of gridless compressed sensing techniques for AoA estimation. However, most of them have only been applied to the estimation of $1$D \ac{AoA} parameters in \cite{Bhaskar13,Tang13,Tan14,Semper18}, and only more recently, to multidimensional scenarios, with applications to uniform antenna arrays in \cite{Yang16,Chu19}, and allowing non-uniform deployments in \cite{Chi15,Pan18}. 

It is important to note that non--uniform $3$D deployments are becoming increasingly relevant in upcoming $5$G/$6$G communication paradigms such as \ac{EMB}, \ac{MMT}, or \ac{URLL} communications, where high rates, low latency, massive connectivity, and extremely accurate localization 
require a level of multi--dimensional space awareness that can only be achieved with arrays of matching dimensionality. Importantly, $3$D arrays rarely deploy antennas uniformly,  since they typically follow cubic \cite{Zheng12} or cylindrical \cite{Wu17b} geometries without antennas deployed inside the volume. 
\vspace{-0.5cm}

\subsection{Contributions}
\label{Sec:Contributions}

In this work, we focus on the design of robust an efficient techniques for \emph{i)} full--dimensional \ac{AoA} estimation in systems with \emph{ii)} arbitrary antenna deployments. We leverage compressed sensing techniques that exploit the sparse nature of signal measurements impinging on 
arbitrary multi--dimensional antenna arrays. The approach undertaken 
is based on sparse approximation of the signal impinging on a $3$D antenna array and the use of the Vandermonde decomposition to enable gridless extraction of multi--dimensional propagation parameters that are intrinsically continuous variables (see \cite{Tang13,Tan14,Yang16}). It is worth highlighting that while the application scenario considered in this paper focuses on $3$D array deployments where the goal is  the gridless extraction of $d$-dimensional parameters with $d \leq 3$,  the results derived in Section \ref{sec_AN} hold for  multi--dimensional parameters of arbitrary dimensionality $d$.

Our main contributions can be summarized as follows:

In the noiseless setting: 
\begin{itemize}

\item 
Given a receiving $N$-antenna array following an arbitrary $d$-dimensional configuration, we characterize the \emph{resolvable region}, i.e., the maximum number of resolvable sources (e.g., scatters) $K$ and their associated $d$-dimensional propagation parameters, referred to as {\em frequencies}. We show that the size of the resolvable region grows linearly with the number of antennas, not only in the uniform setting, but also with non--uniform arrays. Specifically, the maximum number of resolvable scatters is given by $\left \lfloor \frac{S_\Csf-(d-1)}{2} \right \rfloor$, where $S_\Csf$ is the sum of the number of antennas in each dimension of the largest uniform array embedded in the original structure. We further strongly conjecture, as validated by our numerical results, 
that the maximum number of resolvable scatters is in fact given by $\left \lceil \frac{N_\Csf}{2}-1\right \rceil$, where $N_\Csf$ is the total number of antennas in the largest uniform array embedded in the original structure. Importantly, the resolvable region characterized in this work \emph{i)} is larger than the region already available in literature \cite{Yang16} and \emph{ii)} has more general applicability, since it can be used with measurements coming from non-uniform sampled observations.

\item 
For the case of non-uniform deployments with large number of antennas (e.g., massive \ac{MIMO} systems), we show that the large-scale nature of the array deployment allows guaranteeing resolvability with probability $1-\epsilon$ of up to $K$ scatters with $N=O(K\log(K/\epsilon))$  independently of the specific array geometry. 

%{\RED [We mention array geometry so better to put this after that point]}
% for any number of scatter as large as $N=O(K \log(K))$. 
%We note that such non-uniform deployments with large number of antennas are gaining increasing relevance e.g., in user--centric communication scenarios that require $3$D coverage and precise user localization or in large intelligent surfaces applications \cite{Hu18}. 

\item  
The aforementioned characterization follows a novel constructive approach that allows not only identifies the resolvable scatters, but also extracts the associated $d$-dimensional propagation parameters. Such approach consists of three steps: First, the $N$-dimensional observable vector $\miy$ is represented as a sampled version of an enlarged $\bar N$-dimensional vector $\mis$ composed of a linear combination of {\em uniform} steering vectors associated with the smallest %a virtual 
uniform array covering each of the original antennas, referred to as {\em virtual uniform array}. In the second step, we find the minimum number of steering vectors that compose $\mis$ by reformulating the minimization of the $\ell_0$-\ac{AN} of $\mis$ as the search for a $\min$-$\rank$ $d$-dimensional \ac{MLT} matrix. A distinctive feature of our approach is to restrict this search to a specific \ac{PSD} matrix set, termed {\em canonical} (see Definition \ref{definition2}), that is shown to preserve optimality, while significantly reducing the required dimension of the virtual uniform array $\bar N$. Finally, in the third step, we show that the multilevel Vandermonde decomposition of the resulting $\min$-$\rank$ $d$-\ac{MLT} canonical matrix allows the extraction of the $d$-dimensional propagation parameters. 

%{\RED This restructuring allows the underlying %encompassing % measuring uniform structure of the observable $\miy$ to exhibit a particular Kronecker ordering??.} % [JL: whats the point of this?]}. 
%In the second step, we search for a $\min$-$\rank$ $d$-dimensional %($d$D) \ac{MLT} matrix, within a specific \ac{PSD} matrix set, which we term {\em canonical}, {\RED structurally linked to the Kronecker ordering of the observable??}. % which we name canonical. 

\item 
The proposed $\rank$ minimization based approach is NP-hard in general, and its traditional convex counterpart based on $\ell_1$-\ac{AN} and trace minimization is known to exhibit significant limitations \cite{Tang13}. Most prominently, the resulting $\ell_1$-\ac{AN} and trace minimizer is not guaranteed to be sparse. To circumvent this problem, we resort to  a weighted $\ell_2$+$\ell_1$-\ac{AN} minimization method. We numerically show the advantage of the proposed approach in overcoming the drawbacks of the $\ell_1$-\ac{AN} convex approximation.

\item
Noting that the complexity of the proposed $\rank$ and trace minimization based methods is driven by the size of the virtual uniform array, our approach allows finding the $d$-dimensional propagation parameters of  any number of scatters in the resolvable region $K$ using a virtual uniform array with at least $K$ antennas along its largest dimension, leading to a complexity that scales (exponentially for $\ell_0$-\ac{AN} and polynomially for $\ell_1$-\ac{AN}) with $K$. This result significantly outperforms the existing best known result in \cite{Yang16} that requires a virtual uniform array with at least $K$ antennas along its smallest dimension, resulting in a complexity that scales (exponentially or polynomially)
with $K^\dd$.
The reduced requirement on the dimension of the virtual uniform array is a consequence of a novel generalization of the Carathéodory-Fejér theorem, which allows enlarging the set of \ac{PSD} $d$-\ac{MLT} matrices  that  admit a unique Vandermonde decomposition (see Lemma \ref{PrettyLemma}).

%More specifically, we prove that all the  $d$-\ac{MLT} canonical  matrices whose rank is smaller then size of the largest level of the \ac{MLT} matrix can be uniquely decomposed .

\end{itemize}

In the noisy setting:
\begin{itemize}
\item For the case of noisy measurements, %for the frequency reconstruction
we resort to the weighted $\ell_2$+$\ell_1$-\ac{AN} optimization approach 
and provide a closed-form expression for the weight as a function of the signal-to-noise ratio $\snr$, which is shown to well approximate the numerically computed optimal value. We evaluate the performance of our approach for moderate and high $\snr$, showing an estimation error very close to the associated \ac{CRLB}.   

\end{itemize}

General array design guidelines:
\begin{itemize}
\item
 %Based on these physical parameters we prove (see  Corollary \ref{CorollaryTh1} and Theorem \ref{theoremL1}) that any number of scatters smaller than $(S_\Csf - (d-1))/2$  is resolvable with complexity not smaller than $O(\alpha_\Csf\beta_\Csf \gamma_\Csf)$ where $\alpha_\Csf=\min\{\Zsf_\Csf+(d-1), 2K \}$, $\beta_\Csf=\min\{\max\{2K-(\Zsf_\Csf+(d-1))+2, 2 \},Y_\Csf \}$,$\gamma_\Csf=\min\{\max\{2K-(\Zsf_\Csf+\Ysf_\Csf+(d-1)), 2 \},\Xsf_\Csf \}$. We strongly conjecturethat if $K$ is smaller than  $N_\Csf/2$ (see Conjectures \ref{conjec1}-\ref{conjec2}) the complexity required to resolve the $K$ scatters and identify their $d$-dimensional propagation parameters is $O(K)$. 
Our results suggest that if the goal is identifying up to $K$ scatters (and extract their associated $d$--dimensional propagation parameters), we should design an antenna array whose smallest uniform embedded structure has at least $2K$ antennas, where most of them are lying along a dominate dimension, while maintaining at least $2$ antennas in the other dimensions. While $2(K+1)$ antennas, independent of geometry, allows resolving $K$ scatters, a design with a dominant dimension will help resolving the $K$ scatters with complexity $O(K)$ instead of $O(K^\dd)$. 
%This is in sharp contrast to existing results in which the requirement for the minimum (instead of maximum) number of antennas among each dimension of the virtual uniform array to be at least $K$ leads to designs with the same number of antennas in each dimension and associated virtual uniform arrays of size $O(K^\dd)$.

%Hence, a given a number of number of physical antennas which dictates how many scatters can be resolved, your results provide a way to design the non-uniform array. The best is to choose non-uniform geometry where the underline largest uniform sub-array embedded in original structure as most of its  $N_\Csf$ antennas mainly along one directions and leaving only two for each of the remaining ones in order to reduce computational complexity 

%In other words given a number of physical antennas, our conjecture states that does not matter how you place geometrically (in the physical geometry) the antennas but this influences the complexity for resolving the number of scatters. Our results in terms of $S_\Csf$ tells us that actually even how you place geometrically (in the physical geometry) the antennas is important since the Root-Mean Square-Arithmetic Mean-Geometric Mean-Harmonic Mean Inequality tell  us that in order to enlarge the resolvable region you want to have always one dominant  dimension in terms of antennas. In addition both criteria tell us that in order to minimize complexity you want again one dominant  dimension in terms of antennas.

\end{itemize}
\vspace{-0.5cm}
\subsection{Related work}

%{\RED [[THIS IS THE COMPARISON WITH THE OTHER WORK YOU TELL ME IF YOU PREFER TO PUT AFTER OR BOFER THE RED]]}

%[[{\RED Jaime for [23] we should emphasize that setting is resctricted at a uniform setting and resolvable region is smaller that what we provide. {\MATI In fact it is for \cite{Yang16} given the actual numbering of references}}]]

We now describe results and limitations of the compressed sensing based studies that are more closely related to our work. 
%We emphasize that the limitations of previous results arise from the fact of requiring a higher level of prior knowledge and/or lacking applicability to general array structures.
We emphasize that the limitations of such previous results arise from the fact of requiring more stringent conditions for frequency recovery and/or lacking applicability to general array structures.

In terms of the resolvable region, the best known result, provided in \cite{Yang16}, bounds the number of resolvable scatters $K$ by the minimum number of antennas in each of the $d$ dimensions of the antenna array. 
In contrast, we significantly enlarge the resolvable region by bounding the number of resolvable scatters by the maximum number of antennas among each dimension. In terms of the number of antennas to resolve $K$ scatters, our approach improves the result in \cite{Yang16} by requiring $O(K)$ antennas instead of $O(K^\dd)$.
In addition, % to being significantly smaller than the region characterized in this paper,  
while the recovery conditions in \cite{Yang16} are only applicable to uniform antenna deployments, %and require the prior knowledge of an upper bound on the number of scatters %$d$-dimensional parameters to be estimated, 
our approach applies to arbitrary $3$D antenna arrays. 
%{\RED In fact, the maximum number of recoverable scatters $K$ is bounded by the minimum number of radiating elements in each of the $d$ dimensions.}
%The authors in \cite{Yang16} {\RED relate the $\rank$ of the \ac{MLT} matrix that enforces the signal structure embedding the frequencies, to the minimum component dimension of the $d$-dimensional space??}. 
% therefore the number of $d$-dimensional parameters that can be resolved is increased, for the same antenna structure. 
Finally, different from \cite{Yang16}, we ensure uniqueness of the solution to the $\ell_0$ optimization problem by introducing a condition on the dimension of the measurement vector (see Theorem \ref{theoremApiProbalistic}) or imposing a specific array geometry (see Corollary \ref{CorollaryTh1}).

%[[{\RED Jaime for [17] [25][26] we should emphasize that setting is extended  at a non-uniform setting by random sampling but this type of geometric structure is not real realistic and in addition these papers results in small resolvable region. Maybe the first time we mention the smaller region if good that we refer to the section where we derived this results and we make this point which is (Section \ref{main_results} Theorem   	\ref{theoremApi}, Remark \ref{vafanculo})} 
%{\MATI Additional information: \cite{Tang13} is 1d and random sampling and \cite{Chi15} \cite{Pan18} are 2d and there is no reference to the possibility of enlarging the dimensions. Besides \cite{Chi15} is random observation and they identify the resolvability region with the minimum of the dimensions as Stoica does.}]]

In terms of non-uniform array deployments, the works of \cite{Tang13}, \cite{Chi15}, and \cite{Pan18} initiated such analyses by considering non-uniform geometries as {\em random} samples of uniform  array structures in the asymptotic regime. Specifically, \cite{Tang13} focused on the one-dimensional setting (i.e., linear array) and provided recovery conditions that allow recovering $K$ scatters with high probability (i.e., 
$1-\epsilon$) given that there is a minimum separation among the frequencies and a number of active antennas $N =  O\left(K \log{ \left(K/\epsilon \right)}
\log{ \left(\bar{N}/\epsilon \right)}\right)$ with $\bar{N}$ the number of antennas in the uniform array structure. The works of \cite{Chi15} and \cite{Pan18} extend the approach in \cite{Tang13} to the $2$-dimensional setting (i.e., square planar array). While \cite{Chi15} provides recovery conditions that are exactly the extensions to the $2$-dimensional setting of the ones provided in \cite{Tang13}, \cite{Pan18} formulates a low complexity implementation of the recovery algorithm, albeit less practically verifiable recovery conditions. The major drawback of the results in \cite{Tang13}, \cite{Chi15}, and \cite{Pan18} is that for most common applications, non-uniform array deployments can not been seen as the result of sampling uniform array structures uniformly at random. Our work extends and generalizes the results in \cite{Tang13} and \cite{Chi15} considering a completely arbitrary non-uniform deployment, which can be either deterministic or arbitrarily random, it does not require a minimum separation among the frequencies, and it requires a number of active antennas that grows as $O\left( K \log(K/\epsilon)\right)$ (see Theorems \ref{theoremApiProbalistic} and \ref{theoremApiProbalisticl1}). 
%and for its convex formulation it  does requires a check on the rank (see Theorems \ref{theoremApiProbalistic} and \ref{theoremApiProbalisticl1}).

%[[ I WOULD DELETE THIS: In our work, we extend and generalize the results in \cite{Tang13} and \cite{Chi15} considering a completely arbitrary non-uniform deployment which is not necessarily the result of sampling a uniform array uniformly at random (see Theorems \ref{theoremApiProbalistic} and \ref{theoremApiProbalisticl1}).]]

%The recovery conditions under random observation of the measurements is also explored in our work, and still the results in \cite{Chi15} are improved, given that it holds the dependence on the maximum of the component dimensionality.

%While \cite{Tang13} and  \cite{Chi15} employ sensing matrix operating a  uniformly at random sampling of the uniform array deployment, 
The authors in \cite{Heckel18} extended the the recovery conditions stated in \cite{Tang13} and  \cite{Chi15} to a more general set of sensing matrices provided that they obey isotropy and incoherence conditions. However, differently from \cite{Tang13} and  \cite{Chi15},  
\cite{Heckel18} resorts to on-grid approaches to identify the propagation parameters or frequencies, resulting in potentially significant gridding errors. % where they  to assume the propagation parameters (frequencies) to  lie on a fine grid, and solve the generalized line spectral estimation problem on that grid. When the frequencies do not lie on a grid this of course leads to a “gridding error.

\textit{Notation:} $[\cdot]^\top$ is the transpose and $[\cdot]^{\H}$ is the Hermitian. The operator $\diag(\mathbf{x})$, returns a diagonal matrix with diagonal given by $\mathbf{x}$. %, and when applied to a square matrix, e.g., $\text{diag}(\mathbf{X})$, returns its diagonal vector. 
Also, $\miX_{\left(m\right)}=\mix_m$ and $\miX^{\left(m\right)}=\mix^m$ are respectively the $m$-th column\,/\,row of matrix $\miX$, $x_{nm}$ is the matrix element in row $n$ and column $m$, and $\miX^{(\mathcal{I})}$ and  $\mix^{(\mathcal{I})}$ are the submatrix of $\miX$  and the subvector of $\mix$ given respectively by the rows and elements in the index set $\mathcal{I}$. For a given integer $K\in\Z$, $\left[K\right]=\left\{1,\dots,K\right\}$. $\T$ denotes the unit circle $\left[0, 1\right]$ by identifying the beginning and the ending points.  $\|\cdot\|_\rho$ represents the $\ell_\rho$ norm and $\left \| \miX \right \|_{2 \rightarrow 2}$  represents the squared-root of the largest eigenvalue of the Hermitian matrix $\miX^\dagger\miX$.  The $y$--modulus of value $x$ is given by $\mod\left(x,y\right)$. A $N\times M$ all-zeros matrix is given by $\mathbf{0}_{N\times M}$ and $\mathbf{1}_{n}$ is the all--ones vector of dimension $n\times 1$. The different products of vectors and matrices are the inner product of vectors $\mix$ and $\miy$ represented with $\mix\cdot\miy$, the Kronecker product represented by $\otimes$ and the Khatri--Rao product represented with $\odot$.  $\miA^\g$ denotes the generalized inverse of matrix $\miA$;  and $|c|$ and $\angle c$ are the absolute value and angle of complex number $c$ in polar coordinates, respectively.

\section{Characterization of the steering vectors in the unit circle frequency domain}\label{sec:II}

\subsection{General model for steering vectors}

Let us consider a propagation scenario with a receiving $N$-antenna array, partly characterized by the response of the steering vector $\mir_N\left(\theta,\phi\right)\in\C^N$ in the direction $(\theta,\phi)$, identified as the azimuth and elevation angles, respectivelly $\theta\in\left[0,2\pi\right]$ and $\phi\in\left[-\frac{\pi}{2},\frac{\pi}{2}\right]$. The structure of the measuring steering vector depends on the number of antennas $N$ and on their normalized relative positions in the sensor array $\mip_n=[p_n^x, p_n^y,p_n^z]^\top$ with $n\in[N]$. The phase at the $n$-th antenna is given by $\Phi_n=-{2\pi}\big(p_n^x\sin\theta\cos\phi+ p_n^y\sin\theta\sin\phi +p_n^z\cos\theta \big)$. Then, the measuring steering vector can be represented as 
	\begin{IEEEeqnarray}{lll}
		\mir_N(\theta,\phi)&=\frac{1}{\sqrt{N}}\big[e^{\j\Phi_1},   \ldots, e^{\j\Phi_N}\big]^\H.
		\label{eq:rd}
	\end{IEEEeqnarray}
In the following subsection, we first simplify the steering vector in \eqref{eq:rd} for uniform arrays. Subsequently, we show that the steering vector of a general non-uniform array can be equivalently represented by a sampled version of the steering vector of a virtual uniform array via a proper sampling (or sensing) matrix.  

\subsection{Steering vector for a uniform $3$D array}
\label{SV_3DUL}

 Consider a uniform $3$D array deployment ($3$D-UD) with $N=\Xsf\Ysf\Zsf$ antenna elements, where $\Xsf$,  $\Ysf$, and $\Zsf$ are the number of elements deployed in each spatial dimension of the Cartesian coordinate system. We define the $d=3$ dimensional vector  $\NN\triangleq\left[\Xsf,\Ysf,\Zsf\right]$ identifying the number of antenna elements, and deploy the antennas with normalized spacing $\delta^x$, $\delta^y$ and $\delta^z$ in each of the dimensions. The $n$-th antenna normalized position is given in this case by $\mip_n=\left[\x_n \delta^x,\y_n \delta^y,\z_n \delta^z\right]^\top$ with $\x_n\in\left\{0,\dots, \Xsf-1\right\}$, $\y_n\in\left\{0,\dots, \Ysf-1\right\}$, and $\z_n\in\left\{0,\dots,\Zsf-1\right\}$. Then, the phase simplifies to $\Phi_n=-{2\pi}\big(\x_n \delta^x\sin\theta\cos\phi +\y_n \delta^y\sin\theta\sin\phi+\z_n\delta^z\cos\theta\big)$.

Let us  define a normalized frequency vector  $\mif=\left[f^x,f^y,f^z\right]^\top\in\T^3$ that contains the information on both azimuth and elevation, with $f^x=\mod\left(\delta^x\sin\theta\cos\phi,1\right)$, $f^y=\mod\left(\delta^y\sin\theta\sin\phi,1\right)$, and $f^z=\mod\left(\delta^z\cos\theta,1\right)$ and the normalized position of antenna $n$-th as $\nn_n=\left[\x_n,\y_n,\z_n \right]^\top$. Then, we obtain the phase $\Phi_n=-2\pi\left(\x_n f^x+\y_n f^y+\z_n f^z\right)+ 2m\pi=-2\pi\mif\cdot \nn_n + 2m\pi$, where $m$ is an integer. The receive antenna steering vector $\mir_\NN\left(\mif\right)$ is obtained as  
\begin{IEEEeqnarray}{rll}\label{rNdecomp}
\mir_\NN\left(\mif\right)
=\,&\frac{1}{\sqrt{N}}\big[e^{\j 2\pi\mif\cdot \nn_1},e^{\j 2\pi\mif\cdot \nn_2},\dots,e^{\j 2\pi\mif\cdot \nn_N}\big]^\H%\nonumber\\
%=\,&\frac{1}{\sqrt{\Xsf}}\big[e^{\j 2\pi \x_1f^x  },e^{\j 2\pi \x_2f^x  },\dots,e^{\j 2\pi \x_\Xsf f^x  }\big]^\H\nonumber\\
%&\otimes\frac{1}{\sqrt{\Ysf}}\big[e^{\j 2\pi \y_1f^y },e^{\j 2\pi \y_2f^y  },\dots,e^{\j 2\pi \y_\Ysf f^y  }\big]^\H\nonumber\\
%&\otimes\frac{1}{\sqrt{\Zsf}}\big[e^{\j 2\pi \z_1f^z  },e^{\j 2\pi \z_2f^z  },\dots,e^{\j 2\pi\z_\Zsf f^z  }\big]^\H\nonumber\\
=\,&\mir_{\Xsf}(f^x)\otimes\mir_{\Ysf}(f^y)\otimes\mir_{\Zsf}(f^z).
\end{IEEEeqnarray}
where $\mir_\Xsf(f^x)=\frac{1}{\sqrt{\Xsf}}\big[e^{\j 2\pi \x_1f^x  },e^{\j 2\pi \x_2f^x  },\dots,e^{\j 2\pi \x_\Xsf f^x  }\big]^\H$, and $\mir_\Ysf(f^y)$, $\mir_\Zsf(f^z)$ are defined equivalently. We assume a $d$-dimensional frequency vector  $\mif\in\T^d$, where $d\in\{1,2,3\}$. In the most general case, if the antenna array is deployed along all dimensions of the Cartesian system, we have $d=3$. From \eqref{rNdecomp}, $3$D-UD antenna deployments lead to a regular structure of the steering vector which we refer to as \emph{uniform} steering vector.

  \begin{figure*}[t!]
	\centerline{\subfigure[]{\includegraphics[width=0.25\linewidth]{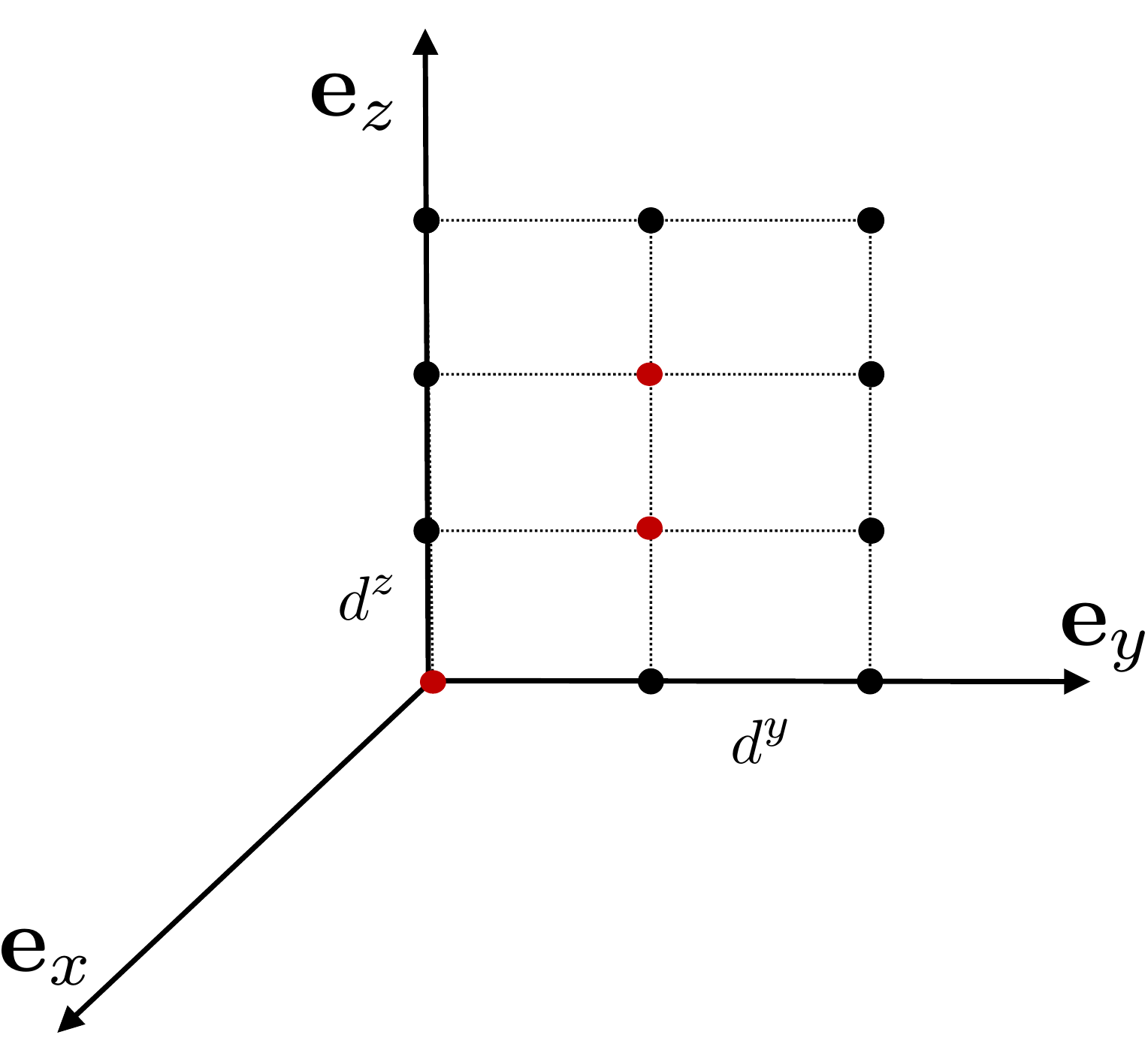}
			\LABFIG{fig:2D}}
		\hfil
		\subfigure[ ]{\includegraphics[width=0.25\linewidth]{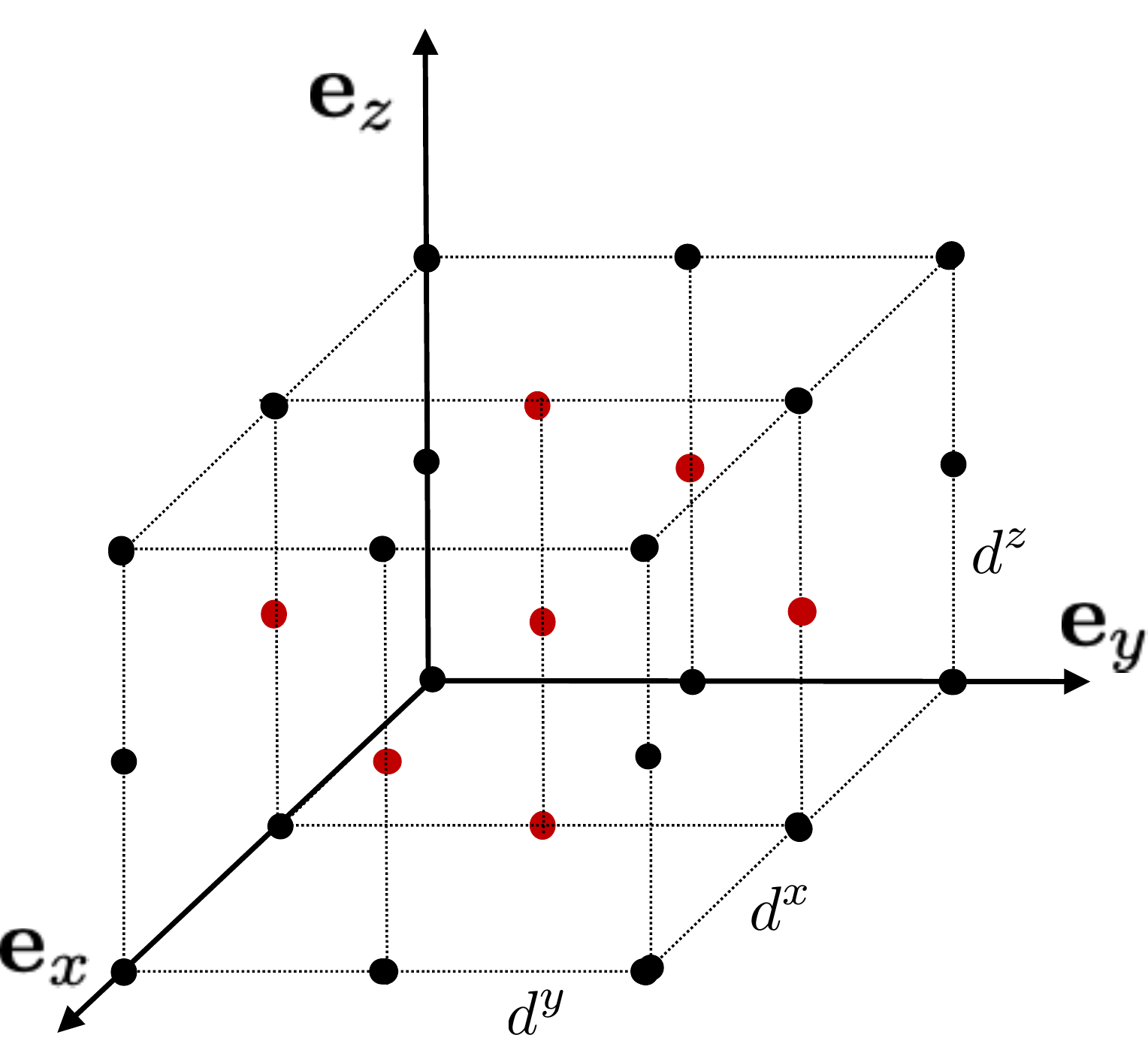}
			\LABFIG{fig:3D}}
	}
	\vspace{-0.4cm}
	\caption{Sensing an uniform array to obtain $\dd$D-\acs{AD} antenna configurations. %The antennas in red color are removed from the uniform virtual array using a sensing matrix $\miA$.%:  (a) Example of a $2$D-\acs{AD} array with an encompassing uniform virtual structure given by  $\bar{\Xsf}=1$, $\bar{\Ysf}=3$, and $\bar{\Zsf}=4$. (b)  Example of a $3$D-\acs{AD} array with an encompassing uniform virtual structure given by $\bar{\Xsf}=3$, $\bar{\Ysf}=3$, and $\bar{\Zsf}=3$, representing a typical cubic structure.  
	}
		\LABFIG{fig:array}
\end{figure*}

 \subsection{Arbitrary array deployments}
 \label{sec:AD}
 
Many antenna deployments, particularly those using cubic or cylindrical geometries, \cite{Zheng12, Wu17b}, cannot be described, in terms of their associated steering vectors, following the regular structural property mentioned in Section \ref{SV_3DUL}. In the following we refer to such types of  \ac{AD} as an $3$D-\ac{AD} array. 

Given any arbitrary $3$D-\ac{AD} array, the associated steering vector $\mir_{{N}}\left(\mif\right)$, even though not characterized by any regular structure, can be obtained starting from an encompassing steering vector of a virtual $3$D-UD as follows. Let ${\Xsf}\delta^x$,  ${\Ysf}\delta^y$ and 
${\Zsf}\delta^z$ be the maximum normalized antenna coordinates along the directions of the  $3$D-\ac{AD} array, where here $\delta^x$, $\delta^y$, $\delta^z$ represents the minimum normalized spacing between the array elements in each direction. Next, consider the virtual $3$D-UD  with
$\NNbar=[\bar{\Xsf},\bar{\Ysf},\bar{\Zsf}]$ elements, encompassing the $3$D-\ac{AD} array, such that
$\bar{\Xsf}\geq {\Xsf}$,  $\bar{\Ysf}\geq {\Ysf}$
and $\bar{\Zsf}\geq {\Zsf}$ and $\bar{N}=\bar{\Xsf}\bar{\Ysf} \bar{\Zsf} \geq N$, with associated steering vector $ \mir_{\NNbar}\left(\mif\right)$:
\begin{equation}
\mir_{\bar{\NN}}\left(\mif\right)= 
\mir_{\bar{\Xsf}}(f^x)\otimes\mir_{\bar{\Ysf}}(f^y)\otimes\mir_{
\bar{\Zsf}}(f^z).
\label{eq:rbar}
\end{equation}
Then, $\mir_{{N}}\left(\mif\right)$ can be obtained  starting from  $ \mir_{\NNbar}\left(\mif\right)$ using a \emph{binary sensing matrix} $\miA\in\{0,1\}^{N\times \bar{N}}$, which effectively removes some elements from $\mir_\NNbar\left(\mif\right)$ (see \FIG{fig:2D} and \FIG{fig:3D}), as:
\footnote{{ Note that $\mir_{{ N}}\left(\mif\right)$ refers to a steering vector of dimension $N$ without any specific structure, while $\mir_{\NN}\left(\mif\right)$ and $\mir_{\bar{\NN}}\left(\mif\right)$ represent uniform arrays with a deployment respectively given by $\NN=[\Xsf,\Ysf,\Zsf]$ and $\NNbar=[\bar{\Xsf},\bar{\Ysf},\bar{\Zsf}]$ elements.}}
\begin{equation}
 \mir_{N}\left(\mif\right)=\miA \mir_{\bar{\NN}}\left(\mif\right)
\label{eq:sensing_r}
\end{equation}
with $\miA$ being a fat matrix with elements $a_{ij}=0$  if component $j\in[\bar{N}]$ in the generating virtual $3$D-UD array is not included as component $i\in[N]$ in the $3$D-\ac{AD} array, and $a_{ij}=1$ otherwise. 
Note that in order to ensure that the sensing matrix removes elements from the $3$D-UD and that one particular element in the $3$D-UD is not considered more than once in the $3$D-\ac{AD} the sensing matrix $\miA\in\{0,1\}^{N\times \bar{N}}$ should meet both following conditions $\miA\mathbf{1}_{\bar{N}}=\mathbf{1}_{N}$ and $\miA^\top\mathbf{1}_{N}=\mathbf{1}_{\bar{N}}$. This is equivalent to impose that the sensing matrix $\miA\in\{0,1\}^{N\times \bar{N}}$ belongs to the {\em sensing  set} 
$\mathcal{A} \subset \{0,1\}^{N\times \bar{N}}$ defined as:
\begin{equation}
\mathcal{A}= \big\{\miA\in\{0,1\}^{N\times \bar{N}}: \, \big[\miI_N|{\bf 0}_{N\times (\bar{N}-N)}\big]  {\bf \Pi},\, {\bf \Pi} \in\mathcal{P}\big\}
\label{eq:sensing_set}
\end{equation}
with  $\mathcal{P}$ the set of all $\bar{N}\times \bar{N}$ permutation matrices, and $N\leq\bar{\Xsf}\bar{\Ysf}\bar{\Zsf}=\bar{N}$. 
Each $3$D-\ac{AD} array can be then described using the associated virtual $3$D-UD array and sensing matrix $\miA$. Note that there are multiple pairs of virtual $3$D-UD array and sensing matrix $\miA$ that
can be  associated to a given $3$D-\ac{AD} array.

 \begin{remark}
 \label{Kronorder}
 Note that the  ordering of the Kronecker product in \eqref{eq:rbar}, induces, via the $3$D-UD steering vector $\mir_{\bar{\NN}}$, an indexing of the virtual array antenna elements.  Such Kronecker ordering is arbitrary and it will be in the following properly chosen to guarantee the maximum system capability in terms of  number of resolvable directions $\mik$. In the following, to refer to the Kronecker ordering  given in   \eqref{eq:rbar} we use the notation $\bar{\Zsf}\to\bar{\Ysf}\to\bar{\Xsf}$.
\end{remark}

%\begin{remark}
%The binary \emph{sensing matrix} could be generalized to a complex \emph{interpolating matrix}, in those scenarios where the antenna array geometry cannot be generated from a sampling of a $3$D uniform grid. There are conditions under which a virtual $3$D-UD array output may be linearly interpolated to give any array deployment with a desired accuracy \cite{Doron94a,Hyberg04,Hyberg05}. In \cite{Costa10}, a finite $2$D Fourier approximate decomposition is presented, allowing an arbitrary array deployment to be formulated by means of an interpolating matrix and a virtual $3$D-UD following \eqref{eq:sensing_r}. While some of the recovery  conditions  presented in next sections for binary \emph{sensing matrix} can be easily extended complex \emph{interpolating matrix}, the analysis of interpolating  matrices is out of the scope of this work.
%\end{remark}

\section{Gridless parameter extraction}
\label{sec_AN}
Starting from an arbitrary $\dd$D-\ac{AD} array with $d \leq 3$ 
and its associated general steering vector structure described in
Sec. \ref{sec:AD}, equations \eqref{eq:rbar}-\eqref{eq:sensing_r}, in this section, we focus on the problem of
extracting key multi–dimensional propagation parameters, such  as  \ac{AoA},  from the signal received at the antennas elements of the $\dd$D-\ac{AD} array. 
Specifically, given a $\dd$D-\ac{AD} array, under the assumption of $K$ multiple sources incident  on the  array from $K$ \ac{AoA}s  (or equivalently $K$ local scatters reflecting the impinging incoming signal from $K$ \ac{AoD}s),  the received signal, measured 
at the antennas elements,  $\miy\in\C^N$, can be written as: 
\begin{IEEEeqnarray}{lll}
\miy=\miA\miR_{\bar{\NN}}\left(\mif^\star_{1:K}\right)\miu^\star+\miw=\miA\mis^\star+\miw,
\label{ec_parametrical}
\end{IEEEeqnarray} 
where $\miA\in\mathcal{A}$ and $\miR_{\bar{\NN}}\left(\mif^\star_{1:K}\right)$$=
$$\left[ \mir_{\bar{\NN}}(\mif^\star_1), \dots,  \mir_{\bar{\NN}}(\mif^\star_K)\right]\in\C^{\bar{N} \times K}$ are, respectively, the sensing matrix and the steering vector matrix of the encompassing virtual $3$D-UD associated to the $\dd$D-\ac{AD} array as defined in Sec. \ref{sec:AD}, $\miu^\star=\left[u^\star_1,\dots,u^\star_K\right]\in\C^K$ is the unknown incoming signal vector incident on the array from the $K$ \ac{AoA}s, and  $\miw \in \C^{N}$ is the \ac{AWGN} noise vector whose elements are independent with zero mean and variance $\sigma^2$. 

The objective is to retrieve from the measurement $\miy\in\C^N$, in both noiseless (i.e. $\miw=\mathbf{0}_{N\times 1}$) and noisy scenarios (i.e $\miw\neq\mathbf{0}_{N\times 1}$),
the $K$--scatter vector $\mis^\star$, the number of sources/scatters, $K$, and the associated set of frequencies, $\mif^\star_{1:K}=\left\{\mif^\star_{1},\mif^\star_{2}\dots,\mif^\star_{K}\right\}$ that characterizes the $\dd$D \ac{AoA}s of each source/scatter, with $\mif^\star_{k}\in\T^d$ and $d\leq 3$.
%{\CYN In the next subsections, given the parametric linear model in \eqref{ec_parametrical},  for both noiseless (i.e. $\miw=\mathbf{0}_{N\times 1}$) and noisy scenarios (i.e $\miw\neq\mathbf{0}_{N\times 1}$), we propose several solutions for the $\dd$D \ac{AoA}  extraction, based on non--convex and convex optimization. The proposed solutions will be able to provide an estimate of $K$, of the set of frequencies $\mif^\star_{1:K}$ as well as of the transmitted signal $\miu^\star$. Furthermore, for the noiseless case, we characterize, as function of the number of antennas elements of the $\dd$D-\ac{AD} arrays, the maximum number of distinct \ac{AoA}s, (i.e distinct $\dd$D frequency vectors, $\mif^\star_{k} \in\T^d$,)  that can be uniquely recovered from the measurement $\miy$.} 
All our frequency recovery conditions are derived under the following assumption:
\begin{assumption}
\label{A1}
The $\dd$D frequency vectors, $\mif^{\star}_k$ with $ k\in[K]$,
associated to the  $\dd$D \ac{AoA}s of the $K$ sources/scatters are modeled as independent and identically distributed random vectors whose components are independent and uniformly distributed on $[0,1)$, i.e. $f^{\alpha\star}_k \sim \mathcal{U}[0,1)$,  for $ k\in[K]$, and $\alpha\in\{x,y,z\}$.
\end{assumption}

Note that, given our work's application scenario, this assumption is not restrictive. In fact, in wireless propagation environments, sources/scatters are typically modeled to be uniformly random placed in the surroundings of the receiver. 

Furthermore,  even though the derivations of the frequency recovery conditions are conducted for $d\leq3$ given the application scenario of our work, they are applicable also to $d>3$.

\subsection{Signal and frequency recovery in noiseless scenarios} 
\label{subsect:mavafanculo}

\subsubsection{Problem statement and previous results}
\label{sec:Problemstatement}
In this section, we focus on the noiseless setting, i.e:
\begin{equation}
\miy=\miA\miR_{\bar{\NN}}\left(\mif^\star_{1:K}\right)\miu^\star=\miA\mis^\star,
\label{ec_parametrical_noiseless}
\end{equation}
with 
$\miA\in\mathcal{A} \subset \C^{N \times\bar{N}}$. 
As already stated our objective is to retrieve $\mis^\star$, $K$, and $\mif^\star_{1:K}$.
%As already stated our objective is, using $\miy$, \emph{i}) to reconstruct $\mis^\star$ (eventually the incoming signal $\miu^\star$) and \emph{ii}) to identify its associated frequencies $\mif^\star_{1:K}$. 
As first step, let us provide the following definitions.

\begin{definition}

	Given a $\bar{N}$-dimensional vector $\mis$, the $\ell_0$ \ac{AN} ($\ell_0$-\ac{AN}) of $\mis$ in  $\mathcal{R}=\left\{\mir_{\bar{\NN}}(\mif):\mif\in\T^d\right\}$ is defined as $
	\left\|\mis\right\|_{\mathcal{R},0}=
	\inf_{\mif_k\in \T^d,u_k\in\C}\bigg\{ K : \, \,  \mis=\sum_{k=1}^{K} u_k\mir_{\bar{\NN}}\left(\mif_k\right)\bigg\}.$
	\QEDwhite
\end{definition}

\begin{definition}
	A $K$--scatter vector $\mis$ is defined as a vector whose atomic $\ell_0$ norm is equal to $K$. \QEDwhite
\end{definition}

It is well known that for a given measurement $\miy$ and a given sensing matrix $\miA\in\mathcal{A}$ as  in 
\eqref{eq:sensing_set}, the $K$--scatter vector $\mis^\star=\miR_{\bar{\NN}}\left(\mif^\star_{1:K}\right)\miu^\star$
can be reconstructed, from the noiseless measurement $\miy$ in \eqref{ec_parametrical_noiseless}, as the unique solution of:
	\begin{opteq}
		\underset{\mis \in \C^{\bar{N}}}{\min} \left\|\mis\right\|_{\mathcal{R},0} \quad {\rm s.t.}  \quad  \miy=\miA\mis
		\label{l0normopt}
	\end{opteq}	
 if $\miA\miR_{\bar{\NN}}\left(\mif^\star_{1:2K}\right)$ is injective  as a map  from $\C^{2K}\to \C^{N}$.
For the problem of identifying the  frequencies, $\mif^\star_{1:K}$,
associated to $\mis^\star$, several methods have been proposed in literature \cite{Stoica11,Hu12,Yang13,Tang13,Tan14,Yang16}. In the following we focus on the so-called gridless approaches \cite{Hu12,Yang16}. The gridless approach for $d=1$ is based on the Carathéodory-Fejér theorem. Such approach and the associated Carathéodory-Fejér theorem has been  recently generalized to the case of $d >1$ in \cite{Yang16} by introducing a \ac{PSD}  $d$-Level Toeplitz ($d$-LT) matrix, whose definition is provided in the following for the specific case of $d=3$.

\begin{definition}
\label{definition1}
Let $d=3$. A $\bar{N} \times \bar{N}$ matrix  $\miV \in \C^{\bar{N} \times \bar{N}}$ is a $d$-Level Toeplitz ($d$-LT) matrix
with nesting ordering  $\bar{\Zsf}\to\bar{\Ysf}\to\bar{\Xsf}$ if it is a 
$\bar{\Xsf} \times \bar{\Xsf}$ block Hermitian Toeplitz matrix defined as $\miV=\miV_{\bar{\XX}\bar{\YY}\bar{\ZZ}}$ in \eqref{eq:toeplitz}.(a). Furthermore, the $\bar{\Ysf} \bar{\Zsf} \times \bar{\Ysf} \bar{\Zsf}$-dimensional generic block $\miV_{\a\bar{\YY}\bar{\ZZ}}$ with  $-\bar{\Xsf}+1 \leq \a  \leq \bar{\Xsf}-1$ is a	$\bar{\Ysf} \times \bar{\Ysf}$ block Hermitian Toeplitz matrix of the form given in \eqref{eq:toeplitz}.(b) such that $\miV_{\a\bar{\YY}\bar{\ZZ}}= \miV_{-\a\bar{\YY}\bar{\ZZ}}^\dagger$. Finally, the generic block $\miV_{\a\b\bar{\ZZ}}$  with  $-\bar{\Xsf}+1 \leq \a  \leq \bar{\Xsf}-1$ and  $-\bar{\Ysf}+1 \leq \b  \leq \bar{\Ysf}-1$ is  a $\bar{\Zsf} \times \bar{\Zsf}$ Toeplitz matrix given in \eqref{eq:toeplitz}.(c) such that $\miV_{\a\b\bar{\ZZ}}= \miV_{-\a-\b\bar{\ZZ}}^\dagger$, and defined from  vector $\miv=\left[v_{\a\b (-\bar{\Zsf}+1)}, \dots,v_{\a\b(\bar{\Zsf}-1)}\right]\in\C^{(2\bar{\Zsf}-1)}$ with $v_{\a\b\c}=v_{-\a-\b-\c}^\dagger$ for $-\bar{\Xsf}+1 \leq \a  \leq \bar{\Xsf}-1$,  $-\bar{\Ysf}+1 \leq \b  \leq \bar{\Ysf}-1$ and $-\bar{\Zsf}+1 \leq \c  \leq \bar{\Zsf}-1$.

{\footnotesize	
\begin{equation}
\begin{array}{cccc}
\!\!\!\!\!\!
 \miV_{\bar{\XX}\bar{\YY}\bar{\ZZ}}=\begin{bmatrix}\miV_{0\bar{\YY}\bar{\ZZ}}&\dots &\miV_{(\bar{\Xsf}-1)\bar{\YY}\bar{\ZZ}}\\\miV_{(-1)\bar{\YY}\bar{\ZZ}}&\dots&\miV_{(\bar{\Xsf}-2)\bar{\YY}\bar{\ZZ}}\\\vdots & &\vdots \\ \miV_{(-\bar{\Xsf}+1)\bar{\YY}\bar{\ZZ}}&\dots & \miV_{0\bar{\YY}\bar{\ZZ}}\\ \end{bmatrix}
& \miV_{\a\bar{\YY}\bar{\ZZ}}=\begin{bmatrix}\miV_{\a 0\bar{\ZZ}} &\dots & \miV_{\a(\bar{\Ysf}-1)\bar{\ZZ}}\\\miV_{\a(-1)\bar{\ZZ}} &\dots&\miV_{\a(\bar{\Ysf}-2)\bar{\ZZ}}\\\vdots & &\vdots \\ \miV_{\a(-\bar{\Ysf}+1)\bar{\ZZ}}&\dots &\miV_{\a 0\bar{\ZZ}}\\ \end{bmatrix}
& \miV_{\a\b\bar{\ZZ}}=\begin{bmatrix}v_{\a\b 0} & \dots & v_{\a\b (\bar{\Zsf}-1)}\\v_{\a\b(-1)} & \dots&v_{\a\b (\bar{\Zsf}-2)}\\\vdots &  &\vdots \\ v_{\a\b (-\bar{\Zsf}+1)}& \dots & v_{\a\b 0}\\ \end{bmatrix} 
\vspace{0.2cm}\\
\quad \quad (a) & \quad \quad (b) &\quad \quad  (c)
\end{array}
%	(a)\,\miV_{\bar{\XX}\bar{\YY}\bar{\ZZ}}=\begin{bmatrix}\miV_{0\bar{\YY}\bar{\ZZ}}&\dots &\miV_{(\bar{\Xsf}-1)\bar{\YY}\bar{\ZZ}}\\\miV_{(-1)\bar{\YY}\bar{\ZZ}}&\dots&\miV_{(\bar{\Xsf}-2)\bar{\YY}\bar{\ZZ}}\\\vdots & &\vdots \\ \miV_{(-\bar{\Xsf}+1)\bar{\YY}\bar{\ZZ}}&\dots & \miV_{0\bar{\YY}\bar{\ZZ}}\\ \end{bmatrix}\quad (b)\,\miV_{\a\bar{\YY}\bar{\ZZ}}=\begin{bmatrix}\miV_{\a 0\bar{\ZZ}} &\dots & \miV_{\a(\bar{\Ysf}-1)\bar{\ZZ}}\\\miV_{\a(-1)\bar{\ZZ}} &\dots&\miV_{\a(\bar{\Ysf}-2)\bar{\ZZ}}\\\vdots & &\vdots \\ \miV_{\a(-\bar{\Ysf}+1)\bar{\ZZ}}&\dots &\miV_{\a 0\bar{\ZZ}}\\ \end{bmatrix}\quad (c)\,\miV_{\a\b\bar{\ZZ}}=\begin{bmatrix}v_{\a\b 0} & \dots & v_{\a\b (\bar{\Zsf}-1)}\\v_{\a\b(-1)} & \dots&v_{\a\b (\bar{\Zsf}-2)}\\\vdots &  &\vdots \\ v_{\a\b (-\bar{\Zsf}+1)}& \dots & v_{\a\b 0}\\ \end{bmatrix}
	\label{eq:toeplitz}
	\end{equation}}

	For $d=2$, we define a $d$-LT matrix using \eqref{eq:toeplitz}.(b) and \eqref{eq:toeplitz}.(c) by fixing $\bar{\Xsf}=1$ so that $\a=0$ and $\miV=\miV_{0\bar{\YY}\bar{\ZZ}}$. Analogously, for $d=1$, we define a $d$-LT matrix from \eqref{eq:toeplitz}.(c) by  fixing $\bar{\Xsf}=1$ and $\bar{\Ysf}=1$ so that $\a=\b=0$  and $\miV=\miV_{00\bar{\ZZ}}$. 
	\QEDwhite
\end{definition}

We further introduce the  following definition: 
\begin{definition}
\label{definition2}
Given  a $\bar{N} \times \bar{N}$  $3$-LT   matrix $\miV$ 
with nesting ordering  $\bar{\Zsf}\to\bar{\Ysf}\to\bar{\Xsf}$, 
we say that $\miV$ has a canonical ordered structure (or equivalently 
$\miV$ is a canonical $d$-LT matrix) if  
there is a descending ordering in the component dimensions i.e.  $\bar{\Xsf}\leq\bar{\Ysf}\leq\bar{\Zsf}$.
\QEDwhite
\end{definition}

%\begin{remark}
%{\CYN Note that, in Definition  \ref{definition1}, a $3$-LT matrix  $\miV$ with nesting ordering  $\bar{\Zsf}\to\bar{\Ysf}\to\bar{\Xsf}$ is  constructed following a specific nesting of each block dimension, i.e., the inner blocks have dimensions $\bar{\Zsf}\times \bar{\Zsf}$, the intermediate blocks have  dimensions  $\bar{\Zsf}\bar{\Ysf}\times \bar{\Zsf}\bar{\Ysf}$, up to the outer block of dimension $\bar{N}\times \bar{N}$.}
%\end{remark}

\begin{remark} 
\label{prettyremark}
There is an interesting connection, that we will later explore, 
between the Kronecker ordering of a steering vector  and the 
nesting ordering of a $3$-LT  matrix. Given a set of $K$ steering vectors $\mir_{\bar{\NN}}(\mif_k)$ with $k \in [K]$ 
with Kronecker ordering  $\bar{\Zsf}\to\bar{\Ysf}\to\bar{\Xsf}$ (see Remark \ref{Kronorder}), the ${\bar{N}} \times {\bar{N}}$ matrix defined as $\miV=\sum_{k=1}^K p_k \mir_{\bar{\NN}}(\mif_k)\mir_{\bar{\NN}}(\mif_k)^\dagger$ is a   \ac{PSD} 
$3$-LT  matrix with nesting ordering $\bar{\Zsf}\to\bar{\Ysf}\to\bar{\Xsf}$. 
\end{remark}
 
Using the previous definitions, we now overview some previous results. %on the signal and frequency recovery problem. 
In \cite[Theorem~3]{Yang16} the authors show that given a $3$D-UD array  with $\NN=[\Xsf,\Ysf,\Zsf]$ elements, and its associated pair of sensing matrix $\miA_{\sf{U}}$ and virtual $3$D-UD array  of $\NNbar=[\bar{\Xsf},\bar{\Ysf},\bar{\Zsf}]$ elements, the recived signal $\mis^\star$ (the incoming signal $\miu^\star$) and its associated frequencies $\mif^\star_{1:K}$ can  be uniquely and perfectly reconstructed from the noiseless  measurement $\miy$ in \eqref{ec_parametrical_noiseless}, solving the following $\rank$ minimization  problem   \cite[Theorem~3, Remark~4]{Yang16}: 
\begin{opteq}
	\begin{aligned}
		\min_{r,  \mis \in \C^{\bar{N}}, \miS_{\bar{\XX}\bar{\YY}\bar{\ZZ}} }  \, \,&  \rank \left \{ \miS_{\bar{\XX}\bar{\YY}\bar{\ZZ}} \right\}  
		\quad\quad\quad{\rm s.t.} 	\quad 	&\begin{bmatrix}\miS_{\bar{\XX}\bar{\YY}\bar{\ZZ}}& \mis\\ \mis^\H & r\end{bmatrix}\succeq 0,		\quad & \miA_{\sf{U}}\mis= \miy,
	\end{aligned}
	\label{l0normopt_rank0}
\end{opteq}
\noindent 
if $K < \min\left\{\bar{\Xsf},\bar{\Ysf},\bar{\Zsf}\right\}$. In \eqref{l0normopt_rank0} $\miS_{\bar{\XX}\bar{\YY}\bar{\ZZ}}$ is a \ac{PSD}  $\dd$-LT matrix with arbitrary nesting order and $\miA_{\sf{U}} =  \big[\miI_N|{\bf 0}_{N\times (\bar{N}-N)} \big] {\bf{\Pi}_{\sf{U}}}$ with ${\bf{\Pi}_{\sf{U}}}$ a proper permutation matrix such that  only the first $\Xsf\leq\bar{\Xsf}$, $\Ysf\leq\bar{\Ysf}$, and $\Zsf\leq\bar{\Zsf}$ antennas %in directions $\mie_x$, $\mie_y$, and $\mie_z$ 
are sensed. 
%{\CYN The result relies on the fact that denoting by $\left(r^\circ,  \mis^\circ,\miS^\circ_{\bar{\XX}\bar{\YY}\bar{\ZZ}}\right)$ the optimal solution to \eqref{l0normopt_rank0}, a sufficient condition for the \ac{PSD} $d$-LT matrix $\miS^\circ_{\bar{\XX}\bar{\YY}\bar{\ZZ}}$ admitting a unique Vandermonde decomposition, is $r^\circ=\rank \left \{ \miS^\circ_{\bar{\XX}\bar{\YY}\bar{\ZZ}} \right\}< \min\left\{\bar{\Xsf},\bar{\Ysf},\bar{\Zsf}\right\}$ (see \cite[Theorem~1]{Yang16}). 
However, the results in  \cite[Theorem~3]{Yang16} only hold for 
$3$D-UD arrays
%cannot be applied to an arbitrary $\dd$D-\ac{AD} array 
and rely  
%Furthermore \cite[Theorem~3]{Yang16} uses,
%as approach 
(for the frequency recovery)  on the uniqueness of the Vandermonde decomposition of the resulting min-rank \ac{PSD} $\dd$-LT matrix, $\miS^\circ_{\bar{\XX}\bar{\YY}\bar{\ZZ}}$, which can only be guaranteed
if 
$\rank \left \{ \miS^\circ_{\bar{\XX}\bar{\YY}\bar{\ZZ}} \right\}< \min\left\{\bar{\Xsf},\bar{\Ysf},\bar{\Zsf}\right\}$ (see \cite[Theorem~1]{Yang16}).  
%This provides a very stringent trade-off between dimensions of the virtual 3D-UD array and recoverable-frequency region. 

%Furthermore \cite[Theorem~3]{Yang16} uses, as approach for the frequency recovery,  the Vandermonde decomposition of $\miS^\circ_{\bar{\XX}\bar{\YY}\bar{\ZZ}}$  whose uniqueness, by \cite[Theorem~1]{Yang16}, is guaranteed  if its $\rank$ is strictly smaller than the minimum of all the component dimensions. This provides a very stringent trade-off between dimensions of the virtual 3D-UD array and recoverable-frequency region. 

In the next section, we derive recovery results for 
arbitrary $\dd$D-\ac{AD} arrays and, exploiting the structure of our problem, we are able to reformulate \cite[Theorem~1]{Yang16} and \cite[Theorem~3]{Yang16} and  effectively enlarge the frequency recovery region, under much less restrictive conditions on the $\rank$ of a \ac{PSD}  $3$-LT matrix and on the dimensions of the virtual array. 

%Moreover, in \cite[Remark~4]{Yang16}, the authors conclude that for practical purposes, the dimensions of the virtual array can be taken equal to the dimensions of the actual array. Therefore, the choice of taking the virtual array equal to the active antennas in the array is mainly motivated in \cite{Yang16} by computational complexity reasons. Indeed, given that the complexity in \eqref{l0normopt_rank0} or in any convex relaxation of \eqref{l0normopt_rank0} scales with the dimension of the problem variables, i.e. $\bar{N}=\bar{\Xsf}\bar{\Ysf}\bar{\Zsf}$, a condition on the number of scatters $K$ being strictly smaller than $\min\left\{\bar{\Xsf},\bar{\Ysf},\bar{\Zsf}\right\}$ leads to a complexity scaling with at least $K^3$. In this paper, thanks to our less stringent conditions and hence to the reduction of complexity, we can exploit the dimensions of the virtual array as tunable parameters that enable us to trade computational complexity with frequency recovery performance.

\subsubsection{{ Main results on recovery conditions}} 
\label{main_results} 

%In this section, we provide the exact characterization of the frequency recovery region for arbitrary $\dd$D-\ac{AD} array, with $d\leq 3$, in the sense that we identify the necessary and sufficient conditions on $K$ to guarantee, from the measurement $\miy$ in \eqref{ec_parametrical_noiseless}, perfect and unique recovery of the $K$--scatter vector $\mis^\star$ and its associated frequencies $\mif^\star_{1:K}\in \T^{d\times K}$.

In this section, we provide the exact characterization of the frequency recovery region for arbitrary $\dd$D-\ac{AD} array, with $d\leq 3$ by identifying the conditions on the system parameters that guarantee, from the measurement $\miy$ in \eqref{ec_parametrical_noiseless}, perfect and unique recovery of the $K$--scatter vector $\mis^\star$ and its associated frequencies $\mif^\star_{1:K}\in \T^{d\times K}$.

Before stating our main results, 
%in terms of recovery condition, 
let us present the following lemma, which will be used to prove our frequency recovery conditions and which also extends the result provided in \cite[Theorem~1]{Yang16}, deriving a less restrictive sufficient condition for the uniqueness of the Vandermonde decomposition of a $d$-level Toeplitz matrix.

\begin{lemma}
	\label{PrettyLemma}

Let  $\miS \in \C^{\bar{N} \times \bar{N}}$ be an $\bar{N} \times \bar{N}$  \ac{PSD}  $\dd$-LT matrix with $\rank$ $r <\max\left\{\bar{\Xsf},\bar{\Ysf},\bar{\Zsf}\right\}$, and with ordered canonical structure (as per Definition \ref{definition2}).
Denoting $W=\max\left\{\bar{\Xsf},\bar{\Ysf},\bar{\Zsf}\right\}$, if the $\rank$ of the  $W \times W$ upper-left corner\footnote{The $W \times W$ upper-left corner of $\miS$, is the $W \times W$ sub block of $\miS$ obtained  considering the first $W$ rows and the first $W$ columns of $\miS$.} of $\miS$ also equals $r$ then $\miS$ can be uniquely decomposed, via Algorithm \ref{Alg:PrettyLemma}, as $\miS=\miR_{\bar{\NN}}\left(\mif_{1:r}\right)\miP \miR^\dagger_{\bar{\NN}}\left(\mif_{1:r}\right),
$ with $\mif_{1:r}=\left\{\mif_{1},\mif_{2}\dots,\mif_{{ r}}\right\}\in\T^{d\times r}$ being a unique  set of frequencies, $\miR_{\bar{\NN}}\left(\mif_{1:r}\right)=\left[ \mir_{\bar{\NN}}(\mif_1), \mir_{\bar{\NN}}(\mif_2), \dots,  \mir_{\bar{\NN}}(\mif_r)\right]\in\C^{\bar{N} \times r}$ being the steering vector matrix associated to the vector frequencies $\mif_{1: r}$, and $\miP=\diag\left([p_1,\dots,p_r]\right)$, $p_k\in \R^+$ with $k\in[r]$.
\end{lemma}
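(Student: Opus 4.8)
The plan is to prove Lemma~\ref{PrettyLemma} by induction on the number of levels $d$, at each step peeling off the \emph{outermost} Kronecker factor, which by the canonical ordering is the \emph{smallest} dimension, while keeping the largest dimension $W$ and the corner-rank hypothesis intact. For the base case $d=1$, $\miS$ is an $\bar{N}\times\bar{N}$ \ac{PSD} Hermitian Toeplitz matrix with $\bar{N}=W>r$, its $W\times W$ upper-left corner is $\miS$ itself, and the statement reduces to the classical Carath\'eodory--Fej\'er theorem: $\miS$ admits a unique decomposition $\miS=\sum_{k=1}^{r}p_k\,\mir_{\bar{\NN}}(\mif_k)\mir_{\bar{\NN}}(\mif_k)^\dagger$ with distinct nodes $\mif_k\in\T$ and $p_k>0$.

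For the inductive step, write $\miS$ in its $\bar{\Xsf}\times\bar{\Xsf}$ block-Toeplitz form \eqref{eq:toeplitz}.(a), with $\bar{\Ysf}\bar{\Zsf}\times\bar{\Ysf}\bar{\Zsf}$ blocks $\miS_{\a\bar{\YY}\bar{\ZZ}}$ and $\miS_{-\a\bar{\YY}\bar{\ZZ}}=\miS_{\a\bar{\YY}\bar{\ZZ}}^\dagger$. The first observation is that the $(0,0)$ block $\miS_{0\bar{\YY}\bar{\ZZ}}$ is a $(d-1)$-level \ac{PSD} Toeplitz matrix, still in canonical order $\bar{\Ysf}\le\bar{\Zsf}$, whose own deepest $\bar{\Zsf}\times\bar{\Zsf}$ upper-left corner coincides with the $W\times W$ corner of $\miS$; since that corner has rank $r$ and $\miS_{0\bar{\YY}\bar{\ZZ}}$ is a principal submatrix of $\miS$, it has rank exactly $r<\bar{\Zsf}=W$, so the induction hypothesis applies. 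This yields a \emph{unique} decomposition $\miS_{0\bar{\YY}\bar{\ZZ}}=\sum_{k=1}^{r}q_k\,\mib_k\mib_k^\dagger$ with $\mib_k=\mir_{\bar{\Ysf}}(f_k^y)\otimes\mir_{\bar{\Zsf}}(f_k^z)$ and, crucially, the $z$-components $f_k^z$ all \emph{distinct}, being the Carath\'eodory--Fej\'er nodes of the rank-$r$ corner $\miS_{00\bar{\ZZ}}$. Distinctness of the $f_k^z$ makes $\{\mir_{\bar{\Zsf}}(f_k^z)\}$, hence $\{\mib_k\}$, linearly independent; this linear independence is precisely what replaces the stronger hypothesis $r<\min\{\bar{\Xsf},\bar{\Ysf},\bar{\Zsf}\}$ of \cite[Theorem~1]{Yang16}, and it is what will make the following shift-invariance step work even when $\bar{\Xsf}$ and $\bar{\Ysf}$ may be $\le r$.

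It then remains to recover the outermost frequencies $f_k^x$ and the weights. Since $\miS\succeq0$ and its principal block $\miS_{0\bar{\YY}\bar{\ZZ}}$ already attains the full rank $r$, a rank-preservation (flat-extension) argument for \ac{PSD} block matrices gives $\operatorname{range}(\miS)\subseteq$ the first block-column space, whence $\miS_{\a\bar{\YY}\bar{\ZZ}}=\miB\,\miN_\a\,\miB^\dagger$ with $\miB=[\mib_1,\dots,\mib_r]$ of full column rank and $\miN_0=\diag(q_1,\dots,q_r)$. Invoking the flat-extension theorem to extend $\miS$ to a semi-infinite \ac{PSD} block-Toeplitz matrix of the same rank $r$ forces the block sequence to be a geometric progression $\miS_{\a\bar{\YY}\bar{\ZZ}}=\sum_{k=1}^{r}q_k\,\mu_k^{\,\a}\,\mib_k\mib_k^\dagger$ with distinct nonzero $\mu_k$; the Hermitian constraint $\miS_{-\a\bar{\YY}\bar{\ZZ}}=\miS_{\a\bar{\YY}\bar{\ZZ}}^\dagger$ together with the linear independence of $\{\mib_k\mib_k^\dagger\}$ then forces $|\mu_k|=1$, i.e.\ $\mu_k=e^{-\j 2\pi f_k^x}$ for unique $f_k^x\in\T$. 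Reassembling gives $\miS=\sum_{k=1}^{r}p_k\,\mir_{\bar{\NN}}(\mif_k)\mir_{\bar{\NN}}(\mif_k)^\dagger$ with $\mif_k=[f_k^x,f_k^y,f_k^z]^\top$, $p_k>0$, and exactly $r$ atoms, since the $\mir_{\bar{\NN}}(\mif_k)$ are linearly independent (distinct $z$-parts) and $\rank\miS=r$. Uniqueness follows by reversing the chain: any valid decomposition of $\miS$ must induce the unique Carath\'eodory--Fej\'er decomposition of the corner, hence the unique (by induction) $(d-1)$-level decomposition of $\miS_{0\bar{\YY}\bar{\ZZ}}$, after which the $f_k^x$ are pinned down by the full-column-rank linear system above. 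Algorithm~\ref{Alg:PrettyLemma} is then the constructive reading of these steps: Carath\'eodory--Fej\'er on the deepest corner, recursion on the sliced $(d-1)$-level block, and a shift-invariance plus linear solve for the remaining frequency and weights.

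I expect the main obstacle --- and the heart of the improvement over \cite[Theorem~1]{Yang16} --- to be the second step: verifying that the hypothesis ``the $W\times W$ corner has rank $r$'' is inherited by every sliced sub-block (it is, because that corner is also the deepest corner of each sub-block) and that it is equivalent to all largest-dimension nodes being distinct, which is exactly the linear independence needed to run the lift when the remaining dimensions are smaller than $r$. The only technically heavy ingredient is the flat-extension/geometric-progression statement for \ac{PSD} block-Toeplitz sequences and the ensuing conclusion $|\mu_k|=1$; everything else is bookkeeping with the $1/\sqrt{\bar{N}}$ Kronecker normalizations of the steering vectors.
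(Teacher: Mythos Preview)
Your inductive skeleton is exactly the paper's: peel off the outermost (smallest) Kronecker factor, observe that the canonical ordering ensures the $(0,0)$ block $\miS_{0\bar{\YY}\bar{\ZZ}}$ is again canonical with the \emph{same} $W\times W$ corner, and invoke the base case / induction. Your uniqueness argument (any decomposition restricts to the unique one on $\miS_{0\bar{\YY}\bar{\ZZ}}$, after which linear independence of $\{\mib_k\mib_k^\dagger\}$ pins down the $\mu_k$) is correct and in fact cleaner than the paper's Khatri--Rao rank bound via \cite[Lemma~1]{Jiang01}.

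The gap is in the existence step. You correctly obtain $\miS_{\a\bar{\YY}\bar{\ZZ}}=\miB\,\miN_\a\,\miB^\dagger$, but the claim that ``flat extension to a semi-infinite \ac{PSD} block-Toeplitz matrix forces $\miN_\a=\diag(q_k\mu_k^{\a})$'' is not justified, and in fact is false if you treat the blocks as opaque. Block-level flat extension (equivalently \cite[Lemma~2]{Yang16}/Gurvits) gives you $\miS_{\a\bar{\YY}\bar{\ZZ}}=\sum_j e^{\j2\pi\a f^x_j}\mig_j\mig_j^\dagger$ for \emph{some} vectors $\mig_j$ spanning the same column space as $\miB$, i.e.\ $\miN_\a=\miQ^{1/2}\miO\,\miX^\a\,\miO^\dagger\miQ^{1/2}$ for a unitary $\miO$ and diagonal $\miX$; there is no reason for $\miO\miX\miO^\dagger$ to be diagonal. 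Concretely, take $d=2$, $r=2$, $\mib_1=\mir_{\bar{\Zsf}}(f^z_1)$, $\mib_2=\mir_{\bar{\Zsf}}(f^z_2)$ with $f^z_1\neq f^z_2$, $\miQ=\miI_2$, and $\miN_1=\bigl[\begin{smallmatrix}0&1\\1&0\end{smallmatrix}\bigr]$: the resulting $2\times2$ block matrix is \ac{PSD} block-Toeplitz, yet $\miN_1$ is not diagonal.

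What rules out this example is precisely the piece of structure you never use: each off-diagonal block $\miS_{\a\bar{\YY}\bar{\ZZ}}$ is itself a $(d-1)$-level \emph{Toeplitz} matrix (Definition~\ref{definition1}). In the example above $\miS_{01\bar{\ZZ}}=\mib_1\mib_2^\dagger+\mib_2\mib_1^\dagger$ fails to be Toeplitz because $f^z_1\neq f^z_2$. The paper closes the loop exactly here: writing $\miS_{\a\bar{\YY}\bar{\ZZ}}=\miR_{\bar{\Ysf}\bar{\Zsf}}(\mif^{yz}_{1:r})\,\miP^{1/2}\miO\miX^\a\miO^\dagger\miP^{1/2}\,\miR^\dagger_{\bar{\Ysf}\bar{\Zsf}}(\mif^{yz}_{1:r})$ and invoking \cite[Lemma~3]{Yang16} (a Vandermonde-structured $\miR$ times a matrix times $\miR^\dagger$ is multi-level Toeplitz only if the middle matrix is diagonal). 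So your ``technically heavy ingredient'' is misidentified: it is not a block-level flat-extension statement, but the interaction between the Vandermonde structure of $\miB$ (supplied by induction) and the inner Toeplitz structure of the off-diagonal blocks.
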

\begin{proof}  
	The proof is given in Appendix \ref{appPrettyLemma}, where in a constructive way we prove existence and consequently uniqueness of the Vandermonde decomposition of $\miS$. Following the proof we are also able to develop an algorithm (Alg. \ref{Alg:PrettyLemma}) that can be applied now with less stringent conditions on the $\rank$ than in \cite{Yang16}.
\end{proof} 

\begin{algorithm}[t]
\caption{Algorithm for Lemma \ref{PrettyLemma}}
{\scriptsize
 \begin{algorithmic}[1]\label{Alg:PrettyLemma}
 \STATE \textbf{Input:}  $\miS=\miS_{\bar{\XX}\bar{\YY}\bar{\ZZ}}$ with $\rank\{\miS_{\bar{\XX}\bar{\YY}\bar{\ZZ}}\}=r$.\\
 \textbf{Step~1: Decomposing in the $\bar{\Xsf}$ dimension}
      \STATE Obtain the Cholesky decomposition of $\miS_{\bar{\XX}\bar{\YY}\bar{\ZZ}}=\miC_{\bar{\Xsf}\bar{\Ysf}\bar{\Zsf}}\miC_{\bar{\Xsf}\bar{\Ysf}\bar{\Zsf}}^\dagger$.
      \STATE Split $\miC_{\bar{\Xsf}\bar{\Ysf}\bar{\Zsf}}$ matrix row-wise in $\bar{\Xsf}$ blocks as $\miC_{\bar{\Xsf}\bar{\Ysf}\bar{\Zsf}}=[\miC_{0\bar{\Ysf}\bar{\Zsf}}^\dagger,\miC_{1\bar{\Ysf}\bar{\Zsf}}^\dagger,\dots,\miC_{(\Xsf-1)\bar{\Ysf}\bar{\Zsf}}^\dagger]^\dagger$.\\
      \IF{$\bar{\Xsf}>1$}
      \STATE Find the $\miU_{\bar{\Xsf}}$ unitary matrix such that $[\miC_{0\bar{\Ysf}\bar{\Zsf}}^\dagger,\miC_{1\bar{\Ysf}\bar{\Zsf}}^\dagger,\dots,\miC_{(\Xsf-2)\bar{\Ysf}\bar{\Zsf}}^\dagger]^\dagger\miU_{\bar{\Xsf}}=[\miC_{1\bar{\Ysf}\bar{\Zsf}}^\dagger,\miC_{2\bar{\Ysf}\bar{\Zsf}}^\dagger,\dots,\miC_{(\Xsf-1)\bar{\Ysf}\bar{\Zsf}}^\dagger]^\dagger$.
      \STATE Obtain the eigen-decomposition $\miU_{\bar{\Xsf}}=\miK_{\bar{\Xsf}}\miX\miK_{\bar{\Xsf}}^\dagger$ and we have that $\miX=\diag(e^{\j2\pi f^x_1},\dots,e^{\j2\pi f^x_r})$
      \ELSE
      \STATE Set $f^x_k=0$ $\forall k=[1,\dots,r]$
      \ENDIF\\
      \textbf{Step~2: Decomposing in the $\bar{\Ysf}$ dimension}
     
      \STATE Split $\miC_{0\bar{\Ysf}\bar{\Zsf}}$ matrix row-wise in $\bar{\Ysf}$ blocks as $\miC_{0\bar{\Ysf}\bar{\Zsf}}=[\miC_{00\bar{\Zsf}}^\dagger,\miC_{01\bar{\Zsf}}^\dagger,\dots,\miC_{0(\Ysf-1)\bar{\Zsf}}^\dagger]^\dagger$. 
      
      \IF{$\bar{\Ysf}>1$}
      \STATE Find the $\miU_{\bar{\Ysf}}$ unitary matrix such that $[\miC_{00\bar{\Zsf}}^\dagger,\miC_{01\bar{\Zsf}}^\dagger,\dots,\miC_{0(\Ysf-2)\bar{\Zsf}}^\dagger]^\dagger\miU_{\bar{\Ysf}}=[\miC_{01\bar{\Zsf}}^\dagger,\miC_{01\bar{\Zsf}}^\dagger,\dots,\miC_{0(\Ysf-1)\bar{\Zsf}}^\dagger]^\dagger$.
      \STATE Obtain the eigen-decomposition $\miU_{\bar{\Ysf}}=\miK_{\bar{\Ysf}}\miY\miK_{\bar{\Ysf}}^\dagger$ and we have that $\miY=\diag(e^{\j2\pi f^y_1},\dots,e^{\j2\pi f^y_r})$\\
      \ELSE
      \STATE Set $f^y_k=0$ $\forall k=[1,\dots,r]$
      \ENDIF\\
      \textbf{Step~3: Decomposing in the $\bar{\Zsf}$ dimension}
            \STATE Split $\miC_{00\bar{\Zsf}}$ matrix row-wise  and find the $\miU_{\bar{\Zsf}}$ unitary matrix such that $\miC_{00[0:\bar{\Zsf}-2]}\miU_{\bar{\Zsf}}=\miC_{00[1:\bar{\Zsf}-1]}$.
      \STATE Obtain the eigen-decomposition $\miU_{\bar{\Zsf}}=\miK_{\bar{\Zsf}}\miZ\miK_{\bar{\Zsf}}^\dagger$ and we have that $\miZ=\diag(e^{\j2\pi f^z_1},\dots,e^{\j2\pi f^z_r})$\\
      
      \textbf{Step~4: Obtain the set of frequencies $\mif_{1:r}$}
      \STATE Do the frequency pairing according to \cite{Yang16}.\\
      \STATE \textbf{Output:} The recovered frequencies $\mif_{1:r}=[f^{x\top}_{1:r} f^{y\top}_{1:r} f^{z^\top}_{1:r}]^\top$, the full $\rank$ matrix $\miR_{\bar{\NN}}\left(\mif_{1:r}\right)=\left[ \mir_{\bar{\NN}}(\mif_1), \dots,  \mir_{\bar{\NN}}(\mif_r)\right]$ and $\miP=\miR_{\bar{\NN}}^\g\left(\mif_{1:r}\right)\miS_{\bar{\XX}\bar{\YY}\bar{\ZZ}}\miR_{\bar{\NN}}^{\dagger\g}\left(\mif_{1:r}\right)$.
  \end{algorithmic}}
  \label{Alg1}
\end{algorithm}

Note that while Lemma \ref{PrettyLemma} is stated for $d \in \{1, 2,3\}$, it is worth to underline that the result holds even for $d>3$. 
\begin{remark}
Lemma \ref{PrettyLemma} admits a more general formulation where denoting by ${\miS} \in \C^{\bar{N} \times \bar{N}}$ an $\bar{N} \times \bar{N}$  \ac{PSD}  $\dd$-LT matrix with rank $r$, and denoting by $W$ the dimension of the largest $1$-LT upper-left corner of $\miS$. If the  $\rank$ of such block also equals $r$ and $r <W$, then  $\miS$ admits a unique Vandermonde decomposition.
\end{remark}

\begin{remark}
Lemma \ref{PrettyLemma} shows that while the sufficient condition for the uniqueness of the Vandermonde decomposition ($r<\min\left\{\bar{\Xsf},\bar{\Ysf},\bar{\Zsf}\right\}$) in \cite[Theorem~1]{Yang16} is tight for $d=1$, this would not hold any more for a \ac{PSD}  $\dd$-LT matrix $\miS \in \C^{\bar{N} \times \bar{N}}$ with $d\geq 2$. In fact if $\miS \in \C^{\bar{N} \times \bar{N}}$ follows Lemma \ref{PrettyLemma} conditions, then a sufficient condition  for $\miS$ to admit a unique Vandermonde decomposition is the less restrictive condition $r< \max\left\{\bar{\Xsf},\bar{\Ysf},\bar{\Zsf}\right\}$, compared to $r< \min\left\{\bar{\Xsf},\bar{\Ysf},\bar{\Zsf}\right\}$ \cite[Theorem~1]{Yang16}. This interesting result essentially stems from the fact that the proposed decomposition approach relies on the  decomposing the upper-left block matrix in $\miS$, cf. lines~17 and 18 of Algorithm~\ref{Alg:PrettyLemma}. Therefore, 
the canonical ordering required in Lemma \ref{PrettyLemma} poses the base to allow the less restrictive condition ($r< \max\left\{\bar{\Xsf},\bar{\Ysf},\bar{\Zsf}\right\}$) on the uniqueness of the decomposition and hence enlarges the set of \ac{PSD}  $\dd$-LT matrix for which we can guarantee unique decomposition. However, if the ordering does not follow Definition \ref{definition2}, as required in 
Lemma \ref{PrettyLemma}, one has to enforce
the condition stated in \cite[Theorem~1]{Yang16}, i.e. $r< \min\left\{\bar{\Xsf},\bar{\Ysf},\bar{\Zsf}\right\}$, to ensure the decomposability of the upper-left block matrix.  To the best of the authors' knowledge, the result stated in Lemma \ref{PrettyLemma} is the tightest  condition for the uniqueness decomposition of \ac{PSD}  $\dd$-LT matrices in the literature.
\end{remark}

%\begin{remark}
%\label{remarkrotta}
%{\CYN %As already mentioned, Lemma \ref{PrettyLemma} enlarges the set of matrices that admit unique decomposition.  This is of significant relevance since Vandermonde decomposition plays a central role in the frequency recovery process when applied to the \ac{PSD}  $\dd$-LT matrix solution of the optimization problems like \eqref{l0normopt_rank0} and/or their convex relaxation counterpart. 
%In all the applications of interest we can always reorder the observable $\miy$ and choosing the virtual array to enforce the canonical ordering of Definition \ref{definition2}  on the set of \ac{PSD} matrices, $\miS_{\bar{\XX}\bar{\YY}\bar{\ZZ}}$ in \eqref{l0normopt_rank0}, leading, for most of the interesting practical applications, to a solution which satisfies the  condition $r< \max\left\{\bar{\Xsf},\bar{\Ysf},\bar{\Zsf}\right\}$ stated in Lemma \ref{PrettyLemma}.}
%\end{remark}

%\begin{remark}
%{\CYN We note that while condition $r <\max\left\{\bar{\Xsf},\bar{\Ysf},\bar{\Zsf}\right\}$ is in general only sufficient for  decomposition of a PSD $d$-LT matrix $\miS \in \C^{\bar{N} \times \bar{N}}$ with $d>1$, it is the necessary and sufficient condition for the decomposition procedure described in Algorithm~\ref{Alg:PrettyLemma}, i.e., Algorithm~\ref{Alg:PrettyLemma} is able to decompose uniquely matrix $\miS$ if and only if these conditions are satisfied.}
%\end{remark}

In addition to Lemma \ref{PrettyLemma}, in order to state our recovery conditions, we also need to provide some few useful definitions and assumptions.

Specifically, in the following, we always consider as Kronecker ordering for the virtual $3$D-UD streering vectors $\bar{\Zsf}\to\bar{\Ysf}\to\bar{\Xsf}$ and we always assume without loss of generality $\bar{\Xsf}\leq \bar{\Ysf} \leq \bar{\Zsf}$. Note that in case this assumption would not initially hold, we can always operate a proper rotation of the Cartesian system (which corresponds to a permutation of the elements of the measurement vector $\miy$, or a different reading of the antenna elements of the $\dd$-AD). This will ensure that we can always enforce the canonical ordering of Definition \ref{definition2}  on the set of \ac{PSD} matrices, in the subsequent proposed $\min$-$\rank$ optimization.

Furthermore, being $\mathcal{A}$ the sensing set defined in \eqref{eq:sensing_set}, let us introduce the following definition:
\begin{definition}
\label{setinjective}
$ \mathcal{A}^{(K)} \subset \mathcal{A}$ denotes  the subset of all binary $N \times \bar{N}$ sensing matrices $\mathcal{A}$ for which  it exists a subset of rows of $\miA\in\mathcal{A}$, say $ {\mathcal{L}\subseteq[N]}$ with $ 2K <  |{\mathcal{L}}| \leq N$ such that: 
\begin{itemize}
\item[\bf 1.C] under Assumption \ref{A1}, 
$\miA^{{(\mathcal{L})}} \miR_{\bar{\NN}}(\mif_{1:2K})$ is injective as map  from $\C^{2K}\to \C^{|{\mathcal{L}}|}$;
\item[\bf 2.C]  denoted by $\mif^\star_{1:K}$ a set of $K$ frequencies satisfying Assumption \ref{A1} and by  $\mif^\diamond_{1:r^\diamond}$ an arbitrary  set of ${ r^\diamond}  \leq K$  frequencies such that for any  $k \in [K]$ and $j\in [{ r^\diamond}]$,  $ \mif^\diamond_{j}\neq \mif^\star_{k}$, $\rank\left\{\miA^{{(\mathcal{L})}} \miR_{\bar{\NN}}({\mif^{\star\diamond}_{1:K+r^\diamond}} ) \right\}= K +\rank\left\{\miA^{{(\mathcal{L})}} \miR_{\bar{\NN}}({\mif^\diamond_{1:{ r^\diamond}})}\right\}$,
with ${\mif^{\star\diamond}_{1:K+ r^\diamond}=[\mif^\star_{1:K} \,\mif^\diamond_{1:r^\diamond}]}$.{\QEDwhite}
\end{itemize}
\end{definition}

%{\CYN The set of sensing matrices under consideration $\mathcal{A}^{(K)}$ comply with two properties regarding the injectivity of the sensed steering matrix (1.C) and the $\rank$ properties of a composite sensed steering matrix (2.C), both needed in order to ensure uniqueness of the signal recovery (see Lemma \ref{lemmatrivia}).}  %{\CYN Note that condition 1.C does not imply 2.C, in fact 2.C is needed to ensure certain $\rank$ properties of a composite sensed steering matrix where part of the frequency set may not be necessarily drawn following Assumption \ref{A1}.}
In the following we provide, through Lemma~\ref{Lemma:theoremApiProbalistic} and 
Lemma~\ref{Lemma:Corollary} (see Appendices \ref{proofcorollary} and \ref{appAIW0probabilistic}), 
a class of sensing matrices of practical interest in $\mathcal{A}^{(K)}$. Specifically, in Lemma~\ref{Lemma:theoremApiProbalistic} (see Appendix \ref{appAIW0probabilistic}), we show that, provided that the number of  elements in the $\dd$D-\ac{AD} array is sufficiently large, $\mathcal{A}^{(K)}\equiv \mathcal{A}$ with high probability. In Lemma~\ref{Lemma:Corollary} we identify a second class of matrices,  say $\mathcal{A}^{(K)}_{\Csf}\subset\mathcal{A}$,
for  which we prove  that $\mathcal{A}^{(K)}_{\Csf} \subset\mathcal{A}^{(K)}$.  Such class is identified by the so-called 
\well~sensing matrices as in the next two definitions:

%{\CYN In Lemma~\ref{Lemma:theoremApiProbalistic} (see Appendix \ref{appAIW0probabilistic}), we show that, provided that the number of  elements in the $\dd$D-\ac{AD} array is sufficiently large, $\mathcal{A}^{(K)}\equiv \mathcal{A}$ with high probability. Furthermore, without constraining the number of elements in the $\dd$D-\ac{AD} array,  in Corollary \ref{CorollaryTh1} we identify a class of sensing matrices of practical interest, say $\mathcal{A}^{(K)}_{\Csf}\subset\mathcal{A}$,  for  which we prove in Appendix \ref{proofcorollary} that $\mathcal{A}^{(K)}_{\Csf} \subset\mathcal{A}^{(K)}$. Such class is identified by the so-called \well~sensing matrices as in the next two definitions:}

\begin{definition}
\label{def:wellstruct}
A permutation matrix  ${\bf \Pi}$ belongs to the \well~permutation class $ \mathcal{P}_{\Csf}$,  and note it as ${\bf \Pi}_\Csf$ if:
\begin{itemize}
\item[\emph{i)}]  denoting by $\mathcal{I}$ the set of indices for the columns of  ${\bf \Pi}_{\Csf} $  that have a $1$ among its first $N$ rows\footnote{ Note that $\mathcal{I}$ identifies the components of the steering vector $\mir_{\bar{\NN}}(\mif)$ that are going to be sensed.}, i.e., $j \in  \mathcal{I}$ if $\exists\, n \in [N]$ such that $ {\pi}_{nj}=1$,
\item[\emph{ii)}]  denoting by $\mathcal{I}_{\Csf} \subseteq
\mathcal{I}$ the  largest subset of ordered indices in $\mathcal{I}$  not necessarily consecutive, such that for any arbitrary vector $\mif=[f^x, f^y, f^z ]^\top \in \T^{d}$, the $N_{\Csf}$ dimensional vector containing the elements of $\mir_{\bar{\NN}}(\mif)$ whose indices belong to ${ \mathcal{I}_{\Csf}}$ can be written as
\begin{equation}
\mir^{{(\mathcal{I}_{\Csf})}}_{\bar{\NN}}(\mif)= 
e^{\j2\pi {(\m_x f^{x}+ \m_y f^{y}+ \m_z f^{z})}} 
\mir_{{\NN}_{\Csf}}(\mif^{\Csf})
\label{eq_ws}
\end{equation}
where $\mir_{{\NN}_{\Csf}}(\mif^{\Csf})$ corresponds to a $\dd$D-UD with structure given by $\NN_\Csf=[\Xsf_\Csf,\Ysf_\Csf,\Zsf_\Csf]$,
and $\mif^{\Csf}=\left[\ell_x f^x,\ell_y f^y,\ell_z f^z\right]^\top
$ is  a $d$-dimensional vector with $\ell_\alpha$ a proper positive integer, $\alpha \in (x,y,z)$,  ${\m_x} \in [{\bar{\Xsf}}]$,
${\m_y} \in [{\bar{\Ysf}}]$, ${\m_z} \in [{\bar{\Zsf}}]$
and finally $S_{\Csf}=\Xsf_\Csf+\Ysf_\Csf+\Zsf_\Csf$, $N_{\Csf}=|{ \mathcal{I}_{\Csf}}|=\Xsf_\Csf\Ysf_\Csf\Zsf_\Csf$, 
\end{itemize}
we have that $S_\Csf \geq (d+1)$ or $N_{\Csf}\geq 2$.  
\QEDwhite
\end{definition}
From Definition \ref{def:wellstruct}, it follows that:  
\begin{definition}
\label{def:wellstruct1}
The set of \well~sensing matrices is defined as $\mathcal{A}_\Csf= \big\{\miA\in\{0,1\}^{N\times \bar{N}}: \, \big[\miI_N|{\bf 0}_{N\times (\bar{N}-N)}\big]  {\bf \Pi}_\Csf,\, {\bf \Pi}_\Csf \in\mathcal{P}_\Csf\big\},$
with $\mathcal{P}_\Csf$ as in Definition
\ref{def:wellstruct}.
\end{definition}

\begin{example}
Let's assume the planar array from \FIG{fig:array}.a, whose encompassing virtual array is given by the $2$D-UD structure characterized by $\bar{\Xsf}=1$ $\bar{\Ysf}=3$ and $\bar{\Zsf}=4$, and where the three antennas in red are removed by means of the sensing matrix. In this configuration we have that $\bar{N}=12$, $N=9$ and we can find $\mathcal{I}=\{2:5,8:12\}$, $\mathcal{I}_\Csf=\{2:4,10:12\}$, $\NN_{\Csf}=[1,2,3]$, $\mif^{\Csf}=\left[f^x,2 f^y,f^z\right]^\top$ and $\m_x=\m_y=0$, $\m_z=1$,  such that $S_\Csf=6$ and $N_\Csf=6$. Then we say that the array in \FIG{fig:array}.a, is a \well~array.
\end{example}

\begin{example}
\label{ex2}
A special subset  of $\mathcal{P}_{\Csf}$  is the set of all permutations matrices, say ${\bf \Pi}_{\sf U}$, whose associated sensing matrix $\miA_{\sf{U}} =  \big[\miI_N|{\bf 0}_{N\times (\bar{N}-N)} \big] {\bf{\Pi}_{\sf{U}}}$ is the one where only the first $\Xsf\leq\bar{\Xsf}$, $\Ysf\leq\bar{\Ysf}$, and $\Zsf\leq\bar{\Zsf}$ antennas are sensed.
%Note that this sensing scenario is also relevant in those cases where departing from an already uniform array deployment with structure given by $\NN=[\Xsf,\Ysf,\Zsf]$ we want to virtually enlarge it with an encompassing $\dd$D-UD given by $\bar{\NN}=[\bar{\Xsf},\bar{\Ysf},\bar{\Zsf]}$. For this subset of $\mathcal{P}_{\Csf}$, we can find $\mathcal{I}=\mathcal{I}_\Csf=\{1:\Zsf,\bar{\Zsf}+1:\bar{\Zsf}+\Zsf,\dots,(\Xsf-1)\bar{\Ysf}\bar{\Zsf}+(\Ysf-1)\bar{\Zsf}+1:(\Xsf-1)\bar{\Ysf}\bar{\Zsf}+(\Ysf-1)\bar{\Zsf}+\Zsf\}$, $\NN_{\Csf}=[\Xsf,\Ysf,\Zsf]$, $\mif^{\Csf}=\left[f^x, f^y,f^z\right]^\top$ and $\m_x=\m_y=\m_z=0$, such that $S_\Csf=\Xsf+\Ysf+\Zsf$ and $N_\Csf=N$. If the original array deployment characterized by $\NN=[\Xsf,\Ysf,\Zsf]$ has already a non trivial structure, i.e. $N\geq 2$, then we also say that it is a \well~array.
\end{example}

We are now ready to state our main results:

\begin{theorem}
	\label{theoremApi}
Consider a linear measurement model as in \eqref{ec_parametrical_noiseless}.
Under Assumption \ref{A1}, provided that $ K<  \max\{\bar{\Xsf}, \bar{\Ysf}, \bar{\Zsf}\}$,  if  $\miA\in \mathcal{A}^{(K)}$ as in Definition \ref{setinjective}, the $K$--scatter vector $\mis^\star$ and its associated 
 frequencies  $\mif^\star_{1:K}\in \T^{d\times K}$  
can be uniquely identified by solving the optimization problem:
	\begin{opteq}
	\begin{aligned}
		\min_{r,  \mis \in \C^{\bar{N}}, \miS_{\bar{\XX}\bar{\YY}\bar{\ZZ}} \in {\mathcal{T}_{\bar{\Xsf}\bar{\Ysf}\bar{\Zsf}}} }  \, \,&  \rank \left \{ \miS_{\bar{\XX}\bar{\YY}\bar{\ZZ}} \right\}\quad \quad   
		{\rm s.t.} 	\quad 	&\begin{bmatrix}\miS_{\bar{\XX}\bar{\YY}\bar{\ZZ}}& \mis\\ \mis^\H & r\end{bmatrix}\succeq 0,\quad & \miA \mis= \miy.
	\end{aligned}
	\label{l0normopt_rank}
	\end{opteq}
where  $\mathcal{T}_{\bar{\Xsf}\bar{\Ysf}\bar{\Zsf}} \subseteq \C^{\bar{N} \times \bar{N}} $ denotes the set of all canonical \ac{PSD}  $d$-LT matrices  (see Definition \ref{definition2}). The frequencies  $\mif^\star_{1:K}\in \T^{d\times K}$ can be  obtained by Vandermonde Decomposition, via Algorithm \ref{Alg:PrettyLemma}, of  the $\dd$-LT matrix $\miS_{\bar{\XX}\bar{\YY}\bar{\ZZ}}$ solution of \eqref{l0normopt_rank}.
\end{theorem}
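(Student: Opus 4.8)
The plan is to follow the standard two-part template for exact recovery by rank minimization: first, exhibit a rank-$K$ feasible point built from the ground truth, so that the optimal value of \eqref{l0normopt_rank} is at most $K$; second, show that every feasible triple of rank at most $K$ must have $\mis$-component equal to $\mis^\star$ and must carry exactly the true frequency set, so that the optimal value is exactly $K$ and the recovery is unique. The gridless extraction of $\mif^\star_{1:K}$ is then immediate by applying Lemma~\ref{PrettyLemma} and Algorithm~\ref{Alg:PrettyLemma} to any solution matrix.

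\emph{Upper bound.} Set $\miS^\star=\miR_{\bar{\NN}}\!\left(\mif^\star_{1:K}\right)\miP^\star\miR^\dagger_{\bar{\NN}}\!\left(\mif^\star_{1:K}\right)$ with $\miP^\star=\diag\!\left(|u^\star_1|^2,\dots,|u^\star_K|^2\right)$ and $r^\star=\mis^{\star\dagger}\left(\miS^\star\right)^{\g}\mis^\star$. By Remark~\ref{prettyremark}, $\miS^\star$ is a \ac{PSD} $d$-LT matrix with nesting ordering $\bar{\Zsf}\to\bar{\Ysf}\to\bar{\Xsf}$, and it is canonical (Definition~\ref{definition2}) since we work, without loss of generality, under $\bar{\Xsf}\leq\bar{\Ysf}\leq\bar{\Zsf}$; hence $\miS^\star\in\mathcal{T}_{\bar{\Xsf}\bar{\Ysf}\bar{\Zsf}}$. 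Under Assumption~\ref{A1} the $K$ frequency vectors (and their $z$-components) are pairwise distinct almost surely, so $K<\bar{\Zsf}$ makes $\miR_{\bar{\NN}}\!\left(\mif^\star_{1:K}\right)$ full column rank and $\rank\{\miS^\star\}=K$; since every $u^\star_k\neq 0$, $\mis^\star\in\mathrm{range}\{\miS^\star\}$, so the Schur-complement constraint of \eqref{l0normopt_rank} holds, while $\miA\mis^\star=\miy$ holds by \eqref{ec_parametrical_noiseless}. Thus $(\miS^\star,\mis^\star,r^\star)$ is feasible and the optimal value is $\leq K$.

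\emph{Tightness and uniqueness.} Let $(\miS^\circ,\mis^\circ,r^\circ)$ be any minimizer, so $r^\circ:=\rank\{\miS^\circ\}\leq K<\bar{\Zsf}=\max\{\bar{\Xsf},\bar{\Ysf},\bar{\Zsf}\}=:W$ and $\miS^\circ$ is a canonical \ac{PSD} $d$-LT matrix. The first task is to check the hypothesis of Lemma~\ref{PrettyLemma} for $\miS^\circ$, i.e.\ that the rank of its $W\times W$ upper-left corner equals $r^\circ$; granting this, Lemma~\ref{PrettyLemma} produces a \emph{unique} Vandermonde decomposition $\miS^\circ=\miR_{\bar{\NN}}\!\left(\mif^\circ_{1:r^\circ}\right)\miP^\circ\miR^\dagger_{\bar{\NN}}\!\left(\mif^\circ_{1:r^\circ}\right)$ with $\miP^\circ$ positive diagonal and $\mif^\circ_{1:r^\circ}$ distinct, computed by Algorithm~\ref{Alg:PrettyLemma}, and in particular $\mathrm{range}\{\miS^\circ\}=\mathrm{span}\{\mir_{\bar{\NN}}(\mif^\circ_j)\}_{j\in[r^\circ]}$, so $\mis^\circ=\miR_{\bar{\NN}}\!\left(\mif^\circ_{1:r^\circ}\right)\mib^\circ$ for some $\mib^\circ\in\C^{r^\circ}$. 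Now invoke $\miA\in\mathcal{A}^{(K)}$ and take the row set $\mathcal{L}$ of Definition~\ref{setinjective}: from $\miA\mis^\circ=\miy=\miA\mis^\star$ we get $\miA^{(\mathcal{L})}\big(\miR_{\bar{\NN}}(\mif^\circ_{1:r^\circ})\mib^\circ-\miR_{\bar{\NN}}(\mif^\star_{1:K})\miu^\star\big)=\mathbf{0}$, and expanding $\mis^\circ-\mis^\star$ against the combined atom set given by $\mif^\star_{1:K}$ together with the $r^\diamond\leq r^\circ\leq K$ atoms of $\mif^\circ_{1:r^\circ}$ disjoint from $\mif^\star_{1:K}$, the rank-additivity property~\textbf{2.C} of Definition~\ref{setinjective} forces the coefficient of every $\mir_{\bar{\NN}}(\mif^\star_k)$ not shared with $\mif^\circ_{1:r^\circ}$ to vanish. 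Since $\mis^\star$ is a genuine $K$-scatter vector ($u^\star_k\neq 0$ for all $k$), this is impossible unless $\{\mif^\star_{1:K}\}\subseteq\{\mif^\circ_{1:r^\circ}\}$; hence $r^\circ=K$, the two frequency sets coincide, and using the injectivity property~\textbf{1.C} (which makes $\miA^{(\mathcal{L})}\miR_{\bar{\NN}}(\mif^\star_{1:K})$ full column rank) we conclude $\mib^\circ=\miu^\star$ up to reindexing, i.e.\ $\mis^\circ=\mis^\star$. Therefore the optimal value equals $K$, the $\mis$-part of the solution is the unique $K$-scatter vector $\mis^\star$, and Algorithm~\ref{Alg:PrettyLemma} applied to any solution matrix returns $\mif^\star_{1:K}$; this also reestablishes the equivalence with the $\ell_0$-\ac{AN} problem \eqref{l0normopt} and its uniqueness statement.

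\emph{Main obstacle.} The crux is the corner-rank check for an \emph{arbitrary} minimizer $\miS^\circ$ — this is precisely why the feasible set is restricted to \emph{canonical} \ac{PSD} $d$-LT matrices. The pathology to exclude is genuine for $d\geq 2$: if the $W\times W$ corner of $\miS^\circ$ has rank $r'<r^\circ$, then Carath\'eodory--Fej\'er applied to that corner together with the block-Toeplitz structure of $\miS^\circ$ confines $\mathrm{range}\{\miS^\circ\}$ to $\C^{\bar{\Xsf}\bar{\Ysf}}\otimes V$ for a one-dimensional Vandermonde subspace $V\subsetneq\C^{\bar{\Zsf}}$ of dimension $r'<r^\circ\leq K$ (for instance $\miS^\circ$ of the form $\miM\otimes\mir_{\bar{\Zsf}}(g)\mir_{\bar{\Zsf}}(g)^\dagger$). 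One then argues that, under Assumption~\ref{A1}, an admissible $\mis^\star$ unfolds along the $z$-direction into a matrix of row rank $K>r'$ almost surely, hence cannot lie in such a subspace, so a minimizer cannot exhibit this pathology once the measurement constraint $\miA\mis^\circ=\miy$ is taken into account; reconciling this with the logical order above (where $\mis^\circ=\mis^\star$ is concluded only afterwards) is where most of the technical effort goes. The second non-routine ingredient is the bookkeeping with properties~\textbf{1.C} and~\textbf{2.C}, which must be combined to cover both candidate frequency sets that partially overlap the ground truth and candidate ranks strictly below $K$; the slack $2K<|\mathcal{L}|$ is consumed exactly there.
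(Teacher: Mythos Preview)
Your upper bound and your use of conditions \textbf{1.C}/\textbf{2.C} to force $\{\mif^\star_{1:K}\}\subseteq\{\mif^\circ_{1:r^\circ}\}$ and hence $\mis^\circ=\mis^\star$ are essentially the paper's Lemma~\ref{lemmatrivia} and are fine. The gap is exactly the one you flag in your ``Main obstacle'' paragraph: you need the corner-rank hypothesis of Lemma~\ref{PrettyLemma} for an \emph{arbitrary} minimizer $\miS^\circ$ \emph{before} you know anything about $\mis^\circ$, but your sketched exclusion argument (``$\mis^\star$ unfolds to $z$-rank $K$'') applies to $\mis^\star$, not to $\mis^\circ$, and at that stage you only have $\miA\mis^\circ=\miA\mis^\star$, not $\mis^\circ=\mis^\star$. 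You correctly identify the circularity but do not break it.

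The paper breaks it by \emph{not} invoking Lemma~\ref{PrettyLemma} in the first pass. Instead it applies the weaker block-Toeplitz factorization of \cite[Lemma~2]{Yang16}/\cite[Prop.~1]{Gurvits02} (stated in the appendix as Lemma~\ref{rottaimmensamenteL}) recursively: for $\miS^\circ\in\mathcal{T}_{\bar{\Xsf}\bar{\Ysf}\bar{\Zsf}}$ with $r^\circ<\bar{\Zsf}$ this needs only $r^\circ<\bar{\Ysf}\bar{\Zsf}$ at the outer level and $r^\circ_{\bar{\Ysf}\bar{\Zsf}}\le r^\circ<\bar{\Zsf}$ at the inner level, both of which hold automatically by the canonical ordering, \emph{without} any assumption on $r^\circ_{\bar{\Zsf}}$. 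This yields $\miS^\circ=\miC_{\bar{\Xsf}\bar{\Ysf}\bar{\Zsf}}\miC_{\bar{\Xsf}\bar{\Ysf}\bar{\Zsf}}^\dagger$ with columns that are tensor products of one-dimensional atoms, and hence an expression $\mis^\circ=\sum_{m=1}^{r^\circ}u^\circ_m\,\mir_{\bar{\NN}}(\mif^\circ_m)$ for some $\mif^\circ_{1:r^\circ}$, \emph{obtained without Lemma~\ref{PrettyLemma}}. At this point your \textbf{1.C}/\textbf{2.C} argument runs and gives $\mis^\circ=\mis^\star$, $r^\circ=K$. Only then does the paper read off the corner-rank condition: the first $\bar{\Zsf}$ entries of $\mis^\circ=\mis^\star$ equal $\sum_{k}u^\star_k\,\mir_{\bar{\Zsf}}(f^{z\star}_k)$, and since under Assumption~\ref{A1} the $f^{z\star}_k$ are almost surely distinct with $K<\bar{\Zsf}$, comparison with the same block read from the recursive decomposition forces $r^\circ_{\bar{\Zsf}}=K=r^\circ$. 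Now Lemma~\ref{PrettyLemma} legitimately applies to $\miS^\circ$ and Algorithm~\ref{Alg:PrettyLemma} returns $\mif^\star_{1:K}$. In short: swap the order---extract an $r^\circ$-atom representation of $\mis^\circ$ via Lemma~\ref{rottaimmensamenteL} (not Lemma~\ref{PrettyLemma}), match to $\mis^\star$ via \textbf{1.C}/\textbf{2.C}, and \emph{then} certify the corner rank.
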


\begin{proof}  
The proof  is provided in Appendix \ref{appAIW0}.
\end{proof} 

\begin{remark}
\label{vafanculo}
Compared to the result provided in  \cite[Theorem~3]{Yang16}, Theorem \ref{theoremApi} is able to extend the region of resolvable scatters. In fact, differently from \cite{Yang16}, where a sufficient condition for recovery is provided consisting of 
$K< \min\{\bar{\Xsf}, \bar{\Ysf}, \bar{\Zsf}\}$, we significantly enlarge this condition to
$K<  \max\{\bar{\Xsf}, \bar{\Ysf}, \bar{\Zsf}\}$. To do this, we first \emph{i)} restructure the $N$-dimensional observable vector as the sampled version of a linear combination of steering vectors with Kronecker ordering $\bar{\Zsf}\to\bar{\Ysf}\to\bar{\Xsf}$
 such that $\bar{\Zsf}\geq\bar{\Ysf}\geq\bar{\Xsf}$, \emph{ii)} we search for the $\min$-$\rank$ matrix  $\miS^\circ_{\bar{\XX}\bar{\YY}\bar{\ZZ}}$, solution of \eqref{l0normopt_rank}, among all canonical  $\dd$-LT matrices and finally \emph{iii)} we decompose the canonical matrix $\miS^\circ_{\bar{\XX}\bar{\YY}\bar{\ZZ}}$, via Algorithm \ref{Alg:PrettyLemma} to extract relevant frequency parameters. This procedure ensures a significant enlargement in the number of scatters $K$ that can be resolved, together with the error-free identification of the associated frequencies  $\mif^\star_{1:K}\in \T^{d\times K}$  providing information on the multi-dimensional propagation parameters of the aforementioned scatters.
\end{remark}

\begin{corollary}
\label{CorollaryTh1}
Theorem \ref{theoremApi} holds  for 
the special class of \well~sensing  matrices (as per Definition \ref{def:wellstruct1}), such that $S_\Csf -(\dd-1)\geq 2K$. We denote  such class of sensing matrices by $\mathcal{A}^{(K)}_{\Csf}\subset\mathcal{A}_{\Csf}$.
\end{corollary}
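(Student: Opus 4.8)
The plan is to obtain Corollary~\ref{CorollaryTh1} as a direct specialization of Theorem~\ref{theoremApi}. Since Theorem~\ref{theoremApi} applies to every $\miA\in\mathcal{A}^{(K)}$, it suffices to show that a well-structured sensing matrix (Definition~\ref{def:wellstruct1}) satisfying $S_\Csf-(\dd-1)\geq 2K$ belongs to $\mathcal{A}^{(K)}$ in the sense of Definition~\ref{setinjective}. We name this sub-class $\mathcal{A}^{(K)}_\Csf$; the inclusion $\mathcal{A}^{(K)}_\Csf\subseteq\mathcal{A}_\Csf$ is immediate, so all the work is in verifying conditions~\textbf{1.C} and~\textbf{2.C} of Definition~\ref{setinjective} for such a matrix, which is the content we isolate as Lemma~\ref{Lemma:Corollary} (proved in Appendix~\ref{proofcorollary}) and sketch here.

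The first step is to use the well-structured property to pass from the unstructured $\dd$D-\ac{AD} array to a genuine $\dd$D-UD array. By Definition~\ref{def:wellstruct} there is an index set $\mathcal{I}_\Csf$ of sensed components and an embedded uniform structure $\NN_\Csf=[\Xsf_\Csf,\Ysf_\Csf,\Zsf_\Csf]$ such that, for every frequency vector $\mif_k$, identity~\eqref{eq_ws} gives $\mir^{(\mathcal{I}_\Csf)}_{\bar{\NN}}(\mif_k)=e^{\j 2\pi(\m_x f^x_k+\m_y f^y_k+\m_z f^z_k)}\,\mir_{\NN_\Csf}(\mif^\Csf_k)$, with $\mif^\Csf_k=[\ell_x f^x_k,\ell_y f^y_k,\ell_z f^z_k]^\top$ and fixed positive integers $\ell_\alpha$. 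Take $\mathcal{L}$ to be the set of rows of $\miA$ that sense $\mathcal{I}_\Csf$, possibly adjoining one further sensed row so that $|\mathcal{L}|>2K$ strictly. Then, restricted to the rows sensing $\mathcal{I}_\Csf$, the matrix $\miA^{(\mathcal{L})}\miR_{\bar{\NN}}(\mif_{1:m})$ equals $\miR_{\NN_\Csf}(\mif^\Csf_{1:m})$ right-multiplied by an invertible unimodular diagonal matrix, so it has the same column rank as the $\dd$D-UD steering matrix $\miR_{\NN_\Csf}(\mif^\Csf_{1:m})$ (the possibly adjoined row can only raise it). Under Assumption~\ref{A1} the $\mif_k$ have a density on $\T^\dd$, hence — since the $\ell_\alpha$ are fixed integers — the dilated vectors $\mif^\Csf_k$ are still almost surely pairwise distinct.

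The core step is to show $\rank\{\miR_{\NN_\Csf}(\mif^\Csf_{1:2K})\}=2K$ almost surely whenever $S_\Csf-(\dd-1)\geq 2K$. Inside the $\dd$D-UD one singles out the ``cross-shaped'' sub-array — all $\Xsf_\Csf$ elements along the first axis, the remaining $\Ysf_\Csf-1$ along the second and $\Zsf_\Csf-1$ along the third — which has exactly $S_\Csf-(\dd-1)\geq 2K$ elements, and whose steering vector is a column stack of one Vandermonde vector per axis coupled only through the shared constant entry. Any $2K\times 2K$ minor of the resulting matrix is a trigonometric polynomial in the frequency vectors, and it is not identically zero: this follows by exhibiting a single non-degenerate configuration and using that a holomorphic polynomial vanishing on the maximal totally real torus $\T^{\dd\cdot 2K}$ must vanish identically. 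By analyticity and Assumption~\ref{A1} the minor is then non-zero almost surely, giving~\textbf{1.C}. Condition~\textbf{2.C} follows by the same genericity argument applied to the union of the $K$ random frequencies $\mif^\star_{1:K}$ and the $r^\diamond\leq K$ fixed distinct spurious frequencies $\mif^\diamond_{1:r^\diamond}$: since $\mif^\diamond_j\neq\mif^\star_k$ and the steering vectors restricted to $\mathcal{L}$ are not contained in any subspace of dimension $r^\diamond+K-1<|\mathcal{L}|$, for almost every $\mif^\star_{1:K}$ the $K$ columns $\mir^{(\mathcal{L})}_{\bar{\NN}}(\mif^\star_k)$ are independent modulo the column space of $\miA^{(\mathcal{L})}\miR_{\bar{\NN}}(\mif^\diamond_{1:r^\diamond})$, which is exactly the rank-additivity identity of~\textbf{2.C}. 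Having established $\miA\in\mathcal{A}^{(K)}$, Theorem~\ref{theoremApi} applies and yields the claim.

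The main obstacle is the core step: the one-dimensional case is just the classical nonsingularity of a Vandermonde matrix with distinct nodes, but in the multilevel setting one must pin down the right square submatrix — taken within the cross-shaped sub-array, so that its order matches the stated bound $S_\Csf-(\dd-1)$ rather than the potentially much larger $N_\Csf$ — and construct an explicit non-degenerate frequency configuration for its ``staircase-Vandermonde'' determinant. Some additional care is needed to handle the boundary case $S_\Csf-(\dd-1)=2K$ so that the strict inequality $|\mathcal{L}|>2K$ in Definition~\ref{setinjective} is met, and, in~\textbf{2.C}, to argue that the relevant minor is not identically zero as a function of $\mif^\star_{1:K}$ with $\mif^\diamond_{1:r^\diamond}$ held fixed (true because the full-rank locus is Zariski-open and nonempty, the fixed $\mif^\diamond_{1:r^\diamond}$ spanning a subspace of dimension strictly below $|\mathcal{L}|$). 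The reduction in the second paragraph and the final appeal to Theorem~\ref{theoremApi} are routine.
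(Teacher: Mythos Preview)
Your overall strategy matches the paper's exactly: both reduce Corollary~\ref{CorollaryTh1} to Lemma~\ref{Lemma:Corollary}, i.e., to showing that any well-structured $\miA$ with $S_\Csf-(\dd-1)\geq 2K$ lies in $\mathcal{A}^{(K)}$, and both begin by using \eqref{eq_ws} to pass from $\miA^{(\mathcal{L})}\miR_{\bar{\NN}}(\cdot)$ to the embedded $\dd$D-UD steering matrix $\miR_{\NN_\Csf}(\cdot)$ up to an invertible diagonal factor.

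The difference is in the rank argument. The paper does not use a cross-shaped sub-array or analyticity; it invokes the Khatri--Rao rank results of \cite{Jiang01} (specifically \cite[Theorem~3]{Jiang01} for~\textbf{1.C} and \cite[Lemma~1]{Jiang01} for~\textbf{2.C}) applied to the factorization $\miR_{\NN_\Csf}(\mif^\Csf_{1:m})=\miR_{\Xsf_\Csf}(\cdot)\odot\miR_{\Ysf_\Csf}(\cdot)\odot\miR_{\Zsf_\Csf}(\cdot)$. Under Assumption~\ref{A1} each one-dimensional Vandermonde factor has rank $\min\{\alpha_\Csf,m\}$ almost surely, and the Khatri--Rao bound immediately yields rank $\geq\min\{S_\Csf-(\dd-1),m\}$, which is the condition you need. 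Your cross-shaped/staircase-Vandermonde construction plus Zariski-genericity is a valid alternative and is more self-contained, but the paper's route is a one-line citation once the Khatri--Rao structure is exhibited. The main practical advantage of the paper's approach shows up in~\textbf{2.C}: the Khatri--Rao bound is \emph{deterministic} in $\mif^\diamond_{1:r^\diamond}$ once a single full-measure event in $\mif^\star_{1:K}$ (coordinate-wise distinctness) is fixed, so the ``for every $\mif^\diamond$'' quantifier is handled for free. In your argument that quantifier sits outside an ``almost every $\mif^\star$'' clause, and you have to be slightly more careful (as you note) to rule out the union of uncountably many null sets; the Khatri--Rao route sidesteps this entirely.
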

\begin{proof}  
The proof  is provided in Appendix \ref{proofcorollary}.
\end{proof} 

Corollary \ref{CorollaryTh1} refers to
the class, $\mathcal{A}^{(K)}_{\Csf}\subset\mathcal{A}_{\Csf}$, of \well~sensing matrices for which $S_\Csf \geq 2K+(\dd-1)$, and 
states that if $\miA \in \mathcal{A}^{(K)}_{\Csf}$ 
the region of resolvable scatters
is $K< \min\{\frac{S_\Csf -(\dd-1)}{2}, \max\{\bar{\Xsf}, \bar{\Ysf}, \bar{\Zsf}\}\}$. An enlarged class of \well~sensing matrices which contains $\mathcal{A}^{(K)}_{\Csf}\subset\mathcal{A}_{\Csf}$   is the class of \well~sensing matrices such that $N_\Csf > 2K$.
For this enlarged class we conjecture that the region of resolvable scatters is enlarged to  $K<
\min\{\frac{N_\Csf}{2}, \max\{\bar{\Xsf}, \bar{\Ysf}, \bar{\Zsf}\}\}$, including the region stated in Corollary \ref{CorollaryTh1}.
More formally:
\begin{conjecture}
\label{conjec1}
Theorem \ref{theoremApi} holds for the subset of \well~sensing matrices (as per Definition \ref{def:wellstruct1}), such that  $N_\Csf>2K$.
\end{conjecture}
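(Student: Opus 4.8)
The plan is to prove Conjecture~\ref{conjec1} exactly as Corollary~\ref{CorollaryTh1} follows from Theorem~\ref{theoremApi}: it suffices to show that every \well~sensing matrix with $N_\Csf>2K$ lies in the set $\mathcal{A}^{(K)}$ of Definition~\ref{setinjective}, since Theorem~\ref{theoremApi} (Appendix~\ref{appAIW0}) and the Vandermonde decomposition of Algorithm~\ref{Alg:PrettyLemma} then deliver the claim, with resolvable region $K<\min\{N_\Csf/2,\max\{\bar{\Xsf},\bar{\Ysf},\bar{\Zsf}\}\}$. As the row subset $\mathcal{L}$ in Definition~\ref{setinjective} I would take the index set $\mathcal{I}_\Csf$ of the embedded uniform subarray (Definition~\ref{def:wellstruct}), so that $2K<|\mathcal{L}|=N_\Csf\le N$. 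By \eqref{eq_ws}, for any collection of frequencies one has $\miA^{(\mathcal{L})}\miR_{\bar{\NN}}(\mif_{1:m})=\miD\,\miR_{\NN_\Csf}(\mif^\Csf_{1:m})$, where $\miD$ is an invertible diagonal matrix collecting the phases $e^{\j2\pi(\m_x f^x_k+\m_y f^y_k+\m_z f^z_k)}$ and $\mif^\Csf_k=[\ell_x f^x_k,\ell_y f^y_k,\ell_z f^z_k]^\top$; since $f^\alpha\mapsto\ell_\alpha f^\alpha\bmod 1$ maps $\mathcal{U}[0,1)$ onto itself, Assumption~\ref{A1} is inherited by the $\mif^\Csf_k$. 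Hence conditions~1.C and~2.C become rank statements for the $d$-level Vandermonde (Khatri--Rao) matrix $\miR_{\NN_\Csf}=\miR_{\Xsf_\Csf}\odot\miR_{\Ysf_\Csf}\odot\miR_{\Zsf_\Csf}$ of a uniform array with $N_\Csf$ elements.

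Condition~1.C requires that $\miR_{\NN_\Csf}(\mif^\Csf_{1:2K})$ have full column rank $2K$ almost surely, which I would obtain by a genericity/semicontinuity argument: every $2K\times 2K$ minor of $\miR_{\NN_\Csf}$ is a trigonometric polynomial in the $2K$ frequency vectors, so it is enough to exhibit one full-column-rank configuration; picking $2K\le N_\Csf-1$ of the regular DFT nodes $\{(i_1/\Xsf_\Csf,i_2/\Ysf_\Csf,i_3/\Zsf_\Csf)\}$ yields steering vectors that are columns of the unitary $N_\Csf\times N_\Csf$ multidimensional DFT matrix, hence linearly independent, so the corresponding minor is not identically zero and vanishes only on a null set — which is precisely where $N_\Csf>2K$ enters. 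The delicate part is condition~2.C. In the regime of Corollary~\ref{CorollaryTh1} one has $S_\Csf-(d-1)\ge 2K$, and the Kruskal-rank bound for the Khatri--Rao product (whose Kruskal rank is at least $\Xsf_\Csf+\Ysf_\Csf+\Zsf_\Csf-(d-1)=S_\Csf-(d-1)$) guarantees full column rank of $\miR_{\NN_\Csf}$ for \emph{every} collection of distinct frequencies; hence the rank-additivity identity in~2.C holds for \emph{all} admissible $\mif^\diamond_{1:r^\diamond}$, including any that the proof of Theorem~\ref{theoremApi} constructs adversarially from the data $\miy$. When only $N_\Csf>2K$ holds, the Kruskal bound is too weak and one merely has the \emph{generic} statement $\rank\miR_{\NN_\Csf}(\cdot_{1:m})=\min(m,N_\Csf)$; an incremental/non-degeneracy argument (a generic fresh steering vector avoids the span $V^\diamond$ of $\miR_{\NN_\Csf}(\mif^{\diamond\Csf}_{1:r^\diamond})$ together with the previously chosen fresh vectors, which remains a proper subspace of $\C^{N_\Csf}$ because its dimension is at most $\rho+(K-1)\le 2K-1<N_\Csf$) certifies~2.C only for $\mif^\diamond$ chosen \emph{independently} of $\mif^\star$.

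The main obstacle is therefore to upgrade this generic statement to one robust against an adversarial $\mif^\diamond$ that may be correlated with $\mif^\star$ and $\miy$: one must show that the rank-dropping configurations of the multi-level Vandermonde matrix $\miR_{\NN_\Csf}$ in the range $m\le N_\Csf$ — equivalently, the configurations of $m$ points on the Segre-type toric variety $\{\mir_{\NN_\Csf}(\mif^\Csf):\mif\in\T^d\}\subset\C^{N_\Csf}$ that fail to be linearly independent (the defective secant-variety phenomenon) — cannot be realized by the \emph{structured} alternative solutions appearing in the proof of Theorem~\ref{theoremApi}, namely small-rank canonical \ac{PSD} $d$-LT matrices consistent with $\miy=\miA\mis^\star$. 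Closing this would require either a non-defectivity result for these toric varieties strong enough to provide a codimension count immune to the $d$-LT/canonical/consistency constraints, or a direct combinatorial characterization of the rank drops of $\miR_{\NN_\Csf}$ beyond the Kruskal regime together with a verification that those constraints exclude them. A secondary complication is that when the \well~embedding involves a proper sub-sampling ($\ell_\alpha>1$ in \eqref{eq_ws}, as in Example~1), the map $\mif\mapsto\mif^\Csf$ is many-to-one on $\T^d$, so one must additionally argue that $\mathcal{I}_\Csf$ (or a suitable enlargement inside the sensed index set) induces an alias-free sampling over the sensed region, using that the full $\dd$D-\ac{AD} array already separates those frequencies. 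The numerical experiments reported here consistently confirm the bound $K<N_\Csf/2$ predicted by this argument, which is why the statement is left as a conjecture pending a resolution of the above algebraic-geometric question.
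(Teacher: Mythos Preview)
The paper does not prove Conjecture~\ref{conjec1}. It is explicitly stated as a conjecture and is supported only by the numerical experiments in Section~\ref{sec:Vb} (``Conjecture~\ref{conjec1} validity will be shown by simulation''). There is therefore no ``paper's own proof'' to compare against.

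Your write-up is not a proof either, and you say so yourself in the last sentence. What you have produced is a correct diagnosis of \emph{why} the statement is a conjecture rather than a corollary. Your reduction to showing $\miA\in\mathcal{A}^{(K)}$ via the subset $\mathcal{L}$ indexing the embedded uniform sub-array $\mathcal{I}_\Csf$ is the natural route, and your treatment of condition~1.C (generic full column rank of the $N_\Csf\times 2K$ multilevel Vandermonde matrix via a DFT-node witness and real-analyticity of minors) is essentially sound; a minor slip is that the diagonal phase matrix $\miD$ acts on the \emph{right} of $\miR_{\NN_\Csf}(\mif^\Csf_{1:m})$, not on the left, but this does not affect the rank argument.

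You have put your finger precisely on the gap: condition~2.C in Definition~\ref{setinjective} demands the rank-additivity identity for an \emph{arbitrary} (hence possibly $\miy$-dependent, adversarial) set $\mif^\diamond_{1:r^\diamond}$, not a generic one. In the regime of Corollary~\ref{CorollaryTh1} the Kruskal/Khatri--Rao bound $\mathrm{krank}\ge S_\Csf-(d-1)$ handles this uniformly; once you weaken the hypothesis to $N_\Csf>2K$ that bound no longer covers the needed range, and your ``fresh-vector avoids a proper subspace'' argument only controls $\mif^\diamond$ independent of $\mif^\star$. The secondary aliasing issue you raise when $\ell_\alpha>1$ is also real: distinct $\mif_j^\diamond$ may collide under $\mif\mapsto\mif^\Csf$, so $\miR_{\NN_\Csf}(\mif^{\Csf\diamond})$ can drop rank for reasons unrelated to secant defectivity, and this must be excluded or absorbed. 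These are exactly the obstructions the authors left open; your proposal does not overcome them, which is consistent with the status of the statement as a conjecture.
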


Conjecture \ref{conjec1} validity will be shown by simulation in Section \ref{sec:Vb}. 
%{\RED  However we can prove the the following corollary. 
%\begin{corollary}	
%\label{theoremApiSS}
%Consider a linear measurement model as in \eqref{ec_parametrical_noiseless}. Under Assumption \ref{A1}, if $ K\leq  \left \lceil \frac{N_\Csf}{2} -1  \right \rceil$,  the $K$--scatter vector $\mis^\star$ and its associated   frequencies  $\mif^\star_{1:K}\in \T^{d\times K}$   can be uniquely identified as solution of  \eqref{l0normopt_rank} provided that the  virtual uniform array is of  size $\bar{\NN}=[\bar{\Xsf},\bar{\Ysf},\bar{\Zsf}]$  with $  K < \min\{\bar{\Xsf}, \bar{\Ysf}, \bar{\Zsf}\}$. The frequencies  $\mif^\star_{1:K}\in \T^{d\times K}$ can be again  obtained by Vandermonde Decomposition, via Algorithm \ref{Alg:PrettyLemma}, of  the resulting $\dd$-LT  canonical matrix $\miS_{\bar{\XX}\bar{\YY}\bar{\ZZ}}$.
%\end{corollary}
%\begin{proof}  
%The proof of Corollary \ref{theoremApiSS} can be obtained applying Corollary \ref{CorollaryTh1} and simply observing that $  K < \min\{\bar{\Xsf}, \bar{\Ysf}, \bar{\Zsf}\}$ automatically implies  $S_\Csf -(\dd-1)\geq 2K$. 
%\end{proof} }
% {\RED Recall the aspect of complexity as in the contributions?? Maybe better in section 4 }

The previous results consider a class of sensing matrices with a specific structure which in some practical applications could not be satisfied or even difficult to verify.  A natural question arises on how previous results extend to the case of sensing matrices with arbitrary measuring structures, if they do so. Next theorem answers to this question showing that frequency recovery is still possible, even in the presence of arbitrary sensing matrices, if the dimension of the measurement $N$ is larger than a certain threshold which linearly grows with the number of scatters we want to resolve. This implies that we can have a linear trade-off between hardware cost and resolvable capability.
This result is of significant practical  relevance specially in the context of intelligent smart surfaces with a massive distribution of antennas \cite{Hu18}. 

\begin{theorem}
	\label{theoremApiProbalistic}
 Given a linear measurement model as in \eqref{ec_parametrical_noiseless}  with $\miA \in  \mathcal{A}$ being an \textit{arbitrary} binary sensing matrix as defined in \eqref{eq:sensing_set}, under Assumption \ref{A1}, 
	 there exist with probability $1-\epsilon$ a unique $K$--scatter vector $\mis^\star$ as the solution to problem \eqref{l0normopt_rank}, and  a  unique frequency set  $\mif^\star_{1:K}\in \T^{d\times K}$ satisfying
	 \eqref{ec_parametrical_noiseless},  provided that, \emph{i)} $ K< \max\{\bar{\Xsf}, \bar{\Ysf}, \bar{\Zsf}\}$
	 and \emph{ii)} $N\geq 2KC\log\left(2K\epsilon^{-1}\right)$ with $C$ being a proper constant which is not larger than $12$.
	 The frequency set  $\mif^\star_{1:K}\in \T^{d\times K}$ is  obtained by Vandermonde Decomposition, via Algorithm \ref{Alg:PrettyLemma}, of  the $\dd$-LT matrix $\miS_{\bar{\XX}\bar{\YY}\bar{\ZZ}}$ solution of \eqref{l0normopt_rank}.
\end{theorem}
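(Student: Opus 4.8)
The plan is to reduce Theorem~\ref{theoremApiProbalistic} to Theorem~\ref{theoremApi} by showing that, under the stated scaling $N \geq 2KC\log(2K\epsilon^{-1})$, an \emph{arbitrary} binary sensing matrix $\miA \in \mathcal{A}$ belongs to $\mathcal{A}^{(K)}$ with probability at least $1-\epsilon$. Once we establish $\miA \in \mathcal{A}^{(K)}$, condition \emph{i)} $K < \max\{\bar{\Xsf},\bar{\Ysf},\bar{\Zsf}\}$ is exactly the remaining hypothesis of Theorem~\ref{theoremApi}, and the conclusion --- unique recovery of $\mis^\star$ via \eqref{l0normopt_rank} and of $\mif^\star_{1:K}$ via Algorithm~\ref{Alg:PrettyLemma} --- follows immediately. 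So the entire content of the proof is the probabilistic statement that $\mathcal{A}^{(K)} \equiv \mathcal{A}$ with high probability, which is precisely what Lemma~\ref{Lemma:theoremApiProbalistic} (Appendix~\ref{appAIW0probabilistic}) asserts; hence the bulk of the proof is simply invoking that lemma and checking the constant.

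First I would recall Definition~\ref{setinjective}: membership in $\mathcal{A}^{(K)}$ requires the existence of a row subset $\mathcal{L} \subseteq [N]$ with $2K < |\mathcal{L}| \le N$ such that \textbf{1.C} $\miA^{(\mathcal{L})}\miR_{\bar{\NN}}(\mif_{1:2K})$ is injective on $\C^{2K}$ and \textbf{2.C} the rank-additivity property holds for $\miA^{(\mathcal{L})}\miR_{\bar{\NN}}$. The natural choice is $\mathcal{L} = [N]$ itself: under Assumption~\ref{A1} the $2K$ frequency vectors $\mif_{1:2K}$ are i.i.d.\ uniform, so the steering vectors $\mir_{\bar{\NN}}(\mif_j)$ are almost surely in general position, and the only thing that can fail is that the random row-selection performed by $\miA$ (equivalently, the binary mask) kills too much structure. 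I would argue that \textbf{1.C} and \textbf{2.C} are both controlled by the event that a certain $2K \times 2K$ (resp.\ $(K+r^\diamond)\times(K+r^\diamond)$) submatrix built from the sensed rows of the Vandermonde-type matrix $\miR_{\bar{\NN}}(\mif)$ has full rank. Because the entries of $\miR_{\bar{\NN}}$ are analytic non-degenerate functions of the continuous frequencies, for a \emph{fixed} admissible mask this happens with probability one; the subtlety is that there are only finitely many masks in $\mathcal{A}$ but the mask must retain at least $2K$ ``effectively independent'' rows, and this is where the dimension lower bound on $N$ enters.

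The key probabilistic estimate --- and the main obstacle --- is showing that $N = O(K\log(K/\epsilon))$ active antennas, placed according to an \emph{arbitrary} (adversarial or random) pattern inside the virtual uniform array, suffice to guarantee with probability $1-\epsilon$ that the sensed steering submatrix $\miA\,\miR_{\bar{\NN}}(\mif_{1:2K})$ is injective and rank-additive. This is exactly the generalization of the sampling results of \cite{Tang13,Chi15} to arbitrary non-uniform deployments claimed in the Related-Work section. I would attack it by fixing the realization of $\miA$ and taking the randomness over the frequencies (Assumption~\ref{A1}): one bounds the probability that $\det$ of the relevant $2K\times 2K$ Gram-type submatrix is anomalously small via an anticoncentration / $\epsilon$-net argument, then takes a union bound over the finitely many candidate column subsets of size $\le K$. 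The $\log(2K/\epsilon)$ factor and the universal constant $C \le 12$ should come out of a standard matrix-Chernoff or Bernstein bound applied to the sum $\sum_{n} \mir_{\bar{\NN}}(\mif)_n \mir_{\bar{\NN}}(\mif)_n^\dagger$ restricted to the sensed coordinates, using that each rank-one term has bounded operator norm and that the expected restricted Gram matrix is well-conditioned on the relevant low-dimensional subspace. I expect the delicate point to be handling condition \textbf{2.C} (rank additivity for an arbitrary perturbing frequency set $\mif^\diamond_{1:r^\diamond}$ with $r^\diamond \le K$), since this must hold \emph{uniformly} over all such $\mif^\diamond$, requiring a net over $\T^{d\times r^\diamond}$ and a continuity/Lipschitz bound on the smallest singular value of $\miA^{(\mathcal{L})}\miR_{\bar{\NN}}(\mif^{\star\diamond}_{1:K+r^\diamond})$; the deferral of all of this to Lemma~\ref{Lemma:theoremApiProbalistic} in Appendix~\ref{appAIW0probabilistic} is what keeps the proof of Theorem~\ref{theoremApiProbalistic} itself short.

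In summary, the proof I would write is: (1) invoke Lemma~\ref{Lemma:theoremApiProbalistic} to conclude that when $N \ge 2KC\log(2K\epsilon^{-1})$ with $C \le 12$, an arbitrary $\miA \in \mathcal{A}$ satisfies $\miA \in \mathcal{A}^{(K)}$ with probability at least $1-\epsilon$; (2) on that event, apply Theorem~\ref{theoremApi} with hypothesis $K < \max\{\bar{\Xsf},\bar{\Ysf},\bar{\Zsf}\}$ to obtain unique recovery of $\mis^\star$ as the solution of \eqref{l0normopt_rank}; (3) apply Lemma~\ref{PrettyLemma} / Algorithm~\ref{Alg:PrettyLemma} to the resulting $\min$-$\rank$ canonical $d$-LT matrix to extract the unique frequency set $\mif^\star_{1:K}$, noting that $\rank\{\miS_{\bar{\XX}\bar{\YY}\bar{\ZZ}}\} = K < \max\{\bar{\Xsf},\bar{\Ysf},\bar{\Zsf}\}$ is precisely the condition under which the Vandermonde decomposition is unique. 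The real mathematical work lives in Lemma~\ref{Lemma:theoremApiProbalistic}, and within it the uniform control of condition~\textbf{2.C} over all perturbing frequency sets is the step I expect to require the most care.
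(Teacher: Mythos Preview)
Your top-level reduction is exactly the paper's proof: invoke Lemma~\ref{Lemma:theoremApiProbalistic} to get $\miA\in\mathcal{A}^{(K)}$ with probability $1-\epsilon$ under the scaling $N\ge 2KC\log(2K\epsilon^{-1})$, then apply Theorem~\ref{theoremApi} and Lemma~\ref{PrettyLemma}. Nothing more is needed for Theorem~\ref{theoremApiProbalistic} itself.

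Where your speculation diverges from the paper is inside Lemma~\ref{Lemma:theoremApiProbalistic}, specifically for condition~\textbf{2.C}. You propose handling the \emph{arbitrary} perturbing set $\mif^\diamond_{1:r^\diamond}$ via an $\epsilon$-net over $\T^{d\times r^\diamond}$ together with a Lipschitz/continuity bound on the smallest singular value. The paper avoids a net entirely. It first proves a general concentration lemma (Lemma~\ref{LemmaB}) showing that $\|\miXi^\dagger\miXi-\miI_L\|_{2\to2}<1$ with probability $1-\epsilon$ for $\miXi=c\,\miB\miA\miR_{\bar{\NN}}(\mif_{1:L})$ whenever $\miB$ satisfies certain isotropy and fourth-moment conditions, via \cite[Prop.~8.16]{CSbook}. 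For~\textbf{1.C} it takes $\miB=\diag(d_1,\ldots,d_N)$ with i.i.d.\ unit-variance entries. For~\textbf{2.C}, given a \emph{fixed} $\mif^\diamond_{1:r}$, it takes the SVD $\miA\miR_{\bar{\NN}}(\mif^\diamond_{1:r})=\miU\boldsymbol{\Sigma}\miV^\dagger$, completes $\miU$ to a unitary $[\miU\,\miE]$, and sets $\miB=\diag(d_{r+1},\ldots,d_N)\miE^\dagger$; the rank-additivity statement then reduces, via a block-determinant identity, to injectivity of $\miB\miA\miR_{\bar{\NN}}(\mif^\star_{1:K})$, which is again Lemma~\ref{LemmaB}. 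The randomness here is in $\mif^\star$ (Assumption~\ref{A1}) and in the auxiliary $d_i$ and $\miE$, not in $\mif^\diamond$, so no net or uniformity-in-$\mif^\diamond$ argument is run. Your net approach would presumably also work, but would likely cost extra logarithmic factors in $\bar{N}$ (as in \cite{Tang13,Chi15}); the paper's orthogonal-complement trick is what allows the bound to depend only on $K$ and $\epsilon$.
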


\begin{proof}  
The proof  is provided in Appendix \ref{appAIW0probabilistic}.
\end{proof} 

\subsubsection{Reconstruction via  Convex Relaxation} 
In the previous section, we focused on the atomic $\ell_0$-\ac{AN} reconstruction. However, the $\ell_0$-\ac{AN} minimization, as well as its equivalent reformulation in terms of  a $\rank$ minimization problem, are NP-hard \cite{CSbook}. Nonetheless, keeping in mind that the atomic $\ell_q$-norm  approaches the atomic $\ell_0$-norm as $q$ tends to zero, the usual approach is to approximate  \eqref{l0normopt},  or  equivalently  \eqref{l0normopt_rank}, via the following optimization problem: 

\begin{opteq}
	\underset{\mis \in \C^{\bar{N}}}{\min} \left\|\mis\right\|_{\mathcal{R},1}  \quad {\rm s.t.}  \quad  \miA\mis= \miy,
	\label{covex}
\end{opteq}
where $\|\mis\|_{\mathcal{R},1}$ denotes  the  atomic $\ell_1$-norm  ($\ell_1$-\ac{AN}) \cite{Chandrasekaran2012,Chi15}  of a vector $\mis$ in  $
\mathcal{R}=\left\{\mir_{\bar{\NN}}(\mif):\mif\in\T^d\right\}$ given by $\|\mis\|_{\mathcal{R},1}=\inf_{\mif_k\in \T^d,u_k\in\C}\bigg\{ \sum_k|u_k| :\, \,   \mis=\sum_k u_k\mir_{\bar{\NN}}\left(\mif_k\right)\bigg\}$.

In  \cite{Tang13},  a semidefinite characterization of the $\ell_1$ atomic norm was introduced for $d=1$ and generalized to arbitrary $d$ dimensions in \cite{Yang16} as follows:
\begin{opteq}
	\begin{aligned}
		\min_{t,  \mis \in \C^{\bar{N}}, \miT_{\bar{\XX}\bar{\YY}\bar{\ZZ}} \in \mathcal{T}_{\bar{\Xsf}\bar{\Ysf}\bar{\Zsf}}}& \, \,  \frac{1}{2}t+\frac{1}{2}\tr\left\{\miT_{\bar{\XX}\bar{\YY}\bar{\ZZ}} \right\} \quad \quad 
		{\rm s.t.} 	\quad 	&\begin{bmatrix}\miT_{\bar{\XX}\bar{\YY}\bar{\ZZ}} & \mis\\ \mis^\H & t\end{bmatrix}\succeq 0, \quad &\miA\mis= \miy.
	\end{aligned}
	\label{l1normopt_convex}
\end{opteq}
%{\CYN However,  in general, for arbitrary $d$,  \eqref{l1normopt_convex} only provides  a lower bound for the $\ell_1$-atomic norm, and it is only via a proper choice of $\bar{\NN}= [\bar{\Xsf},\bar{\Ysf},\bar{\Zsf}]$ that \eqref{l1normopt_convex}  reduces to  (the dual problem of) the SDP formulation of the $\ell_1$-atomic norm. The problem of choosing the proper  $\bar{\NN}= [\bar{\Xsf},\bar{\Ysf},\bar{\Zsf}]$ is an open problem. However, checking mechanisms have been developed  in the literature for the equivalence of the solution to \eqref{l1normopt_convex} and the $\ell_1$-atomic norm \cite{Xu14,Yang16}. In particular, in \cite{Xu14}, the authors proposed a checking mechanism  based on a $d$-dimensional search over all frequencies at which the so-called dual polynomial achieves the maximum magnitude. A simpler checking mechanism was proposed in \cite{Yang16} based on the $\rank$ of $\miT_{\bar{\XX}\bar{\YY}\bar{\ZZ}}$. Exploiting Lemma~\ref{PrettyLemma}, in the following, we propose yet another checking mechanism based on the $\rank$ of $\miT_{\bar{\XX}\bar{\YY}\bar{\ZZ}}$ which is less restrictive that the one proposed in \cite{Yang16}.}

%Specifically 
In the next two theorems, we provide recovery results for the  convex opimization problem \eqref{l1normopt_convex}, both for the sensing set $\mathcal{A}^{(K)}_\Csf$ and for an arbitrary sensing set where the dimension of the measurement $N$ is larger than a threshold. These scenarios would be analogous respectively to Corollary \ref{CorollaryTh1} and Theorem \ref{theoremApiProbalistic} for the non-convex problem \eqref{l0normopt_rank}.

\begin{theorem}
	\label{theoremL1}
	
	Consider a linear measurement model as in \eqref{ec_parametrical_noiseless}.  Under Assumption \ref{A1}, if $\miA\in \mathcal{A}^{(K)}_\Csf$ as defined in Corollary \ref{CorollaryTh1}, 
	there exists a unique $K$--scatter vector $\mis^\star$ as the solution to problem \eqref{l1normopt_convex} and a unique frequency set  $\mif^\star_{1:K}\in \T^{d\times K}$ satisfying $\miy=\miA \sum_k u^\star_k\mir_{\bar{\NN}}\left(\mif^\star_k\right)$ provided that, \emph{i)}
$ K<\max\{\bar{\Xsf}, \bar{\Ysf}, \bar{\Zsf}\}$ and $K\leq (S_\Csf-(\dd-1))/2$,  \emph{ii)} the optimal solution to \eqref{l1normopt_convex}, denoted by $\left( t^\circ,  \mis_{\ell_1}^\circ, \miT^\circ_{\bar{\XX}\bar{\YY}\bar{\ZZ}}\right)$,  also satisfies $r_{\ell_1}^\circ< \max\{\bar{\Xsf}, \bar{\Ysf}, \bar{\Zsf}\}$ and $r_{\ell_1}^\circ\leq (S_\Csf-(\dd-1))/2$ with 
$r_{\ell_1}^\circ  \triangleq\rank \left \{ \miT^\circ_{\bar{\XX}\bar{\YY}\bar{\ZZ}}\right\}$
\end{theorem}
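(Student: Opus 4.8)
The plan is to transcribe the proof of Corollary~\ref{CorollaryTh1} (that is, Theorem~\ref{theoremApi} specialized to \well~sensing matrices), with the $\rank$-minimizing program~\eqref{l0normopt_rank} replaced by its convex surrogate~\eqref{l1normopt_convex}; hypothesis~\emph{ii)} plays exactly the role that, for~\eqref{l0normopt_rank}, comes for free, namely keeping the $\rank$ of the optimal matrix below the thresholds required by Lemma~\ref{PrettyLemma} and by the injectivity conditions of Definition~\ref{setinjective}. First I would recall that~\eqref{l1normopt_convex} is the semidefinite reformulation of the $\ell_1$-atomic-norm program~\eqref{covex}; that $\mis^\star$ is feasible for it, taking $t^\star=\sum_k|u_k^\star|$ and $\miT^\star=\sum_k|u_k^\star|\,\mir_{\bar{\NN}}(\mif_k^\star)\mir_{\bar{\NN}}^\dagger(\mif_k^\star)$, which by Remark~\ref{prettyremark} and the standing convention $\bar{\Xsf}\le\bar{\Ysf}\le\bar{\Zsf}$ belongs to $\mathcal{T}_{\bar{\Xsf}\bar{\Ysf}\bar{\Zsf}}$ with $\rank$ $K$; and that coercivity of the objective over the closed feasible set produces an optimal triple $(t^\circ,\mis_{\ell_1}^\circ,\miT^\circ)$.

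Given any such optimizer, I would first read off its structure. By hypothesis~\emph{ii)}, $\miT^\circ$ is a canonical PSD $\dd$-LT matrix of $\rank$ $r_{\ell_1}^\circ<\max\{\bar{\Xsf},\bar{\Ysf},\bar{\Zsf}\}$, so --- once its upper-left-corner $\rank$ condition is checked (see the last paragraph) --- Lemma~\ref{PrettyLemma} yields a unique Vandermonde decomposition $\miT^\circ=\miR_{\bar{\NN}}(\mif^\circ_{1:r_{\ell_1}^\circ})\,\miP^\circ\,\miR_{\bar{\NN}}^\dagger(\mif^\circ_{1:r_{\ell_1}^\circ})$ with $\miP^\circ\succ 0$ diagonal and distinct $\mif_j^\circ$. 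The positive semidefinite constraint of~\eqref{l1normopt_convex} then forces $\mis_{\ell_1}^\circ$ into the range of $\miT^\circ$, so $\mis_{\ell_1}^\circ=\sum_j u_j^\circ\,\mir_{\bar{\NN}}(\mif_j^\circ)$, and a Schur-complement plus AM--GM evaluation of the optimal objective shows that at the optimum $p_j^\circ=|u_j^\circ|$, $t^\circ=\sum_j|u_j^\circ|$ and $\|\mis_{\ell_1}^\circ\|_{\mathcal{R},1}=\sum_j|u_j^\circ|$, all $r_{\ell_1}^\circ$ atoms being active.

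Next I would invoke $\miA\in\mathcal{A}_{\Csf}^{(K)}\subseteq\mathcal{A}^{(K)}$ (Corollary~\ref{CorollaryTh1}). From $\miA\mis_{\ell_1}^\circ=\miy=\miA\mis^\star$ I restrict to the row set $\mathcal{L}$ of Definition~\ref{setinjective} and split the atoms of $\mis_{\ell_1}^\circ$ into those equal to some $\mif_k^\star$ and the rest, $\mif^\diamond_{1:r^\diamond}$. Condition~\textbf{2.C} --- used with $r^\diamond\le r_{\ell_1}^\circ$, a harmless extension of its stated range because $K+r_{\ell_1}^\circ\le S_\Csf-(\dd-1)$ under the hypotheses --- makes the column spaces of $\miA^{(\mathcal{L})}\miR_{\bar{\NN}}(\mif^\star_{1:K})$ and $\miA^{(\mathcal{L})}\miR_{\bar{\NN}}(\mif^\diamond_{1:r^\diamond})$ intersect trivially, so the relation $\miA^{(\mathcal{L})}(\mis_{\ell_1}^\circ-\mis^\star)=0$ decouples; the injectivity of $\miA^{(\mathcal{L})}\miR_{\bar{\NN}}(\mif^\star_{1:K})$ (condition~\textbf{1.C}) then forces each $\mif_k^\star$ to be an atom of $\mis_{\ell_1}^\circ$ with weight $u_k^\star$, leaving $\mis_{\ell_1}^\circ=\mis^\star+\miR_{\bar{\NN}}(\mif^\diamond_{1:r^\diamond})\mathbf{a}$. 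Optimality then closes the loop: $\|\mis_{\ell_1}^\circ\|_{\mathcal{R},1}=\sum_k|u_k^\star|+\sum_j|a_j|$, while $\|\mis_{\ell_1}^\circ\|_{\mathcal{R},1}\le\|\mis^\star\|_{\mathcal{R},1}\le\sum_k|u_k^\star|$, so $\mathbf{a}=0$, whence $\mis_{\ell_1}^\circ=\mis^\star$, $r_{\ell_1}^\circ=K$, and $\miT^\circ=\miT^\star$. Running Algorithm~\ref{Alg:PrettyLemma} on $\miT^\circ$ --- now of $\rank$ $K<\max\{\bar{\Xsf},\bar{\Ysf},\bar{\Zsf}\}$, with corner $\rank$ $K$ almost surely under Assumption~\ref{A1} --- returns the unique frequency set $\mif^\star_{1:K}\in\T^{\dd\times K}$, which gives the full claim.

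Two steps will require genuine care, and both are absent from the $\rank$-minimization argument. The first is verifying, for an \emph{arbitrary} optimal $\miT^\circ$ rather than only for the nominal $\miT^\star$, the upper-left-corner $\rank$ hypothesis of Lemma~\ref{PrettyLemma} --- without it $\miT^\circ$ need not admit any Vandermonde decomposition and the scheme stalls at the outset; I would address this exactly as in the proof of Corollary~\ref{CorollaryTh1}, by restricting to the uniform sub-block $\NN_\Csf$ furnished by the \well~structure and exploiting the canonical ordering. The second, and the reason the convex statement is strictly weaker than its $\rank$-minimization counterpart, is that the trace minimizer need not be sparse, so the case $r_{\ell_1}^\circ>K$ cannot be ruled out a priori; disposing of the spurious atoms uses the conjunction of hypothesis~\emph{ii)}, the transversality condition~\textbf{2.C}, and the $\ell_1$-norm comparison, and I would expect that interplay to be the main obstacle.
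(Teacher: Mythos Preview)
Your high-level plan is exactly what the paper does: its proof of Theorem~\ref{theoremL1} is the single sentence ``easily proved following the same derivations of Theorem~\ref{theoremApi} and making use of the $\rank$ condition of the solution in~\eqref{l1normopt_convex}'', and your proposal is a faithful unpacking of that sentence, with hypothesis~\emph{ii)} supplying the bound $r^\circ_{\ell_1}\le(S_\Csf-(\dd-1))/2$ that in~\eqref{l0normopt_rank} was automatic.

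One ordering difference is worth flagging. You invoke Lemma~\ref{PrettyLemma} at the outset to get a Vandermonde decomposition of $\miT^\circ$, and this forces you to confront the upper-left-corner $\rank$ condition for an \emph{arbitrary} optimizer before you know anything about it; your proposed fix (``restricting to the uniform sub-block $\NN_\Csf$ as in the proof of Corollary~\ref{CorollaryTh1}'') does not actually appear in that proof and is not obviously sufficient. The paper's route, following Appendix~\ref{appAIW0} literally, sidesteps this: it first uses Lemma~\ref{rottaimmensamenteL} (which needs no corner-rank hypothesis) to write $\miT^\circ=\miC_{\bar{\Xsf}\bar{\Ysf}\bar{\Zsf}}\miC_{\bar{\Xsf}\bar{\Ysf}\bar{\Zsf}}^\dagger$ and to express $\mis_{\ell_1}^\circ$ as a combination of $r_{\ell_1}^\circ$ Kronecker steering vectors via the Schur complement, then identifies $\mis_{\ell_1}^\circ=\mis^\star$ and $r_{\ell_1}^\circ=K$ via the injectivity conditions, and only \emph{afterwards} verifies $r^\circ_{\bar{\Zsf}}=K$ by comparing the first $\bar{\Zsf}$ components, which legitimizes Lemma~\ref{PrettyLemma} and Algorithm~\ref{Alg:PrettyLemma}. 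Reordering your argument this way dissolves the first of your two ``genuine care'' items.

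Your $\ell_1$-norm comparison to kill the spurious atoms $\mathbf{a}$ is a correct and natural closure that the paper's one-line proof does not spell out; note, however, that once you have $K+r^\diamond\le K+r_{\ell_1}^\circ\le S_\Csf-(\dd-1)$ from hypotheses~\emph{i)}--\emph{ii)}, the same Khatri--Rao rank bound from \cite{Jiang01} underlying Lemma~\ref{Lemma:Corollary} gives injectivity of $\miA^{(\mathcal{L})}\miR_{\bar{\NN}}(\mif^{\star\diamond}_{1:K+r^\diamond})$ directly, so the extended Lemma~\ref{lemmatrivia}-type argument already forces $\mathbf{a}=0$ without appealing to optimality. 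Either route works; yours is slightly more self-contained, the paper's is closer to a pure transcription of Appendix~\ref{appAIW0}.
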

\begin{proof}  
Theorem \ref{theoremL1} is easily proved following the same derivations of Theorem \ref{theoremApi} and making use of the $\rank$ condition of the solution in \eqref{l1normopt_convex}.
\end{proof}

Analogously to what considered for Corollary
\ref{CorollaryTh1}, we have numerically verified the following conjecture:

\begin{conjecture}
\label{conjec2}
Theorem  \ref{theoremL1} holds for the subset of \well~sensing matrices (as per Definition \ref{def:wellstruct1}), provided that
\emph{i)}
$ K<\min\{ \frac{N_\Csf}{2},\max\{\bar{\Xsf}, \bar{\Ysf}, \bar{\Zsf}\}\}$,
 and \emph{ii)} $r_{\ell_1}^\circ< \min\{ \frac{N_\Csf}{2} ,\max\{\bar{\Xsf}, \bar{\Ysf}, \bar{\Zsf}\}\}$.
\end{conjecture}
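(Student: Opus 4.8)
The plan is to combine two ingredients: the $\ell_0$-recovery statement of Conjecture~\ref{conjec1} for \well~sensing matrices with $N_\Csf>2K$, and the $\ell_0\to\ell_1$ bridging argument already used to obtain Theorem~\ref{theoremL1} from Corollary~\ref{CorollaryTh1}. For the first ingredient, the key observation is that for a \well~sensing matrix one may take the row subset $\mathcal{L}=\mathcal{I}_\Csf$ in Definition~\ref{setinjective}: by \eqref{eq_ws}, $\miA^{(\mathcal{I}_\Csf)}\miR_{\bar{\NN}}(\mif_{1:m})$ equals, up to a unimodular diagonal factor, the steering matrix $\miR_{\NN_\Csf}(\mif^\Csf_{1:m})$ of a $\dd$D-UD of volume $N_\Csf=|\mathcal{I}_\Csf|\le N$. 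Conditions~\textbf{1.C} and~\textbf{2.C} of Definition~\ref{setinjective} would then follow from a \emph{multilevel Vandermonde rank lemma}: for a $\dd$D-UD of volume $M$, any $m\le M$ distinct frequencies yield a full-column-rank steering matrix for almost every choice of those frequencies, since a maximal minor of the associated $M\times m$ multilevel Vandermonde matrix is a non-trivial trigonometric polynomial in the variables $e^{\j 2\pi f^\alpha_k}$ and hence does not vanish a.s. Applying it with $m=2K<N_\Csf$ (under Assumption~\ref{A1} the $\mif^\Csf_k$ are a.s.\ distinct because each $\ell_\alpha>0$) gives~\textbf{1.C}; applying it to the union of the frequencies in $\mif^\star_{1:K}$ and $\mif^\diamond_{1:r^\diamond}$, of total cardinality $K+r^\diamond\le 2K<N_\Csf$, gives the rank-additivity of~\textbf{2.C}. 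Hence $\miA\in\mathcal{A}^{(K)}$ and Theorem~\ref{theoremApi} would establish Conjecture~\ref{conjec1}.

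For the convex part I would mirror the proof of Theorem~\ref{theoremL1}. The true signal $\mis^\star=\miR_{\bar{\NN}}(\mif^\star_{1:K})\miu^\star$ is feasible for \eqref{l1normopt_convex} with $\miT_{\bar{\XX}\bar{\YY}\bar{\ZZ}}=\miR_{\bar{\NN}}(\mif^\star_{1:K})\diag(|u^\star_1|,\dots,|u^\star_K|)\miR^\dagger_{\bar{\NN}}(\mif^\star_{1:K})$, which is a canonical \ac{PSD} $\dd$-LT matrix by Remark~\ref{prettyremark}, so an optimal triple $(t^\circ,\mis^\circ_{\ell_1},\miT^\circ_{\bar{\XX}\bar{\YY}\bar{\ZZ}})$ exists. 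By hypothesis~\emph{(ii)}, $r^\circ_{\ell_1}=\rank\{\miT^\circ_{\bar{\XX}\bar{\YY}\bar{\ZZ}}\}<\min\{N_\Csf/2,\max\{\bar{\Xsf},\bar{\Ysf},\bar{\Zsf}\}\}$; since $\miT^\circ_{\bar{\XX}\bar{\YY}\bar{\ZZ}}$ is canonical \ac{PSD} $\dd$-LT of rank $r^\circ_{\ell_1}<\max\{\bar{\Xsf},\bar{\Ysf},\bar{\Zsf}\}$ whose $W\times W$ upper-left corner has the same rank, Lemma~\ref{PrettyLemma} yields the unique decomposition $\miT^\circ_{\bar{\XX}\bar{\YY}\bar{\ZZ}}=\miR_{\bar{\NN}}(\mif^\circ_{1:r^\circ_{\ell_1}})\miP\miR^\dagger_{\bar{\NN}}(\mif^\circ_{1:r^\circ_{\ell_1}})$, and the \ac{PSD} bordered-matrix constraint forces $\mis^\circ_{\ell_1}$ into the column space of $\miR_{\bar{\NN}}(\mif^\circ_{1:r^\circ_{\ell_1}})$, i.e.\ $\mis^\circ_{\ell_1}$ is an $r^\circ_{\ell_1}$-scatter vector. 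Finally, $\miA(\mis^\star-\mis^\circ_{\ell_1})=\mathbf{0}$; restricting to the rows $\mathcal{I}_\Csf$ and using \eqref{eq_ws} turns this into a linear relation among the at most $K+r^\circ_{\ell_1}<N_\Csf$ distinct steering vectors of a volume-$N_\Csf$ $\dd$D-UD indexed by $\mif^\star_{1:K}\cup\mif^\circ_{1:r^\circ_{\ell_1}}$, which by the multilevel Vandermonde rank lemma must be trivial; hence $\mis^\circ_{\ell_1}=\mis^\star$, $r^\circ_{\ell_1}=K$, and by the uniqueness in Lemma~\ref{PrettyLemma} the extracted frequencies coincide with $\mif^\star_{1:K}$.

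The main obstacle — and the reason this is posed as a conjecture — is that the auxiliary set $\mif^\diamond_{1:r^\diamond}$ in Condition~\textbf{2.C} and the frequencies $\mif^\circ_{1:r^\circ_{\ell_1}}$ returned by \eqref{l1normopt_convex} are arbitrary / data-dependent rather than drawn from Assumption~\ref{A1}, so the ``almost every'' guarantee of the multilevel Vandermonde rank lemma cannot be applied verbatim to the enlarged frequency sets: a union of measure-zero exceptional sets over the uncountable family of candidate configurations need not be negligible. Closing this gap requires either a \emph{structural} (always-valid, not merely generic) multilevel Vandermonde rank bound matching the volume count $N_\Csf>2K$ — which for $\dd\ge 2$ is precisely the quantity by which Conjecture~\ref{conjec1} would improve on the provable bound $S_\Csf-(\dd-1)\ge 2K$ of Corollary~\ref{CorollaryTh1}, and which at present is known only generically — or a delicate fibered measure-zero argument over the joint space of true and candidate frequencies. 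A secondary, more technical point is to verify, for the specific optimizer $\miT^\circ_{\bar{\XX}\bar{\YY}\bar{\ZZ}}$, that its $W\times W$ upper-left corner has rank exactly $r^\circ_{\ell_1}$, as demanded by Lemma~\ref{PrettyLemma}. Pending a resolution of these points, Conjecture~\ref{conjec2} is instead supported by the numerical evidence of Section~\ref{sec:Vb}.
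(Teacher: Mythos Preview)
The paper does not prove this statement: it is explicitly presented as a conjecture, with the only supporting evidence being the numerical experiments of Section~\ref{sec:Vb}. Your proposal is therefore not to be compared against a proof in the paper, because there is none.

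That said, your write-up is appropriate and in fact goes further than the paper. You correctly lay out the natural two-step strategy --- first establishing $\miA\in\mathcal{A}^{(K)}$ for \well\ matrices with $N_\Csf>2K$ via a multilevel Vandermonde rank statement on the embedded $\dd$D-UD of volume $N_\Csf$, then transferring to the $\ell_1$ problem exactly as in the derivation of Theorem~\ref{theoremL1} from Corollary~\ref{CorollaryTh1} --- and you correctly isolate the obstruction: the rank guarantee you need for the multilevel Vandermonde matrix must hold for \emph{arbitrary} (data-dependent) frequency sets $\mif^\diamond_{1:r^\diamond}$ and $\mif^\circ_{1:r^\circ_{\ell_1}}$, not merely for generic ones drawn according to Assumption~\ref{A1}. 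For $\dd\geq 2$ the best available structural bound is the Khatri--Rao rank inequality underlying Corollary~\ref{CorollaryTh1}, which gives only $S_\Csf-(\dd-1)$ rather than $N_\Csf$; upgrading this to a volume-based bound is exactly the open question. Your secondary caveat about verifying the upper-left-corner rank hypothesis of Lemma~\ref{PrettyLemma} for the specific optimizer $\miT^\circ_{\bar{\XX}\bar{\YY}\bar{\ZZ}}$ is also well taken. Your concluding sentence --- that the statement remains a conjecture supported by the numerics of Section~\ref{sec:Vb} --- is precisely the paper's own position.
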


\begin{theorem}
\label{theoremApiProbalisticl1}

Given a linear measurement model as in \eqref{ec_parametrical_noiseless}  with $\miA \in  \mathcal{A}$ being an \textit{arbitrary} binary sensing matrix as defined in \eqref{eq:sensing_set}, under Assumption \ref{A1}, 
	 there exist with probability $1-\epsilon$ a unique $K$--scatter vector $\mis^\star$ as the solution to problem \eqref{l1normopt_convex}, and  a  unique frequency set $\mif^\star_{1:K}\in \T^{d\times K}$ satisfying
	 \eqref{ec_parametrical_noiseless},  provided that,
	 \emph{i)} $ K< \min\{\frac{N}{2}, \max\{\bar{\Xsf}, \bar{\Ysf}, \bar{\Zsf}\}\}$, \emph{ii)} $N\geq 2KC\log\left(2K\epsilon^{-1}\right)$ with $C$ being a proper constant which is not larger than $12$, and \emph{iii)} the optimal solution to \eqref{l1normopt_convex} noted as $\left( t^\circ,  \mis_{\ell_1}^\circ, \miT^\circ_{\bar{\XX}\bar{\YY}\bar{\ZZ}}\right)$ with $\rank \left \{ \miT^\circ_{\bar{\XX}\bar{\YY}\bar{\ZZ}}\right\}\triangleq r_{\ell_1}^\circ$ also satisfies  $r_{\ell_1}^\circ<\min\{\frac{N}{2}, \max\{\bar{\Xsf}, \bar{\Ysf}, \bar{\Zsf}\}\}$.
\end{theorem}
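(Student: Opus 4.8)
The plan is to combine the probabilistic argument of Theorem \ref{theoremApiProbalistic} with the $\rank$ condition on the optimizer of the convex program \eqref{l1normopt_convex}, in the same way Theorem \ref{theoremL1} is obtained from Theorem \ref{theoremApi}. First I would invoke Theorem \ref{theoremApiProbalistic}: under Assumption \ref{A1} and for an arbitrary $\miA\in\mathcal{A}$, if $N\geq 2KC\log(2K\epsilon^{-1})$ then, with probability at least $1-\epsilon$, the sensing matrix falls in the class $\mathcal{A}^{(K)}$ of Definition \ref{setinjective}; this is exactly the event (call it $\mathcal{E}$) that makes the non-convex recovery of Theorem \ref{theoremApi} go through. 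The strategy is then to show that, \emph{on the event} $\mathcal{E}$, the additional hypothesis \emph{iii)} that the convex optimizer $(t^\circ,\mis_{\ell_1}^\circ,\miT^\circ_{\bar{\XX}\bar{\YY}\bar{\ZZ}})$ has $r_{\ell_1}^\circ\triangleq\rank\{\miT^\circ_{\bar{\XX}\bar{\YY}\bar{\ZZ}}\}<\min\{\tfrac{N}{2},\max\{\bar{\Xsf},\bar{\Ysf},\bar{\Zsf}\}\}$ forces $\mis_{\ell_1}^\circ=\mis^\star$ and a unique frequency set.

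The key steps, in order, are as follows. (1) Condition on $\mathcal{E}$, so that Theorem \ref{theoremApi}'s conclusion holds with $K<\max\{\bar{\Xsf},\bar{\Ysf},\bar{\Zsf}\}$ and there is a subset of rows $\mathcal{L}$ witnessing the injectivity property \textbf{1.C} and the rank-additivity property \textbf{2.C}. (2) Observe that any feasible point of \eqref{l1normopt_convex} yields, via the Schur-complement PSD constraint, a canonical \ac{PSD} $d$-LT matrix $\miT_{\bar{\XX}\bar{\YY}\bar{\ZZ}}$ with $\mis$ in its range, and that the optimal $\miT^\circ_{\bar{\XX}\bar{\YY}\bar{\ZZ}}$, by hypothesis \emph{iii)}, has $r_{\ell_1}^\circ< \max\{\bar{\Xsf},\bar{\Ysf},\bar{\Zsf}\}$ and ordered canonical structure, so Lemma \ref{PrettyLemma} applies: $\miT^\circ_{\bar{\XX}\bar{\YY}\bar{\ZZ}}=\miR_{\bar{\NN}}(\mif^\circ_{1:r_{\ell_1}^\circ})\miP^\circ\miR_{\bar{\NN}}^\dagger(\mif^\circ_{1:r_{\ell_1}^\circ})$ with a \emph{unique} frequency set and positive weights. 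Hence $\mis_{\ell_1}^\circ=\miR_{\bar{\NN}}(\mif^\circ_{1:r_{\ell_1}^\circ})\miu^\circ$ for some $\miu^\circ\in\C^{r_{\ell_1}^\circ}$, i.e. $\mis_{\ell_1}^\circ$ is itself a combination of at most $r_{\ell_1}^\circ\leq K$ atoms. (3) Now both $\mis^\star=\miR_{\bar{\NN}}(\mif^\star_{1:K})\miu^\star$ and $\mis_{\ell_1}^\circ$ are $\leq K$-atom vectors satisfying $\miA\mis=\miy$; restricting to the rows $\mathcal{L}$ and using \textbf{1.C}/\textbf{2.C} (the map $\miA^{(\mathcal{L})}\miR_{\bar{\NN}}(\cdot_{1:2K})$ is injective and the relevant steering submatrices have the generic rank-additivity) exactly as in the proof of Theorem \ref{theoremApi} / the argument below \eqref{l0normopt}, we conclude that the union of the two atom sets cannot have more than $K$ distinct frequencies without contradicting feasibility, whence $\mif^\circ_{1:r_{\ell_1}^\circ}\subseteq\mif^\star_{1:K}$ and, matching coefficients through the injective map, $\mis_{\ell_1}^\circ=\mis^\star$, $r_{\ell_1}^\circ=K$, and the frequency set coincides with $\mif^\star_{1:K}$ and is unique. (4) Finally, the frequencies are extracted by applying Algorithm \ref{Alg:PrettyLemma} to $\miT^\circ_{\bar{\XX}\bar{\YY}\bar{\ZZ}}$, which is legitimate precisely because hypothesis \emph{iii)} guarantees the rank bound needed by Lemma \ref{PrettyLemma}. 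Removing the conditioning, the whole event has probability at least $1-\epsilon$, giving hypothesis \emph{i)}'s bound $K<\min\{\tfrac{N}{2},\max\{\bar{\Xsf},\bar{\Ysf},\bar{\Zsf}\}\}$ as stated.

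I expect the main obstacle to be step (3): the non-convex Theorem \ref{theoremApi} recovers the \emph{$\min$-$\rank$} matrix, but here $\miT^\circ_{\bar{\XX}\bar{\YY}\bar{\ZZ}}$ is only guaranteed to be the trace-plus-$t$ minimizer, not a minimum-rank solution, so one must argue separately that its (a priori larger) rank $r_{\ell_1}^\circ$ still cannot exceed $K$ and that no spurious atoms appear. The leverage for this is exactly hypothesis \emph{iii)} together with the rank-additivity condition \textbf{2.C}: if $\mis_{\ell_1}^\circ$ involved a frequency $\mif^\diamond_j\notin\mif^\star_{1:K}$, then $\miA^{(\mathcal{L})}\miR_{\bar{\NN}}$ evaluated on the combined set $\mif^{\star\diamond}_{1:K+r^\diamond}$ would have rank $K+r^\diamond>K$, so the system $\miA^{(\mathcal{L})}\mis=\miy^{(\mathcal{L})}$ could not be satisfied simultaneously by a representation of $\mis^\star$ supported on $\mif^\star_{1:K}$ and by the representation of $\mis_{\ell_1}^\circ$ unless all $\mif^\diamond_j$ collapse onto $\mif^\star_{1:K}$ — this is the crux and must be written carefully, mirroring the corresponding passage in Appendix \ref{appAIW0}. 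Once that is in place, the rest is bookkeeping identical to Theorems \ref{theoremApi} and \ref{theoremL1}, so I would keep the write-up short and reference those proofs rather than re-deriving the Schur-complement and Vandermonde manipulations.
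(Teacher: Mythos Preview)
Your approach is essentially the paper's: invoke Lemma~\ref{Lemma:theoremApiProbalistic} to get $\miA\in\mathcal{A}^{(K)}$ with probability $1-\epsilon$ under condition \emph{ii)}, then rerun the argument of Theorem~\ref{theoremApi} (Appendix~\ref{appAIW0}) on the $\ell_1$ optimizer, using hypothesis \emph{iii)} in place of the $\min$-$\rank$ property. This is correct.

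One caution on your step~(2): you cannot apply Lemma~\ref{PrettyLemma} to $\miT^\circ_{\bar{\XX}\bar{\YY}\bar{\ZZ}}$ at that stage, because Lemma~\ref{PrettyLemma} requires not only $r_{\ell_1}^\circ<\max\{\bar{\Xsf},\bar{\Ysf},\bar{\Zsf}\}$ but also that the $\bar{\Zsf}\times\bar{\Zsf}$ upper-left corner $\miT^\circ_{00\bar{\ZZ}}$ has the \emph{same} rank $r_{\ell_1}^\circ$, and that is not yet known. In Appendix~\ref{appAIW0} the order is: first use Lemma~\ref{rottaimmensamenteL} (not Lemma~\ref{PrettyLemma}) to write $\mis^\circ_{\ell_1}$ as a combination of $r_{\ell_1}^\circ$ steering vectors via the Schur complement; then use \textbf{1.C}/\textbf{2.C} (your step~(3)) to force $\mis^\circ_{\ell_1}=\mis^\star$ and $r_{\ell_1}^\circ=K$; and only \emph{after} that establish $r^\circ_{\bar{\Zsf}}=K$ so that Lemma~\ref{PrettyLemma} legitimately yields the unique Vandermonde decomposition. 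Since you already plan to mirror Appendix~\ref{appAIW0}, just be sure to keep this ordering.
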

\begin{proof}  
	The proof is similar to the proof of Theorem \ref{theoremApiProbalistic} except that following Lemma~\ref{Lemma:theoremApiProbalistic}, we exploit that $\miA\in\mathcal{A}^{(K)}$ holds with probability $1-\epsilon$ under condition \emph{ii)} $N\geq 2KC\log\left(2K\epsilon^{-1}\right)$.
\end{proof}

\subsection{Improved convex relaxation via $\ell_2+\ell_1$-\ac{AN} optimization}

%{\BLUE 
In this section, we highlight some shortcomings of the convex recovery approach in \eqref{l1normopt_convex} and formulate an alternative convex optimization to address these issues. 
%1) First note that the conditions in  Th. \ref{theoremL1}} for resolvability and error free recovery of the set of frequencies identifying \ac{AoA}  do not state a direct relation between the system parameters, i.e., the structure of the virtual array $\NNbar=[\bar{\Xsf},\bar{\Ysf},\bar{\Zsf}]$ together with the number of active antennas $N$, and the number of scatters to be recovered $K$. Conversely, it ensures the error--free recovery under a $\rank$ condition on the solution to the optimization problem. This implies that there will be signal--frequency pairs $\left(\miu^\star,\mif^\star_{1:K}\right)\in \C^K\times\T^{d\times K}$ that will not have error--free recovery. 
%This error will be characterized by simulation in Sec. \ref{Sec:VNoiseless}.
\emph{i)} Recall that the recovery problem \eqref{l1normopt_convex} is formulated under a noise-free assumption.  \emph{ii)} The conditions in  Th. \ref{theoremL1}  guarantee an error--free recovery under a $\rank$ condition on the solution to the optimization problem which ensures a unique Vandermonde decomposition of $\miT^\circ_{\bar{\XX}\bar{\YY}\bar{\ZZ}}$ that assures the  sparse nature of the minimizer 
$\mis_{\ell_1}^\circ$ solution to
\eqref{l1normopt_convex}\footnote{Recall, in fact,  that $\ell_1$-atomic norm minimizers are not always sparse.} .  
%The reason for which $\ell_1$-atomic norm minimizers are not always sparse corresponds to the fact that \eqref{l1normopt_convex} does not always provides  a lower bound for the $\ell_1$-atomic norm, and  therefore, it may occur that \eqref{l1normopt_convex} provides a solution that consists of more than $K$ frequencies. %Consequently, we aim for a $\miT^\circ_{\bar{\XX}\bar{\YY}\bar{\ZZ}}$ and  $\mis_{\ell_1}^\circ$  not reaching the minimum value of the cost function in \eqref{l1normopt_convex}, but for a  $\miT^\circ_{\bar{\XX}\bar{\YY}\bar{\ZZ}}$ and  $\mis_{\ell_1}^\circ$ which reach enough small 
%In this case, if $K$ was known, one could enforce the sparsity order
%at the cost not reaching the minimum value of the cost function in \eqref{l1normopt_convex} under the 
%relaxing the measurement constraint  $\miA\mis= \miy$ as discussed for the noisy case. But if $K$ is unknown, as in most of the application a possibility  3) Suppose there are a large number of scatters; but only a few of them are dominant, i.e., few $|u_k|$ are large. In such scenario, one can identify the \textit{dominant} frequency components again if the measurement constraint  $\miA\mis= \miy$ is relaxed.  } 
%the optimization problem should find the vector $\mis$ with $\miA\mis$ as close as possible to $\miy$ to extract the relevant frequency parameters from vector $\mis$. 

To address the aforementioned challenges, we formulate a convex recovery problem as the extension of \eqref{l1normopt_convex} by formulating a convex combination, parameterized in a regularization parameter $\tau$, of the the distance between $\miA\mis$ and $\miy$ (i.e $\ell_2$-norm of the difference  $\miy -\miA\mis$) and the $\ell_1$-norm of $\mis$, i.e.,
\begin{opteq}
	\begin{aligned}
		\min_{t,  \mis \in \C^{\bar{N}}, \miT_{\bar{\XX}\bar{\YY}\bar{\ZZ}} \in {\mathcal{T}_{\bar{\Xsf}\bar{\Ysf}\bar{\Zsf}}} }  \!\! &(1-\tau)\|  \miA \mis -  \miy \|_2^2+\tau \left(\frac{1}{2}t+\frac{1}{2}\tr\left\{\miT_{\bar{\XX}\bar{\YY}\bar{\ZZ}}\right\}\right)  \quad 
		{\rm s.t.} 		&\begin{bmatrix}\miT_{\bar{\XX}\bar{\YY}\bar{\ZZ}} & \mis\\ \mis^\H & t\end{bmatrix}\succeq 0.
	\end{aligned}
	\label{eq:SDPnoisy}
\end{opteq}
where $\tau$ in \eqref{eq:SDPnoisy}, can be properly optimized to minimize the average frequency recovery error. Let $\tau_o$ be such optimal value. Clearly $\tau_o$ is function of the $\snr$ defined as 
$\snr= \frac{E[\|\miu^*\|^2]}{\sigma^2}$.
Unfortunately, an explicit expression for $\tau_o$ is hard to derive. However,   upper and lower bounds of the optimal  $\tau$  have been provided in \cite{Bhaskar13} as a function of $\snr$:
{\small$$
\hspace{-1cm}\tau_u=\frac{2\sigma\left(1+\frac{1}{\log(N)}\right)\sqrt{N\log(N)+N\log(4\pi\log(N))}}{1+2\sigma\left(1+\frac{1}{\log(N)}\right)\sqrt{N\log(N)+N\log(4\pi\log(N))}},\quad
\tau_l=\frac{2\sigma\sqrt{N\log(N)-\frac{N}{2}\log(4\pi\log(N))}}{1+2\sigma\sqrt{N\log(N)-\frac{N}{2}\log(4\pi\log(N))}}
%\label{taus}
$$
}
with $\tau_l\leq\tau_o\leq\tau_u$.

Using classical tools from noise sparse representation, bounds on the frequency recovery errors  can be derived  \cite{CSbook}. Due to space constraints, we do not include these generalizations in this paper and leave this topic as a potential direction for future research. Instead, in Section~\ref{sec:V}, we investigate the performance of the convex recovery problem in \eqref{eq:SDPnoisy} via simulation. 

\section{ 
Application scenarios and frequency recovery performance
}
\label{sec:V}
%\subsection{Application scenarios}
%\label{sec:Va}

The definition of \well~sensing matrices (i.e., Definition \ref{def:wellstruct}) provides an operational characterization of Definition 
\ref{setinjective} in the sense that Definition \ref{def:wellstruct}
identifies simple parameters that allow establishing rules that facilitate verifying the conditions needed for signal recovery uniqueness (conditions 1.C and 2.C in Section \ref{subsect:mavafanculo}). 
The way in which the definition of \well~sensing matrices translates into the physical structure of an arbitrary non-uniform $3$D array (referred from now on as \emph{physical array}) is in finding an underlying uniform antenna structure, embedded in the non-uniform array (see examples in sections \ref{main_results} and \ref{sec:Vb}) and its associated geometric parameters whose relationship with the number of scatters guarantee the possibility of signal recovery. More precisely, given an arbitrary non-uniform physical array, in order to identify its resolvable region, i.e., the number of scatters that can be resolved,
the first step is to find a uniform sub-array which corresponds to the largest uniform structure embedded in the original non-uniform deployment, characterized by $\NN_\Csf=[\Xsf_\Csf,\Ysf_\Csf,\Zsf_\Csf]$, $S_\Csf=\Xsf_\Csf+\Ysf_\Csf+\Zsf_\Csf$ and $N_\Csf=\Xsf_\Csf\Ysf_\Csf\Zsf_\Csf$, where $\Xsf_\Csf$,$\Ysf_\Csf$,$\Zsf_\Csf$ represent the number of antennas in each dimension. Based on these physical parameters, we prove (see  Corollary \ref{CorollaryTh1} and Theorem \ref{theoremL1}) that any number of scatters smaller or equal than $(S_\Csf - (d-1))/2$  can be resolved.  We strongly conjecture, as validated by our numerical results (see Section \ref{sec:Vb}),
that the resolvable region is in fact $N_\Csf/2$ (see Conjectures \ref{conjec1}-\ref{conjec2}).
%with complexity not higher than $O(2K)$. In fact, setting a geometric deployment of
%the array such that$\Zsf_\Csf= 2K$, and $\Ysf_\Csf=2$ and $\Xsf_\Csf\leq 2$, we have that $(S_\Csf - (d-1))/2 >K$.
While evaluating the resolvable region requires identifying an embedded uniform sub-array, in order to recover the multi-dimensional propagation parameters, we also need to identify an encompassing virtual uniform array, i.e., an expanded virtual uniform structure that contains the original physical structure as a sub-array, characterized by $\bar{\NN}=[\bar{\Xsf},\bar{\Ysf},\bar{\Zsf}]$ and $\bar{N}=\bar{\Xsf}\bar{\Ysf}\bar{\Zsf}$ with $\bar{\Xsf}$,$\bar{\Ysf}$,$\bar{\Zsf}$ denoting the number of antennas in each dimension, with $\bar{\Zsf}$ the dominant dimension by convention. 
A proper identification of the virtual uniform array allows recovering the propagation parameters of a given number of possible scatters up to the physical limit  $K \leq (S_\Csf -(d-1))/2$ (or $K <N/2-1$ as we conjecture)
if $\bar{\Zsf}> K$.
Since the  dimensions of the virtual array influence the computational complexity of the proposed recovery algorithms,
%formulated in terms of $\ell_0$- and $\ell_1$- norm -based optimization problems as in
%Therefore,  
they can be seen as tunable parameters that enable trading computational complexity with frequency-recovery performance.
Furthermore, as already pointed in Section \ref{Sec:Contributions}, 
our results provide guidelines for the design of the non-uniform array geometry. In fact, setting a geometric deployment of
the array such that $\Zsf_\Csf= 2K$, $\Ysf_\Csf=2$ and $\Xsf_\Csf\leq 2$, and given that the complexity of \eqref{l1normopt_convex} and \eqref{eq:SDPnoisy}  scales 
as $\gamma=3.5$  \cite{Krishnan2005} with the dimension of the problem variables, we have that up to  $K =(S_\Csf - (d-1))/2$ scatters  can be resolved with  $O(K^{\gamma})$ complexity, significantly improving existing results that require $O(K^{\dd\gamma})$ complexity. 

{
}

\subsection{Resolvable region and recovery performance}
\label{sec:Vb}
We provide next some numerical results to show the resolvable region and the frequency recovery performance of $\ell_0$-AN and $\ell_1$-AN for different $2$D and $3$D physical antenna deployments. Given a set of measurements following \eqref{ec_parametrical}, the error $\frac{1}{d}\|\mif^\circ_{1:K}-\mif^\star_{1:K}\|_{1}$ between the recovered $d$-dimensional frequencies $\mif^\circ_{1:K}$ and the real frequencies $\mif^\star_{1:K}$  
is computed and presented both for noiseless and noisy scenarios. In all the results following, the number of scatters $K$ is assumed to be known. %In all of the antenna deployments, the physical parameters that characterize the resolvable region $S_\Csf$ and $N_\Csf$ are identified, and also several virtual enlargements with $\miA=\big[\miI_N|{\bf 0}_{N\times (\bar{N}-N)}\big] {\bf \Pi}\in \mathcal{A}_\Csf$ are proposed to reach the resolvable region and evaluate the accuracy of the $\dd$ dimensional parameter recovery. 
%Each measurement vector is obtained by randomly generating a set of $K$ $d$-dimensional frequencies and complex signal values $\miu$ normalized in power. For the noisy scenario each complex noise sample is generated randomly following a zero--mean Gaussian distribution with $\sigma^2=\snr^{-1}$, where $\snr$ is the \ac{SNR}.

\begin{table}[t]
\begin{center}
{\scriptsize
\begin{tabular}{l l c c}
\toprule
\quad \textbf{Physical array} & \quad \quad\textbf{Physical parameters} & Corollary \ref{CorollaryTh1}  & Conjecture \ref{conjec1}  \\
 &  & $K\leq \bigg\lfloor
%\textbf{array} &\textbf{parameters} & $K\leq \bigg\lfloor 
\frac{S_\Csf-(\dd-1)}{2}\bigg\rfloor$ & $K\leq\bigg\lceil \frac{N_\Csf}{2}-1\bigg\rceil$ \\
\midrule
$2$D-UD& $\NN=\NN_\Csf=\left[1,3,6\right]$. $S_\Csf=10$, $N_\Csf=18$ &  $K\leq 4$ & $K\leq 8$ \\
%\midrule
%$2$D-UD&$\NN=\NN_\Csf=\left[1,6,6\right]$, $S_\Csf=13$, $N_\Csf=36$ &  $K\leq 5$ & $K\leq 17$ \\
\midrule
$3$D-AD cubic& $\NN_\Csf=\left[2,4,4\right]$, $S_\Csf=10$, $N_\Csf=32$ &  $K\leq 4$ & $K\leq 15$ \\
\bottomrule
\end{tabular}}
\caption{Resolvable region under Corollary \ref{CorollaryTh1} and Conjecture \ref{conjec1} for the physical array configurations explored.} 
\label{resolvable}
\end{center}
\vspace{-1.5cm}
\end{table}

\subsubsection{Noiseless measurements}
\label{Sec:VNoiseless}

\begin{figure}[htbp]
%\hspace{-1cm}
%\centering
     \begin{subfigure}[]{\centering% This file was created by matlab2tikz.
%
%The latest updates can be retrieved from
%  http://www.mathworks.com/matlabcentral/fileexchange/22022-matlab2tikz-matlab2tikz
%where you can also make suggestions and rate matlab2tikz.
%
\definecolor{mycolor1}{rgb}{1.00000,0.00000,1.00000}%
\definecolor{mycolor2}{rgb}{0.00000,1.00000,1.00000}%
\begin{tikzpicture}

\begin{axis}[%
width=0.8\figurewidth,
height=0.65\figureheight,
at={(0\figurewidth,0\figureheight)},
scale only axis,
unbounded coords=jump,
xmin=1,
xmax=10,
xlabel style={font=\color{white!15!black}},
xlabel={\small $K$},
ymin=0,
ymax=0.05,
ylabel style={font=\color{white!15!black}},
ylabel={$\error$},
axis background/.style={fill=white},
        legend columns=3, 
        legend style={
                    % the /tikz/ prefix is necessary here...
                    % otherwise, it might end-up with `/pgfplots/column 2`
                    % which is not what we want. compare pgfmanual.pdf
            /tikz/column 3/.style={
                column sep=5pt,
            },
        },
legend style={at={(axis cs:9, -0.0125)},anchor=north east, legend cell align=left, align=right, font=\tiny, draw=none}
]

\addplot [color=blue, dashed, line width=1.0pt, mark=square, mark options={solid, blue}]
  table[row sep=crcr]{%
1	5.55111512312578e-19\\
2	2.7248712730632e-12\\
3	1.10532457479421e-09\\
4	2.44839834351079e-05\\
5	3.68739567791051e-05\\
6	0.0648519921995414\\
};

%\addlegendentry{$\bar{\boldsymbol{\mathsf N}}=\left[1,3,6\right]$, \LemmaUno}
\addlegendentry{\LemmaUno}

\addplot [color=blue, dashdotted, line width=1.0pt, mark=triangle, mark options={solid, blue}]
  table[row sep=crcr]{%
1	3.33066907387547e-19\\
2	3.05258090924077e-06\\
3	0.0001685453336079\\
4	0.000270457795752786\\
5	0.000196956668183776\\
6	0.109548856718383\\
};
%\addlegendentry{$\bar{\boldsymbol{\mathsf N}}=\left[1,3,6\right]$, \LZero}
\addlegendentry{ \LZero}

\addplot [color=blue, line width=1.0pt, mark=o, mark options={solid, blue}]
  table[row sep=crcr]{%
1	3.33066907387547e-19\\
2	0.00311369241681917\\
3	0.0119410469315152\\
4	0.0311139637974405\\
5	0.0572354208895527\\
6	0.109548856718383\\
};
%\addlegendentry{$\bar{\boldsymbol{\mathsf N}}=\left[1,3,6\right]$, \LUno}
\addlegendentry{\LUno,    $\bar{\boldsymbol{\mathsf N}}=\left[1,3,6\right]$}

%\addplot [color=red, line width=1.0pt, mark=o, mark options={solid, red}]
%  table[row sep=crcr]{%
%1	5.55111512312578e-19\\
%2	0.00312013160629289\\
%3	0.0101349248865395\\
%4	0.0276574222876919\\
%5	0.0542450774206515\\
%6	0.10377959808562\\
%};
%\addlegendentry{$\bar{\boldsymbol{\mathsf N}}=\left[6,6,1\right]$, \LUno}
%
%\addplot [color=red, dashed, line width=1.0pt, mark=square, mark options={solid, red}]
%  table[row sep=crcr]{%
%1	6.66133814775094e-19\\
%2	1.33103267296875e-12\\
%3	3.12119341483452e-12\\
%4	1.72649987756901e-08\\
%5	1.2331849336447e-06\\
%6	0.0639075109813407\\
%};
%\addlegendentry{$\bar{\boldsymbol{\mathsf N}}=\left[6,6,1\right]$, \LZero}
%
%\addplot [color=red, dashdotted, line width=1.0pt, mark=triangle, mark options={solid, red}]
%  table[row sep=crcr]{%
%1	5.55111512312578e-19\\
%2	2.52183938386429e-09\\
%3	5.56654505693303e-06\\
%4	0.00022297906878994\\
%5	0.00048036528673055\\
%6	0.10377959808562\\
%};
%\addlegendentry{$\bar{\boldsymbol{\mathsf N}}=\left[6,6,1\right]$, \LemmaUno}

\addplot [color=red, dashed, line width=1.0pt, mark=square, mark options={solid, red}]
  table[row sep=crcr]{%
1	2.1094237467878e-18\\
2	1.20301016037772e-10\\
3	4.31910684994818e-12\\
4	7.62011560101472e-10\\
5	8.90791113472567e-10\\
6	7.63599581797464e-08\\
7	8.15373036979254e-07\\
8	0.000131628097297838\\
9	0.00030590871476954\\
10	0.0476207756012005\\
};
%\addlegendentry{$\bar{\boldsymbol{\mathsf N}}=\left[1,3,10\right]$, \LemmaUno}
\addlegendentry{\LemmaUno}

\addplot [color=red, dashdotted, line width=1.0pt, mark=triangle, mark options={solid, red}]
  table[row sep=crcr]{%
1	6.66133814775094e-19\\
2	2.3385375070356e-06\\
3	1.13119150243288e-05\\
4	0.000226795123952985\\
5	0.000238667195263146\\
6	0.000575130627813619\\
7	0.00080718340885654\\
8	4.67960895728092e-08\\
%9	nan\\
10	0.109875786410678\\
};
%\addlegendentry{$\bar{\boldsymbol{\mathsf N}}=\left[1,3,10\right]$, \LZero}
\addlegendentry{ \LZero}

\addplot [color=red, line width=1.0pt, mark=o, mark options={solid, red}]
  table[row sep=crcr]{%
1	6.66133814775094e-19\\
2	0.00224740283490692\\
3	0.0080099528763727\\
4	0.0227129725433205\\
5	0.0396725383666266\\
6	0.0614658618449876\\
7	0.0799560230417046\\
8	0.0943441482142771\\
9	0.102399728189542\\
10	0.109875786410678\\
};
\addlegendentry{\LUno, $\bar{\boldsymbol{\mathsf N}}=\left[1,3,10\right]$}

\addplot [color=black, dashed, line width=1.0pt, mark=square, mark options={solid, black}]
  table[row sep=crcr]{%
1	1.88737914186277e-18\\
2	4.91950924441653e-15\\
3	2.39833260658647e-12\\
4	5.5925783473687e-10\\
5	8.53653697974188e-10\\
6	1.88139970123357e-06\\
7	1.76428800446505e-05\\
8	1.76663131851392e-05\\
9	0.000368285514343347\\
10	0.0468473948985402\\
};
%\addlegendentry{$\bar{\boldsymbol{\mathsf N}}=\left[1,8,10\right]$, \LemmaUno}
\addlegendentry{\LemmaUno}

\addplot [color=black, dashdotted, line width=1.0pt, mark=triangle, mark options={solid, black}]
  table[row sep=crcr]{%
1	2.22044604925031e-19\\
2	1.38661203947941e-06\\
3	3.92327537278559e-05\\
4	3.90904419628198e-05\\
5	0.000268687530387205\\
%6	0.00198525066771479\\
6	0.00068525066771479\\
7	4.3701710977951e-05\\
%8	0.00073316315317792\\
8	0.00263316315317792\\
%9	0.00083316315317792\\
%9	nan\\
10	0.104467977401398\\
};
%\addlegendentry{$\bar{\boldsymbol{\mathsf N}}=\left[1,8,10\right]$, \LZero}
\addlegendentry{\LZero}

\addplot [color=black, line width=1.0pt, mark=o, mark options={solid, black}]
  table[row sep=crcr]{%
1	2.22044604925031e-19\\
2	0.00243707938537063\\
3	0.00799710058841188\\
4	0.0189023855358993\\
5	0.0364724186842356\\
6	0.0524215759368455\\
7	0.0713922762091823\\
8	0.0828835324564031\\
9	0.0911607589585119\\
10	0.104467977401398\\
};
\addlegendentry{\LUno, $\bar{\boldsymbol{\mathsf N}}=\left[1,8,10\right]$}

%\addplot [color=black, line width=1.0pt, mark=o, mark options={solid, black}]
%  table[row sep=crcr]{%
%1	5.55111512312578e-19\\
%2	0.00225711898995935\\
%3	0.00837264129150591\\
%4	0.0193864051951873\\
%5	0.0371242894551909\\
%6	0.0552772570401753\\
%7	0.0754099215646831\\
%8	0.0891916482623291\\
%9	0.105682769897958\\
%};
%\addlegendentry{$\bar{\boldsymbol{\mathsf N}}=\left[1,9,9\right]$, \LUno}
%
%\addplot [color=black, dashed, line width=1.0pt, mark=square, mark options={solid, black}]
%  table[row sep=crcr]{%
%1	0\\
%2	3.48462879662392e-11\\
%3	4.17184808920723e-12\\
%4	5.27540988670694e-10\\
%5	9.99098364304897e-09\\
%6	8.71354180988908e-06\\
%7	4.66957547858433e-05\\
%8	0.000117234864100349\\
%9	0.0500483400117117\\
%};
%\addlegendentry{$\bar{\boldsymbol{\mathsf N}}=\left[1,9,9\right]$, \LemmaUno}
%
%\addplot [color=black, dashdotted, line width=1.0pt, mark=triangle, mark options={solid, black}]
%  table[row sep=crcr]{%
%1	5.55111512312578e-19\\
%2	2.53330649414084e-05\\
%3	7.30399781955022e-06\\
%4	3.21516946990469e-05\\
%5	0.000282184226995598\\
%6	0.000511561985532309\\
%7	4.06205712276487e-06\\
%8	0.00101962985916506\\
%%8	0.00801962985916506\\
%9	0.105682769897958\\
%};
%\addlegendentry{$\bar{\boldsymbol{\mathsf N}}=\left[1,9,9\right]$, \LZero}

\end{axis}
\end{tikzpicture}%}  
     \end{subfigure}
     \hfill
     \begin{subfigure}[]{\centering% This file was created by matlab2tikz.
%
%The latest updates can be retrieved from
%  http://www.mathworks.com/matlabcentral/fileexchange/22022-matlab2tikz-matlab2tikz
%where you can also make suggestions and rate matlab2tikz.
%
\begin{tikzpicture}

\begin{axis}[%
width=0.8\figurewidth,
height=0.65\figureheight,
at={(0\figurewidth,0\figureheight)},
scale only axis,
xmin=1,
xmax=6,
xlabel style={font=\color{white!15!black}},
xlabel={\small $K$},
ymin=0,
ymax=0.05,
ylabel style={font=\color{white!15!black}},
ylabel={$\error$},
axis background/.style={fill=white},
        legend columns=3, 
        legend style={
                    % the /tikz/ prefix is necessary here...
                    % otherwise, it might end-up with `/pgfplots/column 2`
                    % which is not what we want. compare pgfmanual.pdf
            /tikz/column 3/.style={
                column sep=5pt,
            },
        },
legend style={at={(axis cs:5.4, -0.0175)},anchor=north east, legend cell align=left, align=right, font=\tiny, draw=none}
]

\addplot [color=blue, dashed, line width=1.0pt, mark=square, mark options={solid, blue}]
  table[row sep=crcr]{%
1	2.22044604925031e-18\\
2	1.79833925528783e-14\\
3	2.83740524902814e-12\\
4	0.106755699647375\\
};
%\addlegendentry{$\bar{\boldsymbol{\mathsf N}}=\left[4,4,4\right]$, \LemmaUno}
\addlegendentry{\LemmaUno}

\addplot [color=blue, dashdotted, line width=1.0pt, mark=triangle, mark options={solid, blue}]
  table[row sep=crcr]{%
1	2.22044604925031e-18\\
2	6.25087404819436e-10\\
3	4.05502374560217e-09\\
4	0.11606693862757\\
};
%\addlegendentry{$\bar{\boldsymbol{\mathsf N}}=\left[4,4,4\right]$, \LZero}
\addlegendentry{\LZero}

\addplot [color=blue, line width=1.0pt, mark=o, mark options={solid, blue}]
  table[row sep=crcr]{%
1	2.22044604925031e-18\\
2	0.00217522044483404\\
3	0.0130476640349367\\
4	0.11606693862757\\
};
\addlegendentry{\LUno, $\bar{\boldsymbol{\mathsf N}}=\left[4,4,4\right]$}

\addplot [color=red, dashed, line width=1.0pt, mark=square, mark options={solid, red}]
  table[row sep=crcr]{%
1	2.96059473233375e-18\\
2	9.78395142681165e-14\\
3	2.94921766020985e-12\\
4	1.72280065028711e-10\\
5	8.95944781179973e-10\\
6	0.0963325805155264\\
};
%\addlegendentry{$\bar{\boldsymbol{\mathsf N}}=\left[6,6,6\right]$, \LemmaUno}
\addlegendentry{\LemmaUno}

\addplot [color=red, dashdotted, line width=1.0pt, mark=triangle, mark options={solid, red}]
  table[row sep=crcr]{%
1	4.44089209850063e-18\\
2	7.79936632253373e-11\\
3	1.49854487039228e-08\\
4	9.46309379518834e-05\\
5	5.98959200256748e-08\\
6	0.113513540772132\\
};
%\addlegendentry{$\bar{\boldsymbol{\mathsf N}}=\left[6,6,6\right]$,  \LZero}
\addlegendentry{\LZero}
\addplot [color=red, line width=1.0pt, mark=o, mark options={solid, red}]
  table[row sep=crcr]{%
1	4.44089209850063e-18\\
2	0.00269080682907667\\
3	0.00767417950869976\\
4	0.0167391816766983\\
5	0.0327964165119957\\
6	0.113513540772132\\
};
%\addlegendentry{$\bar{\boldsymbol{\mathsf N}}=\left[6,6,6\right]$, \LUno}
\addlegendentry{\LUno, $\bar{\boldsymbol{\mathsf N}}=\left[6,6,6\right]$}

\end{axis}
\end{tikzpicture}%}
         \end{subfigure}
         \vspace{-0.4cm}
       \caption{Frequency average $\ell_1$ recovery error \emph{vs} $K$ for noiseless measurements.  a)  $2$D-UD planar array with  $\NN=\NN_\Csf=\left[1,3,6\right]$, $S_\Csf=10$, $N_\Csf=18$. b) Cubic $3$D-AD array with $4\times  4$ antennas in the cube faces, $\NN_\Csf=\left[2,4,4\right]$, $S_\Csf=10$, $N_\Csf=32$. }
        \LABFIG{fig:error_noiseless}\end{figure}
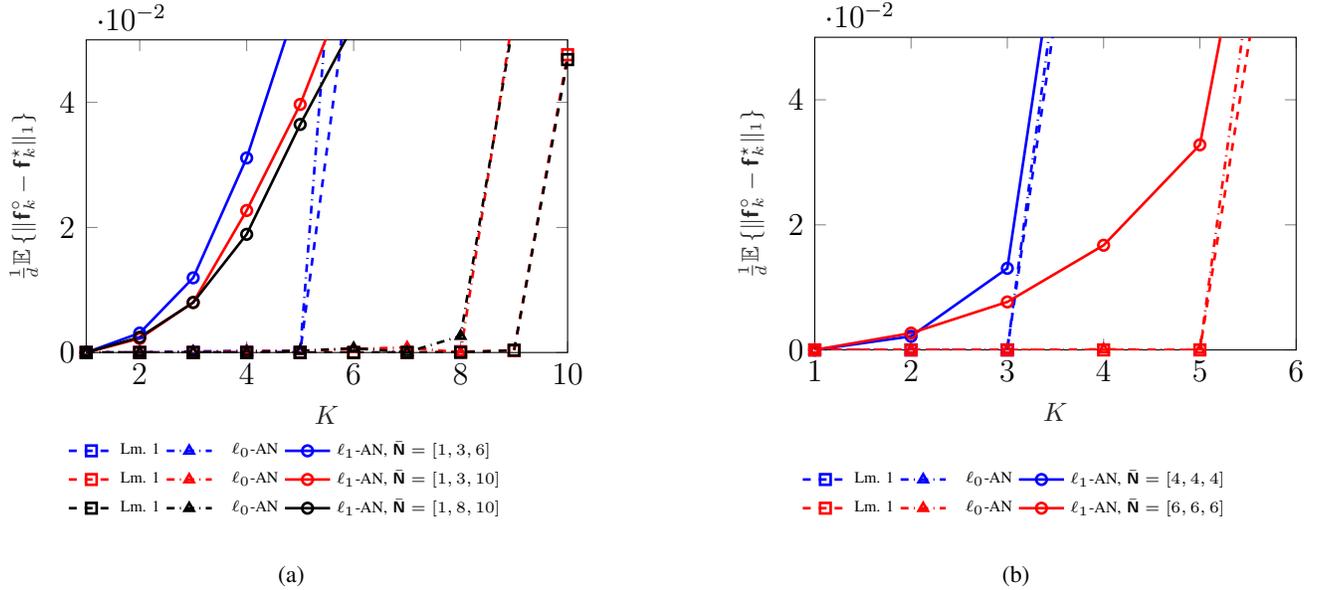

We consider a $2$D-UD physical planar array (uniform structure) and a $3$D-AD physical cubic deployment (non-uniform structure). The $2$D-UD array is characterized by $\NN=\left[1,3,6\right]$ and here the largest underlying uniform subarray coincides with the physical array, i.e., $\NN=\NN_\Csf$, then $S_\Csf=10$ and  $N_\Csf=18$. %, and $N=N_\Csf=18$, $N=N_\Csf=36$. 
In order to reach all the points in the resolvable region, the $2$D-UD is enlarged with three virtual $2$D-UD: $\bar{\NN}=\left\{\left[1,3,6\right],\left[1,3,10\right],\left[1,8,10\right]\right\}$. % for the $\NN=\left[1,3,6\right]$ and $\bar{\NN}=\left\{\left[1,6,6\right],\left[1,6,8\right],\left[1,6,10\right]\right\}$ for the $\NN=\left[1,6,6\right]$ scenario. 
Similarly the $3$D-AD physical cubic deployment has $4\times 4$ active antennas deployed uniformly in each face of a cube and no antennas are deployed inside the cube. The $3$D-AD has a total of $N=56$ active antennas and its largest underlying $3$D-UD subarray embedded in the non-uniform structure, is given by $\NN_\Csf=[2,4,4]$, with $S_\Csf=10$ and $N_\Csf=32$. Two $3$D-UD virtual deployments are considered for the cubic deployment: $\bar{\NN}=\left[4,4,4\right]$ with $\bar{N}=64$ and $\bar{\NN}=\left[6,6,6\right]$ with $\bar{N}=216$. %The parameters in \eqref{def:wellstruct}, that by means of the sensing matrix relate the virtual enlargement with the physical array and the subarray are: $\m_x=\m_y=\m_z=0$, and $\mif^{\Csf}=\left[3f^x,f^y,f^z\right]^\top$; also for the first virtual deployment $\mathcal{I}_\Csf=\{1:16,49-64\}$ and for the second deployment $\mathcal{I}_\Csf=\{1:4,7:10,13:16,19:22,109:112,115:118,121:124,127:130\}$. %{\RED TO REMOVE: Finally, for the $3$D-AD cubic array, we are dealing with $d=3$ frequency sets, i.e. $\mif=\left[f^x,f^y,f^z \right]^\top\in\T^3$. } 

For the aforementioned physical deployments we evaluate, over $1000$ measurements, \emph{i)} the recovery performance and uniqueness of the decomposition proposed in Lemma \ref{PrettyLemma}, \emph{ii)} the resolvable region and error free recovery conditions of the $\ell_0$-AN, and \emph{iii)} the performance of the full $\ell_1$-AN optimization (\ref{l1normopt_convex}). The results are shown in \FIG{fig:error_noiseless}(a-b). While the plots confirm Corollary \ref{CorollaryTh1} resolvability region, they also validate Conjecture \ref{conjec1}. Refer to Table \ref{resolvable} for more detail on the resolvable region. In \FIG{fig:error_noiseless}(a-b), the plots labelled as \LemmaUno~ correspond to  the frequency error between the real frequencies $\mif^\star_{1:K}$ and the recovered $d$-dimensional frequencies $\mif^\circ_{1:K}$ obtained via Vandermonde decomposition  applying 
Algorithm \ref{Alg1} directly to 
$\miS_{\bar{\Xsf}\bar{\Ysf}\bar{\Zsf}}=\sum_{k=1}^K|u_k|^2\mir_{\bar{\NN}}(\mif_k)\mir_{\bar{\NN}}(\mif_k)^\dagger$. \FIG{fig:error_noiseless}(a-b) show that, as stated in Lemma \ref{PrettyLemma},  as long as the condition  $K< \max\{\bar{\Xsf}, \bar{\Ysf}, \bar{\Zsf}\}$ holds (i.e., the  Vandermonde decomposition is unique), the frequencies can be  error-free recovered. The performance of $\ell_0$-AN in \eqref{l0normopt_rank} is evaluated in \FIG{fig:error_noiseless}(a-b) labelled as \LZero. It is shown in all the cases that we have error-free recovery as long as the number of scatters falls within the physical resolvable region (see Table \ref{resolvable}) and $ K<\max\{\bar{\Xsf}, \bar{\Ysf}, \bar{\Zsf}\}$. Finally, the performance of the frequency recovery based on $\ell_1$-AN optimization (\ref{l1normopt_convex}) is also given in \FIG{fig:error_noiseless}. Enlarging the virtual array, in both  uniform and non-uniform settings, improves the recovery performance of the $\ell_1$-AN without additional hardware cost but with an increase of the computational complexity. 

\subsubsection{$\ell_2+\ell_1$-\ac{AN} optimization for noiseless measurements}

Next, we evaluate the performance of the $\ell_2+\ell_1$-\ac{AN} optimization problem in \eqref{eq:SDPnoisy} for noiseless measurements. The results are shown in \FIG{fig:error_noisy_cube}(a) for $\miA=\miA_{\sf U}$, $\NN=\NN_\Csf=\left[1,6,6\right]$,  $\bar{\NN}=\{\left[1,6,6\right],\left[1,6,8\right]\}$ and $K=\{4,5,6\}$. Note that the number of scatters in all cases lay within the lower bound of the resolvable region given in Table \ref{resolvable} and also $K< \max\{\bar{\Xsf}, \bar{\Ysf}, \bar{\Zsf}\}=\bar{\Zsf}$.
It is observed, that as the $\ell_1$-AN error increases the joint $\ell_2+\ell_1$-\ac{AN} approach 
significantly outperforms $\ell_1$-AN, providing a smaller error than the one provided for the virtually enlarged array if $\tau$ is properly optimized. For example for $K=\{5,6\}$ the $\bar{\NN}=\{\left[1,6,8\right]\}$ scenario significantly outperforms the $\ell_1$-AN.

%\begin{figure*}
%     \begin{subfigure}[]{\input{figures/Neff6x6_noisy_tau.tex}}  
%     \end{subfigure}
%     \begin{subfigure}[]{\input{figures/Neff6x6_noisy_SNR.tex}}
%         \end{subfigure}
%       \caption{Frequency average $\ell_1$ recovery error for noisy measurements in a $2$D-UD planar array with $\miA=\miA_{\sf U}$, $\NN=\NN_\Csf=\left[1,6,6\right]$ and $\bar{\NN}=\{\left[1,6,6\right],\left[1,6,8\right]\}$. The error is averaged over $1000$ instances of scatter-noise samples. a) Behavior of the error versus the regularization parameter $\tau$ for $\snr=\{0,10,20\}$ dB. The upper and lower bound for the regularization parameter following \eqref{taus} are also plotted. b) Behavior of the error versus $\snr$ for for $\tau$ upper bound ($\tau_u$) and also for the $\tau$ that minimizes the average frequency error ($\tau_o$). The  $\sqrt{\left[\miI^{-1}\left(\boldsymbol{\theta}\right)\right]_{ii}}$ (\ac{CRLB} standard deviation) is provided for each scenario following the derivations in Appendix. \ref{CRLB}. The $\ell_1$-AN performance with $\miw={\bf 0}_{N\times 1}$ is shown as reference.}
%\LABFIG{fig:error_noisy}

%\end{figure*}

\begin{figure*}
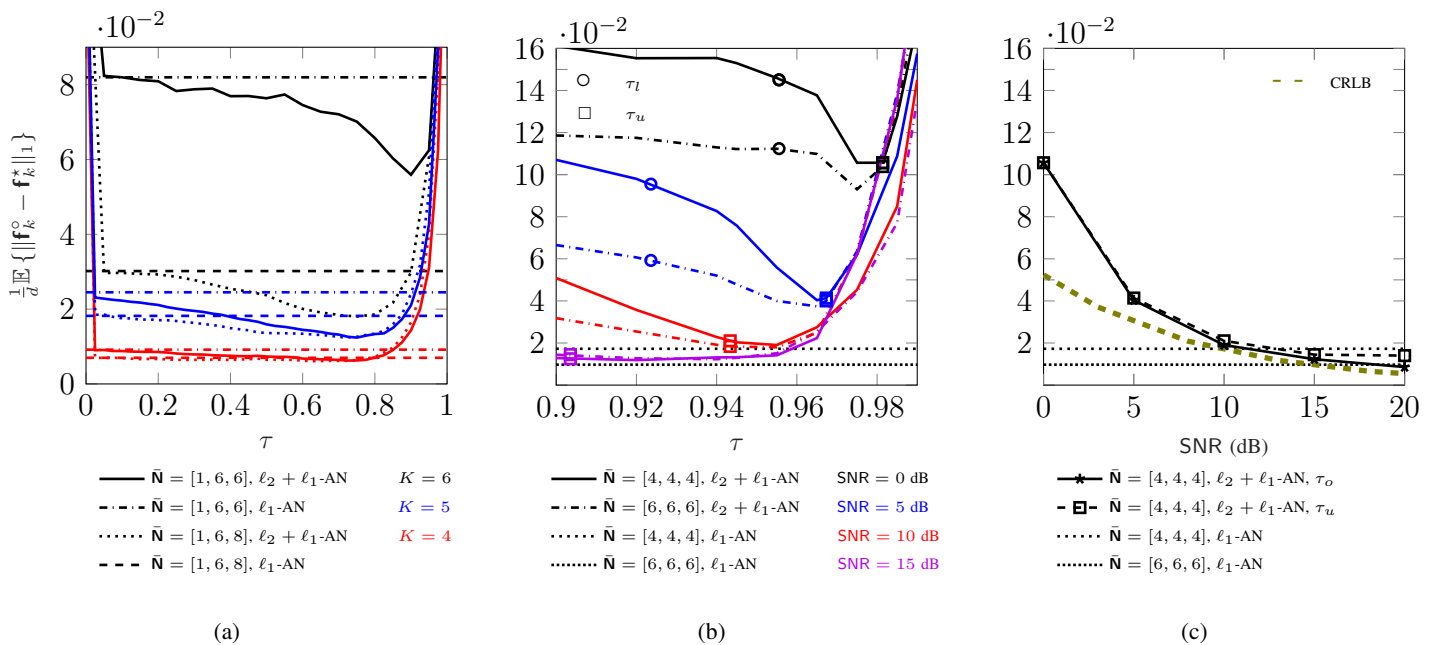

\hspace{-1.4cm}
     \begin{subfigure}[]{\input{figures/Neff6x6_noiseless_L2L1_K4_5_6.tex}}  
     \end{subfigure}\hfill
          \begin{subfigure}[]{\input{figures/Neff4x4x4_noisy_tau.tex}}  
     \end{subfigure}\hfill
     \begin{subfigure}[]{% This file was created by matlab2tikz.
%
%The latest updates can be retrieved from
%  http://www.mathworks.com/matlabcentral/fileexchange/22022-matlab2tikz-matlab2tikz
%where you can also make suggestions and rate matlab2tikz.
%

\definecolor{mycolor1}{rgb}{0.5,0.5,0.00000}
\definecolor{mycolor2}{rgb}{0.7,0.0,0.9}

\begin{tikzpicture}

\begin{axis}[%
width=0.6\figurewidth,
height=0.7\figureheight,
at={(0\figurewidth,0\figureheight)},
scale only axis,
xmin=0,
xmax=20,
xticklabels={},
xtick={0,5,10,15,20},
extra x ticks ={0,5,10,15,20},
extra x tick labels={$0$,$5$,$10$,$15$,$20$},
xlabel style={font=\color{white!15!black}},
xlabel={\scriptsize{$\snr$ (dB)}},
ymin=0,
ymax=0.16,
yticklabels={},
extra y ticks={0.02,0.04,0.06,0.08,0.1,0.12,0.14,0.16},
extra y tick labels={$2$,$4$,$6$,$8$,$10$,$12$,$14$,$16$},
ylabel style={font=\color{white!15!black}},
axis background/.style={fill=white},
legend style={at={(axis cs:17, -0.035)}, anchor=north east, legend cell align=left, align=left, draw=white!15!black, font=\tiny, draw=none}
]

\addplot [color=black, mark=star, line width=1.0pt]
  table[row sep=crcr]{%
0	0.105648589925056\\
5	0.0403390522842369\\
10	0.0190469975601061\\
15	0.0122973725996679\\
20	0.00857437654291693\\
};
\addlegendentry{$\bar{\boldsymbol{\mathsf N}}=\left[4,4,4\right]$, \LUnoDos, $\tau_o$}

\addplot [color=black, line width=1.0pt, dashed, mark=square, mark options={solid, black}]
  table[row sep=crcr]{%
0	0.105659520255055\\
5	0.0413620806128797\\
10	0.0210404314821107\\
15	0.0144785826678749\\
20	0.0140138870044765\\
};
\addlegendentry{$\bar{\boldsymbol{\mathsf N}}=\left[4,4,4\right]$, \LUnoDos, $\tau_u$}

\addplot [color=black, line width=1.0pt, dotted]
  table[row sep=crcr]{%
0	0.0172572021291106\\
5	0.0172572021291106\\
10	0.0172572021291106\\
15	0.0172572021291106\\
20	0.0172572021291106\\
};
\addlegendentry{$\bar{\boldsymbol{\mathsf N}}=\left[4,4,4\right]$, \LUno}

\addplot [color=black, line width=1.0pt,  densely dotted]
  table[row sep=crcr]{%
0	0.00971414635720909\\
5	0.00971414635720909\\
10	0.00971414635720909\\
15	0.00971414635720909\\
20	0.00971414635720909\\
};
\addlegendentry{$\bar{\boldsymbol{\mathsf N}}=\left[6,6,6\right]$, \LUno}

%\addplot [color=red, line width=1.0pt]
%  table[row sep=crcr]{%
%0	0.0930548471120096\\
%5	0.0374282242662245\\
%10	0.017965634720239\\
%15	0.0118790022032713\\
%20	0.00843681603009166\\
%};
%\addlegendentry{L1 noisy 6x6x6}

\addplot [color=mycolor1, line width=2.0pt, dashed, forget plot]
  table[row sep=crcr]{%
%0	0.0371295254433105\\
%5	0.02087946654148\\
%10	0.0117413868842034\\
%15	0.00660266706003555\\
%20	0.00371295254433106\\
0	0.0524\\
3	0.0371295254433105\\
8	0.02087946654148\\
13	0.0117413868842034\\
18	0.00660266706003555\\
23	0.00371295254433106\\
};

%\addplot [color=red, line width=1.0pt, dashed, mark=o, mark options={solid, red}]
%  table[row sep=crcr]{%
%0	0.103830142442354\\
%5	0.040109803030001\\
%10	0.0183906980472537\\
%15	0.0124982062648851\\
%20	0.0108251787222609\\
%};
%\addlegendentry{L1 noisy 6x6x6 upper}

%\addplot [color=blue, line width=1.0pt, dashed, mark=o, mark options={solid, blue}]
%  table[row sep=crcr]{%
%0	0.145034978188508\\
%5	0.0955068775689902\\
%10	0.0613115981692253\\
%15	0.0454335635492573\\
%20	0.0339963663254941\\
%};
%\addlegendentry{L1 noisy 4x4x4 lower}
%
%\addplot [color=red, line width=1.0pt, dashed, mark=o, mark options={solid, red}]
%  table[row sep=crcr]{%
%0	0.112364781180686\\
%5	0.0592486329774792\\
%10	0.036452018316351\\
%15	0.0248447648432434\\
%20	0.0177682841504963\\
%};
%\addlegendentry{L1 noisy 6x6x6 lower}
%
\end{axis}
\node at (0.5,6.75*0.7) {$\cdot10^{-2}$};
%\node at (0.75,5.15*0.7) {{\textcolor{mycolor2}{$-$}}}
\node at (3.7,5.75*0.7) { \textcolor{mycolor1}{- -}\hspace{0.2cm} {\tiny CRLB}};
\end{tikzpicture}%}
         \end{subfigure}
         \vspace{-0.8cm}
       \caption{Frequency average $\ell_1$ recovery error for \LUnoDos. a) Noiseless measurements in a $2$D-UD planar array with  $\NN=\NN_\Csf=\left[1,6,6\right]$. b) Noisy behavior for a cubic array versus the regularization parameter $\tau$ for $\snr=\{0,5,10,15\}$ dB. c) Noisy behavior for a cubic array  versus $\snr$ for $\tau$ upper bound ($\tau_u$)  and also for the $\tau$ that minimizes the average frequency error ($\tau_o$).} %The $\sqrt{\left[\miI^{-1}\left(\boldsymbol{\theta}\right)\right]_{ii}}$ (\ac{CRLB} standard deviation) is provided. for each scenario following the derivations in Appendix
       %The $\ell_1$-AN performance with $\miw={\bf 0}_{N\times 1}$ is shown as reference.}
\LABFIG{fig:error_noisy_cube}
\end{figure*}

\subsubsection{Noisy measurements}

Finally, we study the noisy scenario frequency recovery performance of \eqref{eq:SDPnoisy}  for different $\snr$ scenarios and the $3$D-AD cubic array.  First, it is explored the behavior with respect to the regularization parameter $\tau$ for $\snr=\{0,5,10,15\}$ (\FIG{fig:error_noisy_cube}(b)). As expected, the regularization parameter $\tau_o$ that minimizes the error depends on $\snr$. Furthermore, for a given $\snr$, the error achieved by $\tau_o$ is almost coincident for the different virtual configurations, i.e., at $\tau_o$, there is no improvement in recovery performance by enlarging the virtual array. From the plots, we can see  that $\tau_u$ well approximate the optimal $\tau_o$ for a large range of $\snr$. Regarding the performance of the different virtual configurations, it is observed that at low $\snr$ values there is a gain provided by the virtual enlarging.  Finally we compare the best recovery performance given by $\tau_o$ with the error performance for the upper bound $\tau_u$, the noiseless $\ell_1$-AN and the \ac{CRLB} in \FIG{fig:error_noisy_cube}(c). The plots show that the performance achieved by the proposed $\ell_2+\ell_1$-\ac{AN} approach are very close to the associated \ac{CRLB}.

 \section{Conclusions} 
 \label{sec:VI}
There is a growing need for multi-dimensional characterization of key performance parameters such as the signal \ac{AoA}s in order to enable the provisioning of future wireless services. In this work, we have shown under which conditions it is possible to recover a set of $K$ $d$-dimensional parameters from a noiseless/noisy linear parametric measurement model that contains a $K$--sparse mixture of the array steering vectors particularized in each of the \ac{AoA}s. Finding the $K$ \ac{AoA}s is formulated in terms of a multi--level Toeplitz matrix $\rank$--minimization problem with a pertinent convex relaxation procedure that ensures the extraction of the relevant parameters in polynomial time.

\vspace{-0.5cm}
\appendices
\section{Proof of Lemma \ref{PrettyLemma}} 
 \label{appPrettyLemma}
 
To prove Lemma \ref{PrettyLemma}, let first observe that for $d=1$ (i.e., $\bar{\Xsf}=\bar{\Ysf}=1$), the result is immediate and well known (see \cite{Yang16,Chi15,Tan14}). For $2\leq d \leq 3$, we will use a result that follows from \cite[Lemma 2]{Yang16} and \cite[Proposition 1]{Gurvits02} that  we will state in a slightly different way which is more useful for our purposes. 

\begin{lemma}
[\hspace{-0.2mm}{\cite[Lemma 2]{Yang16},\cite[Lemma in Proposition 1]{Gurvits02}}]
\label{rottaimmensamenteL}
Let $\miS_{\QQ\TT} \in \mathbb{C}^{\qq\ttt\times \qq\ttt}$ be a  $\qq \times \qq$ \ac{PSD} block Toeplitz matrix with $\rank$ equal to $r<\ttt$, where each block is a $\ttt\times \ttt$ matrix. Then, it exists a matrix $\miG_{\ttt}=[\mig_{1}, \ldots, \mig_{r}]$  and a set of $r$ frequencies
 $ f_j \in \mathbb{T}$, with $j \in [r]$, such that $
\miS_{\QQ\TT} =\sum_{j=1}^r 
\mir_{\qq}(f_j)  \mir_{\qq}(f_j)^\dagger \otimes \mig_{j} \mig_{j}^\dagger
\label{d=2step0} =
\sum_{j=1}^r 
\left (\mir_{{\qq}}(f_j) \otimes \mig_{j}\right)
\left (\mir_{\qq}(f_j) \otimes \mig_{j}\right)^\dagger$, and 
$\miS_{\a\TT} = \miG_{\ttt}\diag(e^{\j2\pi\a f_1}, \ldots, e^{\j2\pi\a f_r})\miG_{\ttt}^\dagger$, where $\miS_{\a\TT}$,  with $-\qq+1 \leq\a \leq \qq-1$, denotes  the matrix on the $\a$-th block of $\miS_{\QQ\TT}$\footnote{Note that while the expression of  $\miS_{\a\TT}$ is not explicitly stated in \cite[Lemma 2]{Yang16},  it is stated as part of the corresponding proof in \cite[Eq. (19)]{Yang16}.}.\QEDwhite
%we explicitly recall such result in the statement of Lemma \ref{rottaimmensamenteL} since we  make use of it in our proof. In fact, Eq. \eqref{nostratrovata} is stated as part of the corresponding proof in \cite[Eq. (19)]{Yang16}.}}. 
\end{lemma}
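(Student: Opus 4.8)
The plan is to prove this classical matrix‑valued Carath\'eodory--Fej\'er statement by an explicit Cholesky‑plus‑shift construction. First I would write the PSD matrix as $\miS_{\QQ\TT}=\miB\miB^\dagger$ with $\miB\in\mathbb{C}^{\qq\ttt\times r}$ of full column rank $r$ (a thin Cholesky/eigen‑factor), and partition $\miB=[\miB_0^\dagger,\dots,\miB_{\qq-1}^\dagger]^\dagger$ into $\qq$ blocks $\miB_i\in\mathbb{C}^{\ttt\times r}$. The block‑Toeplitz hypothesis is exactly the statement that $\miB_i\miB_j^\dagger=\miS_{(i-j)\TT}$ depends only on $i-j$; consequently, writing $\miB^{\downarrow}=[\miB_0^\dagger,\dots,\miB_{\qq-2}^\dagger]^\dagger$ and $\miB^{\uparrow}=[\miB_1^\dagger,\dots,\miB_{\qq-1}^\dagger]^\dagger$ for the top and bottom stacks of $\qq-1$ blocks, a block‑by‑block comparison gives the key identity $\miB^{\downarrow}(\miB^{\downarrow})^\dagger=\miB^{\uparrow}(\miB^{\uparrow})^\dagger$.

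The central step is to upgrade this into a \textbf{fixed unitary shift}. Since $\miB^{\downarrow}$ and $\miB^{\uparrow}$ are two square‑root factors of the same PSD matrix and each has $r$ columns, writing $\miB^{\downarrow}=\miU\miM_1$, $\miB^{\uparrow}=\miU\miM_2$ over an orthonormal basis $\miU$ of the common column space forces $\miM_1\miM_1^\dagger=\miM_2\miM_2^\dagger$, so the isometry sending the orthonormal right factors of $\miM_1$ to those of $\miM_2$ extends to a unitary $\miW\in\mathbb{C}^{r\times r}$ with $\miB^{\uparrow}=\miB^{\downarrow}\miW$. Reading this off block by block yields $\miB_{i+1}=\miB_i\miW$ for $i=0,\dots,\qq-2$, hence $\miB_i=\miB_0\miW^{\,i}$, and consistency with the Toeplitz structure is immediate because $\miB_i\miB_j^\dagger=\miB_0\miW^{\,i-j}\miB_0^\dagger$ by unitarity. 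Diagonalizing $\miW=\miK\miD\miK^\dagger$ with $\miK$ unitary and $\miD=\diag(e^{\j2\pi f_1},\dots,e^{\j2\pi f_r})$, $f_j\in\T$ (the spectrum of a unitary lies on the unit circle), and setting $\miG_{\ttt}=[\mig_1,\dots,\mig_r]:=\miB_0\miK$, we get $\miB_i=\miG_{\ttt}\miD^{\,i}\miK^\dagger$, so the $\a$-th block is $\miS_{\a\TT}=\miB_i\miB_{i-\a}^\dagger=\miG_{\ttt}\diag(e^{\j2\pi\a f_1},\dots,e^{\j2\pi\a f_r})\miG_{\ttt}^\dagger$; reassembling the blocks gives $\miS_{\QQ\TT}=\sum_{j=1}^r\bigl(\mir_{\qq}(f_j)\otimes\mig_j\bigr)\bigl(\mir_{\qq}(f_j)\otimes\mig_j\bigr)^\dagger$ (up to the normalization factor in $\mir_{\qq}$, which can be absorbed into the $\mig_j$), which is the claimed decomposition.

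The main obstacle is precisely this unitary‑extension step: one must verify that the partial isometry relating the two square‑root factors $\miB^{\downarrow},\miB^{\uparrow}$ can always be completed to a genuine unitary $\miW$ on all of $\mathbb{C}^r$ (not merely a partial isometry on a subspace), so that $\miW^{\,i}$ and its unimodular spectrum make sense; the remaining arguments are then essentially algebraic bookkeeping. Secondary points to dispatch are the scalar normalization coming from $\mir_{\qq}$, and the trivial base case $\qq=1$, where there is no shift and one simply takes the eigendecomposition $\miS_{0\TT}=\sum_j\mig_j\mig_j^\dagger$ with arbitrary $f_j$. The hypothesis $r<\ttt$ is what keeps the construction non‑degenerate and is the regime in which the lemma is applied (recursively, and together with the companion uniqueness statement); for the existence claim as worded here neither distinctness of the $f_j$ nor full column rank of $\miG_{\ttt}$ has to be established.
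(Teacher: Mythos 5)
Your construction is correct, and it is essentially the standard shift-invariance argument behind the cited result: the paper itself does not reprove this lemma (it is imported from \cite[Lemma 2]{Yang16} and \cite[Proposition 1]{Gurvits02}), and your Cholesky-factor/unitary-shift/eigendecomposition steps are exactly the ones the paper encodes operationally in Algorithm \ref{Alg:PrettyLemma}. The one step you flag as delicate --- completing the partial isometry between the two square-root factors $\miB^{\downarrow},\miB^{\uparrow}$ to a genuine unitary $\miW\in\C^{r\times r}$ --- goes through without further hypotheses (two $m\times r$ matrices with equal Gram matrices $XX^\dagger=YY^\dagger$ always satisfy $Y=XW$ for some unitary $W$, by matching singular value decompositions even when the common rank is below $r$), so there is no gap.
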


We are now ready to prove Lemma \ref{PrettyLemma} for $d=2$, i.e., $\bar{\Xsf}=1$ and $1 <\bar{\Ysf}\leq \bar{\Zsf}$. In this case, $\miS=\miS_{0\bar{\YY}\bar{\ZZ}}$ is a \ac{PSD} $2$-LT matrix  and its upper-left block $\miS_{00\bar{\ZZ}}$ is a \ac{PSD} $1$-LT matrix of dimension $\bar{\Ysf}\bar{\Zsf}$ and $\bar{\Zsf}$, respectively, with $\rank\left\{\miS_{0\bar{\YY}\bar{\ZZ}}\right\}=\rank\left\{\miS_{00\bar{\ZZ}}\right\}=r<\bar{\Zsf}$ by assumption. 
Furthermore, note that since $\miS_{0\bar{\YY}\bar{\ZZ}}$ is a  
\ac{PSD} $2$-LT matrix, then it is also a $\bar{\Ysf} \times \bar{\Ysf}$ \ac{PSD} block Toeplitz matrix  with rank equal to $r<\bar{\Zsf}$ by assumption, 
and with each block being  a $\bar{\Zsf}\times \bar{\Zsf}$ matrix. Hence, applying Lemma \ref{rottaimmensamenteL} with $\qq=\bar{\Ysf}$ and $\ttt=\bar{\Zsf}$,  we have that there exists a matrix $\miG_{\bar{\Zsf}}=[\mig^z_{1}, \ldots, \mig^z_{r}]$  and a set of $r$ frequencies $f^y_j \in \mathbb{T}$, with $j \in [r]$, such that $
\miS_{0\bar{\YY}\bar{\ZZ}}=\sum_{j=1}^r \mir_{\bar{\Ysf}}(f^y_j) \mir_{\bar{\Ysf}}(f^y_j)^\dagger \otimes \mig^z_{j} \mig^{z\dagger}_{j}=\sum_{j=1}^r (\mir_{\bar{\Ysf}}(f^y_j)\otimes  \mig^z_{j})( \mir_{\bar{\Ysf}}(f^y_j) \otimes \mig^{z}_{j})^{\dagger}=\miC_{\bar{\Ysf}\bar{\Zsf}} \miC_{\bar{\Ysf}\bar{\Zsf}}^\dagger
$ and 
\begin{equation}
\begin{split}
\miS_{0\b\bar{\ZZ}}&=\sum_{j=1}^r e^{\j2\pi \b f^y_j}\mig^{z}_{j}\mig^{z\dagger}_{j}=\miG_{\bar{\Zsf}}\diag(e^{\j2\pi\b f^y_1}, \ldots, e^{\j2\pi\b f^y_r})\miG_{\bar{\Zsf}}^\dagger=\miG_{\bar{\Zsf}}\miY^\b\miG_{\bar{\Zsf}}^\dagger
\end{split}
\label{rottac0}
\end{equation}
where $\miS_{0\b\bar{\ZZ}}$, with  $-\Ysf+1\leq\b \leq\Ysf-1$, denotes  
the generic $\b$-block diagonal of $\miS_{0\bar{\YY}\bar{\ZZ}}$, and 
$\miY= \diag({e^{2\pi f^y_1}, e^{2\pi f^y_2}, \ldots,e^{2\pi f^y_r}})$. 
In particular, if we set $\b=0$, we have that $\miS_{00\bar{\ZZ}}=\miG_{\bar{\Zsf}}\miG_{\bar{\Zsf}}^\dagger$.
Furthermore, since by assumption $\miS_{00\bar{\ZZ}}$ is \ac{PSD} $1$-LT matrix of dimension $\bar{\Zsf}$ whose rank $r$ is strictly smaller than $\bar{\Zsf}$, then  $\miS_{00\bar{\ZZ}}$ admits a unique  Vandermonde decomposition of order $r$ (see \cite{Yang16,Chi15,Tan14}), i.e., $\miS_{00\bar{\ZZ}} 
=  \miR_{\bar{\Zsf}}( \mif^z_{1:r}) \miP
 \miR^\dagger_{\bar{\Zsf}}( \mif^z_{1:r})
$, where $\miR_{\bar{\Zsf}}( \mif^z_{1:r})=[\mir_{\bar{\Zsf}}(f^z_1), \ldots, \mir_{\bar{\Zsf}}(f^z_r)]$. Given that $\miS_{00\bar{\ZZ}} =\miG_{\bar{\Zsf}}\miG_{\bar{\Zsf}}^\dagger
=  \miR_{\bar{\Zsf}}( \mif^z_{1:r}) \miP
 \miR^\dagger_{\bar{\Zsf}}( \mif^z_{1:r})$, we can always find a unitary matrix, $\miO_{\bar{\Zsf}}$ such that $\miG_{\bar{\Zsf}}=   \miR_{\bar{\Zsf}}( \mif^z_{1:r}) \miP^{1/2} \miO_{\bar{\Zsf}}
$ and $\miS_{0\b\bar{\ZZ}}$ can be written as $
\miS_{0\b\bar{\ZZ}}= \miR_{\bar{\Zsf}}({ \mif}^z_{1:r}) \miP^{1/2} \miO_{\bar{\Zsf}} \miY^{\b}  \miO_{\bar{\Zsf}} ^\dagger \miP^{1/2} \miR^\dagger_{\bar{\Zsf}}({ \mif}^z_{1:r})$.
%Recall that a $2$-LT matrix is also called a Toeplitz-block-Toeplitz or doubly Toeplitz in the literature.
Since 
%This follows from the fact that 
for all $ b$ with  $-\Ysf+1\leq\b \leq\Ysf-1$,  $\miS_{0\b\bar{\ZZ}}$ is a $1$-LT matrix,  using  \cite[Lemma 3]{Yang16} for $d=1$, it follows immediately that the matrix $\miD_{\bar{\Ysf} }=\miO_{\bar{\Zsf}} \miY^{\b} \miO_{\bar{\Zsf}} ^\dagger$ has to be diagonal. Furthermore, 
letting  $\miD_{\bar{\Ysf} }= \miO_{\bar{\Zsf}} \miY  \miO_{\bar{\Zsf}} ^\dagger$, 
it is immediate to verify that $
\miD_{\bar{\Ysf} }^{\b} = \miO_{\bar{\Zsf}} \miY^{\b}  \miO_{\bar{\Zsf}} ^\dagger 
$ and that $\miD_{\bar{\Ysf} }\miD_{\bar{\Ysf} }^\dagger=\miI_r$. 
Therefore, $\miD_{\bar{\Ysf} }$ is a diagonal matrix whose diagonal elements are complex number with  modulo $1$, i.e., $\miD_{\bar{\Ysf} }= \diag(e^{2\pi f^y_1}, \ldots,e^{2\pi f^y_r})$, with ${ f^y_i} \in \mathbb{T}$. Then we have that $\miS_{0\b\bar{\ZZ}}= \miR_{\bar{\Zsf}}({ \mif}^z_{1:r}) \miP^{1/2}
\diag(e^{2\pi f^y_1}, \ldots,e^{2\pi f^y_r})
\miP^{1/2\dagger } \miR^\dagger_{\bar{\Zsf}}({ \mif}^z_{1:r})$ and after some algebraic manipulations, it is easy to show that:
\begin{equation}
\begin{split}
\miS_{0\bar{\YY}\bar{\ZZ}}&=\sum_{j=1}^r p_j
(\mir_{\bar{\Ysf}}(f^y_j)\otimes  \mir_{\bar{\Zsf}}(f^z_j))( \mir_{\bar{\Ysf}}(f^y_j) \otimes \mir_{\bar{\Zsf}}(f^z_j))^\dagger=\miR_{\bar{\Ysf}\bar{\Zsf}}(\mif^{yz}_{1:r}) \miP 
\miR_{\bar{\Ysf}\bar{\Zsf}}^\dagger(\mif^{yz}_{1:r}).
\end{split}
\label{macomesipuoo}
\end{equation}

Now that we have proved the decomposition for $d=2$, we can proceed in proving it for   $d=3$.  
%In fact by induction, using the fact that  Lemma \ref{PrettyLemma} holds for $d=2$ we will prove it for $d=3$. 
To this end, let us now assume that $\miS=\miS_{\bar{\XX}\bar{\YY}\bar{\ZZ}}$ is a \ac{PSD} $3$-LT matrix, i.e., $1<\bar{\Xsf} \leq \bar{\Ysf}\leq \bar{\Zsf}$ 
and hence the sub-matrix $\miS_{0\bar{\YY}\bar{\ZZ}}$ is a \ac{PSD} $2$-LT matrix of dimension $\bar{\Ysf}\bar{\Zsf}$ and the $\a$-block $\miS_{\a\bar{\YY}\bar{\ZZ}}$  with $-\bar{\Xsf}+1\leq\a\leq\bar{\Xsf}-1$ is also a $2$-LT matrix. By assumption we have that the rank $r$ of $\miS_{\bar{\XX}\bar{\YY}\bar{\ZZ}}$ satisfies
$r=\rank\left\{\miS_{\bar{\XX}\bar{\YY}\bar{\ZZ}}\right\}=
\rank\left\{\miS_{00\bar{\ZZ}}\right\}<\bar{\Zsf}$. Furthermore, since $\miS_{\bar{\XX}\bar{\YY}\bar{\ZZ}}$ is a \ac{PSD} $3$-LT matrix, 
then it is  $\bar{\Xsf} \times \bar{\Xsf}$ \ac{PSD} block Toeplitz matrix whose blocks are $2$-LT matrices. Therefore, we can apply Lemma \ref{rottaimmensamenteL},
with $\qq=\bar{\Xsf}$ and $t=\bar{\Ysf}\bar{\Zsf}$, from which it follows that it exists a matrix $\miG_{\bar{\Ysf}\bar{\Zsf}}=[\mig^{yz}_{1}, \ldots, \mig^{yz}_{r}]$  and $f^x_j \in \mathbb{T}$, with $j \in [r]$, such that $
\miS_{\bar{\XX}\bar{\YY}\bar{\ZZ}}=\sum_{j=1}^r 
\mir_{\bar{\Xsf}}(f^x_j)  \mir_{\bar{\Xsf}}(f^x_j)^\dagger \otimes \mig^{yz}_{j} \mig^{yz\dagger}_{j}=\miC_{\bar{\Xsf}\bar{\Ysf}\bar{\Zsf}}\miC_{\bar{\Xsf}\bar{\Ysf}\bar{\Zsf}}^\dagger$, and the generic $\a$-block of  $\miS_{\bar{\XX}\bar{\YY}\bar{\ZZ}}$ with  $-\Xsf+1\leq\a \leq\Xsf-1$, can be written as 
\begin{equation}
\begin{split}
\miS_{\a\bar{\YY}\bar{\ZZ}}&=\sum_{j=1}^r e^{\j2\pi \a f^x_j}\mig^{yz}_{j}\mig^{yz\dagger}_{j}=\miG_{\bar{\Ysf}\bar{\Zsf}}\diag(e^{\j2\pi\a f^x_1}, \ldots, e^{\j2\pi\a f^x_r})\miG_{\bar{\Ysf}\bar{\Zsf}}^\dagger=\miG_{\bar{\Ysf}\bar{\Zsf}}\miX^\a\miG_{\bar{\Ysf}\bar{\Zsf}}^\dagger
\label{ec_ayz}
\end{split}
\end{equation}
with $\miX= \diag(e^{2\pi f^x_1}, \ldots,e^{2\pi f^x_r}) $.
Furthermore,
its $\bar{\Ysf}\bar{\Zsf} \times \bar{\Ysf}\bar{\Zsf}$ upper-left corner, say $\miS_{0\bar{\YY}\bar{\ZZ}}$, is a $2$-LT matrix, composed by the first $\bar{\Ysf}\bar{\Zsf}$ rows and the first $\bar{\Ysf}\bar{\Zsf}$ columns of $\miS_{\bar{\XX}\bar{\YY}\bar{\ZZ}}$, whose  $\bar{\Zsf} \times \bar{\Zsf}$  upper-left block of $\miS_{0\bar{\YY}\bar{\ZZ}}$, say $ \miS_{00\bar{\ZZ}}$, is a $1$-LT matrix.
It is immediate to prove that $\rank\left\{\miS_{0\bar{\YY}\bar{\ZZ}}\right\} 
=\rank \left\{\miS_{00\bar{\ZZ}} \right\}=r< \bar{\Zsf} $. In fact 
$\rank \left\{\miS_{00\bar{\ZZ}} \right\}= r \leq \rank\left\{\miS_{0\bar{\YY}\bar{\ZZ}}\right\}\leq \rank\left\{\miS_{\bar{\XX}\bar{\YY}\bar{\ZZ}}\right\}=r$.
Hence, the assumption of Lemma~\ref{PrettyLemma}, proved previously for $d=2$, are satisfied by $\miS_{0\bar{\YY}\bar{\ZZ}}$. Therefore, $\miS_{0\bar{\YY}\bar{\ZZ}}$ admits a Vandermonde decomposition of order $r$, i.e $
\miS_{0\bar{\YY}\bar{\ZZ}} =\miR_{\bar{\Ysf}\bar{\Zsf}}(\mif^{yz}_{1:r}) \miP
\miR^\dagger_{\bar{\Ysf}\bar{\Zsf}}(\mif^{yz}_{1:r})$.
Following  steps and arguments very similar to the case $d=2$ and 
using again \cite[Lemma 3]{Yang16} to the generic  $2$-LT matrix  block  $\miS_{\a\bar{\YY}\bar{\ZZ}}$, it is straight forward, after some algebraic manipulations, to verify that:
\begin{equation}
\label{macomesipuo}
\miS_{\bar{\XX}\bar{\YY}\bar{\ZZ}}=
\sum_{j=1}^{r} p_j
\mir_{\bar{\NN}}(\mif_{j})
\mir_{\bar{\NN}}(\mif_{j})^\dagger=\miR_{\bar{\NN}}(\mif_{1:r})\miP\miR_{\bar{\NN}}^\dagger(\mif_{1:r}).
\end{equation}

In order complete the proof, both for $\dd=2$ and $\dd=3$, we need to show uniqueness of the Vandermonde decomposition given in \eqref{macomesipuo}. 
To do this let assume that it exists a second decomposition such that $
\miS_{\bar{\XX}\bar{\YY}\bar{\ZZ}}=
\miR_{\bar{\NN}}(\mif'_{1:r})\miP'\miR_{\bar{\NN}}^\dagger(\mif'_{1:r})$, where, as in \eqref{macomesipuo}, $\miP'= \diag\left(p'_1,\ldots, p'_{r}\right)$ with $p'_k>0$ $k\in[r]$ and $\mif'_{k}\in  \mathbb{T}^\dd$ representing distinct vectors.
Then, it follows that $
\miR_{\bar{\NN}}(\mif'_{1:r})\miP'\miR_{\bar{\NN}}^\dagger(\mif'_{1:r})=\miR_{\bar{\NN}}(\mif_{1:r})\miP\miR_{\bar{\NN}}^\dagger(\mif_{1:r})$, from which we have that $
\miR_{\bar{\NN}}(\mif'_{1:r})=
\miR_{\bar{\NN}}(\mif_{1:r})\miP^{\frac{1}{2}} \miU \miP^{'-\frac{1}{2}}=\miR_{\bar{\NN}}(\mif_{1:r})\miB$, where $\miU$ is a unitary matrix and $\miB=\miP^{ \frac{1}{2}} \miU \miP^{'-\frac{1}{2}}$. As a result, for any $k\in \{1, \ldots,r\}$, $\mir_{\bar{\NN}}(\mif'_{k})$
can be written  as linear combination of the steering vectors  
$\left \{\mir_{\bar{\NN}}(\mif_{j})\right \}_{j=1}^r$, i.e.:
\begin{equation}
\label{macomesipuo4}
\mir_{\bar{\NN}}(\mif'_{k}) =
\miR_{\bar{\NN}}(\mif_{1:r})\miB_{(k)} =\sum_{j=1}^r \mir_{\bar{\NN}}(\mif_{j})b_{jk}.
\end{equation}

This implies that the  $r+1$ steering vectors $\left \{ \left \{\mir_{\bar{\NN}}(\mif_{j}) \right \}_{j=1}^r \bigcup \mir_{\bar{\NN}}(\mif'_{k})\right\} $ are linearly dependent.  

Next, denoting by $\miR_{\bar{\NN}}(\mif_{1:r}, \mif'_{k} )=[\miR_{\bar{\NN}}(\mif_{1:r}),\mir_{\bar{\NN}}(\mif'_{k})]$ the ${\bar{N}}\times (r+1)$ matrix, we have that $\miR_{\bar{\NN}}(\mif_{1:r}, \mif'_{k} )= 
\miR_{\bar{\Xsf}}(\mif^x_{1:r}, f'^{x}_{k} ) \odot 
\miR_{\bar{\Ysf}}(\mif^y_{1:r}, f'^{y}_{k} ) \odot  \miR_{\bar{\Zsf}}(\mif^z_{1:r}, f'^{z}_{k} )$ from which it follows that the 
rank of $\miR_{\bar{\NN}}(\mif_{1:r}, \mif'_{k} )$ satisfies the following inequality (see \cite[Lemma 1]{Jiang01}):
\begin{equation}
\begin{split}
\rank \left \{\miR_{\bar{\NN}}(\mif_{1:r}, \mif'_{k} ) \right\}&\geq  
\min \left \{ 
\min \left \{r_{\bar{\Zsf}}  + r_{\bar{\Ysf}} -1 +(a_z+a_y), r+1 \right \}+
 r_{\bar{\Xsf}} -1 +a_x, r+1 \right \} 
\\
&= \min\left \{r+a_z+a_y+a_x, r+1 \right \} 
\end{split}
\label{macomesipuo6}
\end{equation}
where  $r_{\bar{\Zsf}}=\rank 
\left \{\miR_{\bar{\Zsf}}(\mif^z_{1:r} ) \right\}$,
$r_{\bar{\Ysf}}=\rank 
\left \{\miR_{\bar{\Ysf}}(\mif^y_{1:r} ) \right\}$,
$r_{\bar{\Xsf}}=\rank 
\left \{\miR_{\bar{\Xsf}}(\mif^x_{1:r}) \right\}$, while 
$a_\alpha=1$ if $f^{'\alpha}_{k} \notin \left \{f^\alpha_{j} \right\}_{j=1}^r$ and $\beta^\alpha>r$ and zero otherwise, with $\alpha \in \left \{x,y,z \right \}$ and $\beta^x=\bar{\Xsf}$, $\beta^y=\bar{\Ysf}$ and $\beta^z=\bar{\Zsf}$.
From \eqref{macomesipuo6}, it follows that in order for \eqref{macomesipuo4} to be valid for any $k\in \{1, \ldots,r\}$, it is necessary that for any $\left \{\mif'_{k} \right\}_{k=1}^r \subset \left \{\mif_{j} \right\}_{j=1}^r$, i.e $a_z+a_y+a_x=0$. By similar argument we can prove that 
$\left \{\mif'_{k} \right\}_{k=1}^r \supset \left \{\mif_{j} \right\}_{j=1}^r$, from which it follows immediately that 
also $p'_k=p_k$ for all $k\in \{1, \ldots,r\}$ and consequently the decomposition is unique. Hence if  Vandermonde  decomposition  in \eqref{macomesipuoo} for $\dd=2$ or \eqref{macomesipuo} for $\dd=3$ exists, it is unique.  This completes the proof.

\vspace{-0.5cm}
\section{Proof of Theorem \ref{theoremApi}} 
 \label{appAIW0}

In the following, we will prove that given the assumption of Theorem \ref{theoremApi}, i.e., $K<\max\left\{\bar{\Xsf},\bar{\Ysf},\bar{\Zsf}\right\}$, denoting by $\big(r^\circ,\mis^\circ,\miS^\circ_{\bar{\XX}\bar{\YY}\bar{\ZZ}}\big)$ 
the optimal solution  to \eqref{l0normopt_rank} is 
given by
\begin{equation}
\big(r^\circ,\mis^\circ,\miS^\circ_{\bar{\XX}\bar{\YY}\bar{\ZZ}}\big) = \bigg(K,\sum_{k=1}^Ku^\star_k\mir_{\bar{\NN}}(\mif^\star_k),\sum_{k=1}^K|u^\star_k|^ 2\mir_{\bar{\NN}}(\mif^\star_k)\mir_{\bar{\NN}}(\mif^\star_k)^\dagger\bigg)
\label{optsolappedix}
\end{equation}
and is  unique in terms of $\mis^\circ=\mis^\star=\sum_{k=1}^Ku^\star_k\mir_{\bar{\NN}}(\mif^\star_k)$ and in terms of the frequencies $\mif^\star_{1:K}$  identified using Algorithm \ref{Alg1}. 

%{\CYN Recall that by assumption \emph{i)} the sensing matrix in \eqref{ec_parametrical_noiseless} is modeled as a binary sensing matrix $\miA\in \mathcal{A}^{(K)}$ (see Definition \ref{setinjective}),  \emph{ii)} Assumption \ref{A1} holds, and \emph{iii)} we consider as Kronecker ordering for the virtual $3$D-UD streering vectors $\bar{\Zsf}\to\bar{\Ysf}\to\bar{\Xsf}$ with $\bar{\Xsf}\leq \bar{\Ysf} \leq \bar{\Zsf}$.
%The assumed injectivity of  $\miA\in \mathcal{A}^{(K)}$ as map  from $\C^{2K}\to \C^{N}$, guarantees the recovery of a unique signal $\mis^\star$ that satisfies the condition $\miy=\miA\mis^\star$. Once the signal $\mis^\star$ is uniquely recovered, the next step is to guarantee the unique recovery of the frequencies which will be identified exploiting the Vandermonde decomposition of $\miS^\circ_{\bar{\XX}\bar{\YY}\bar{\ZZ}}$  obtained applying Algorithm \ref{Alg:PrettyLemma}. Hence, condition  $K < \max\{\bar{\Xsf},   \bar{\Ysf}, \bar{\Zsf}\}$ together with the assumption that $\miS^\circ_{\bar{\XX}\bar{\YY}\bar{\ZZ}} \in  \mathcal{T}_{\bar{\Xsf}\bar{\Ysf}\bar{\Zsf}}$   guarantee uniqueness for the frequency obtained from the Vandermonde decomposition  of $\miS^\circ_{\bar{\XX}\bar{\YY}\bar{\ZZ}}$, solution of \eqref{l0normopt_rank}.}

To prove \eqref{optsolappedix}, let first observe that $\bigg(K,\sum_{k=1}^Ku^\star_k\mir_{\bar{\NN}}(\mif^\star_k),\sum_{k=1}^K|u^\star_k|^2\mir_{\bar{\NN}}(\mif^\star_k)\mir_{\bar{\NN}}(\mif^\star_k)^\dagger\bigg)$ is a feasible solution  for \eqref{l0normopt_rank}, therefore $
\rank \left \{ \miS^\circ_{\bar{\XX}\bar{\YY}\bar{\ZZ}}\right\}\triangleq r^\circ\leq K<\max\{\bar{\Xsf}, \bar{\Ysf}, \bar{\Zsf}\}$.

On the other hand, it will be proved next, \emph{1)} that $K \leq r^\circ$, \emph{2)} that $\mis^\circ$ is unique and equal to $
\mis^\star= \sum_{k=1}^Ku^\star_k\mir_{\bar{\NN}}(\mif^\star_k)$ and finally \emph{3)}  that $\miS^\circ_{\bar{\XX}\bar{\YY}\bar{\ZZ}}$ has also a unique decomposition in terms of the $\mif^\star_{1:K}$  identified using Algorithm \ref{Alg1}.

In the following we assume $d=3$, nevertheless the proof can be easily particularized for $d\leq 2$ and $d> 3$. 
By formulation of \eqref{l0normopt_rank},  the optimal solution $\miS^\circ_{\bar{\XX}\bar{\YY}\bar{\ZZ}}$ belongs to $\mathcal{T}_{\bar{\Xsf}\bar{\Ysf}\bar{\Zsf}} \subseteq \C^{\bar{N} \times \bar{N}} $ which  denotes the set of all 
\ac{PSD} $3$-LT matrices of dimension $ \bar{\Xsf}\bar{\Ysf}\bar{\Zsf}$,
with  canonical ordered structure. Consequently, its $\bar{\Ysf}\bar{\Zsf}\times \bar{\Ysf}\bar{\Zsf}$ upper block $\miS^\circ_{0\bar{\YY}\bar{\ZZ}}$ is a \ac{PSD} $2$-LT matrix and the  $\bar{\Zsf}\times \bar{\Zsf}$ upper block  $\miS^\circ_{00\bar{\ZZ}}$ is a \ac{PSD} $1$-LT matrix. We denote the $\rank$ of each of these matrices as $r^\circ=\rank\left\{\miS^\circ_{\bar{\XX}\bar{\YY}\bar{\ZZ}}\right\}$, $r^\circ_{\bar{\Ysf}\bar{\Zsf}}=\rank\left\{\miS^\circ_{0\bar{\YY}\bar{\ZZ}}\right\}$ and $r^\circ_{\bar{\Zsf}}=\rank\left\{\miS^\circ_{00\bar{\ZZ}}\right\}$ and we have that $r^\circ_{\bar{\Zsf}}\leq r^\circ_{\bar{\Ysf}\bar{\Zsf}}\leq r^\circ <\max\{\bar{\Xsf}, \bar{\Ysf}, \bar{\Zsf}\}= \bar{\Zsf}$. By Lemma \ref{rottaimmensamenteL}, we have that it exists a matrix 
$\miG_{ \bar{\Ysf}\bar{\Zsf}}=[\mig^{yz}_{1}, \ldots, \mig^{yz}_{r^\circ}]$  and $f^{x\circ}_j \in \mathbb{T}$, with $j \in [r^\circ]$ such that $
\miS^\circ_{\bar{\XX}\bar{\YY}\bar{\ZZ}}=\sum_{j=1}^{r^\circ} (\mir_{\bar{\Xsf}}(f^{x\circ}_j)\otimes  \mig^{yz}_{j})( \mir_{\bar{\Xsf}}(f^{x\circ}_j) \otimes \mig^{yz}_{j})^\H=\miC_{\bar{\Xsf}\bar{\Ysf}\bar{\Zsf}} \miC_{\bar{\Xsf}\bar{\Ysf}\bar{\Zsf}}^\dagger$, where 
\begin{eqnarray}
\miC_{\bar{\Xsf}\bar{\Ysf}\bar{\Zsf}}&=&[\mir_{\bar{\Xsf}}(f^{x\circ}_1)\otimes  \mig^{yz}_{1},\ldots,\mir_{\bar{\Xsf}}(f^{x\circ}_{r^\circ})\otimes  \mig^{yz}_{r^\circ}] =\miR_{\bar{\Xsf}}(f^{x\circ}_{1:r^\circ}) \odot \miG_{\bar{\Ysf}\bar{\Zsf}}.
\label{machec0}
\end{eqnarray}
Also denoting $\miS^\circ_{\a\bar{\YY}\bar{\ZZ}}$ as the $\a$-th block of 
$\miS^\circ_{\bar{\XX}\bar{\YY}\bar{\ZZ}}$ for  $-\bar{\Xsf}+1\leq \a\leq\bar{\Xsf}-1$ we have that $\miS^\circ_{\a\bar{\YY}\bar{\ZZ}}= \sum_{j=1}^{r^\circ} e^{\j2\pi \a f^{x\circ}_j} \mig^{yz}_{j}\mig ^{yz\dagger}_{j} =\miG_{\bar{\Ysf}\bar{\Zsf}}\miX^\a\miG_{\bar{\Ysf}\bar{\Zsf}}^\dagger$ with $\miX= \diag({e^{\j2\pi f^{x\circ}_1}, e^{\j2\pi f^{x\circ}_2}, \ldots,e^{\j2\pi f^{x\circ}_{r^\circ}}})$. Similarly, applying Lemma \ref{rottaimmensamenteL} to $\miS^\circ_{0\bar{\YY}\bar{\ZZ}}$, it exists a matrix $\miG_{ \bar{\Zsf}}=[\mig^{z}_{1}, \ldots, \mig^{z}_{r^\circ_{\bar{\Ysf}\bar{\Zsf}}}]$  and $f^{y\circ}_j \in \mathbb{T}$, with $j \in [r^\circ_{\bar{\Ysf}\bar{\Zsf}}]$, such that $\miS^\circ_{0\bar{\YY}\bar{\ZZ}}=\sum_{j=1}^{r^\circ_{\bar{\Ysf}\bar{\Zsf}}} 
(\mir_{\bar{\Ysf}}(f^{y\circ}_j)\otimes  \mig^{z}_{j})( \mir_{\bar{\Ysf}}(f^{y\circ}_j) \otimes \mig^{z}_{j})^\H =\miC_{\bar{\Ysf}\bar{\Zsf}} \miC_{\bar{\Ysf}\bar{\Zsf}}^\dagger$
where 
\begin{eqnarray}
\miC_{\bar{\Ysf}\bar{\Zsf}} &=&
[\mir_{\bar{\Ysf}}(f^{y\circ}_1)\otimes  \mig^{z}_{1},\ldots,\mir_{\bar{\Ysf}}(f^{y\circ}_{r^\circ_{\bar{\Ysf}\bar{\Zsf}}})\otimes  \mig^{z}_{r^\circ_{\bar{\Ysf}\bar{\Zsf}}}]=\miR_{\bar{\Ysf}}(f^{y\circ}_{1:r^\circ_{\bar{\Ysf}\bar{\Zsf}}}) \odot \miG_{\bar{\Zsf}}. 
\label{machec1}
\end{eqnarray}
Similarly, the generic $\b$-block of $\miS^\circ_{0\bar{\YY}\bar{\ZZ}}$ for  $-\bar{\Ysf}+1\leq\b \leq\bar{\Ysf}-1$ is given by $
\miS^\circ_{0\b\bar{\ZZ}}=\sum_{j=1}^{r^\circ_{\bar{\Ysf}\bar{\Zsf}}} e^{\j2\pi \b f^{y\circ}_j} \mig^{z}_{j}\mig ^{z\dagger}_{j} =\miG_{\bar{\Zsf}}\miY^\b\miG_{\bar{\Zsf}}^\dagger$ with $\miY= \diag(e^{\j2\pi f^{y\circ}_1}, e^{\j2\pi f^{y\circ}_2}, \ldots,e^{\j2\pi f^{y\circ}_{r^\circ_{\bar{\Ysf}\bar{\Zsf}}}})$. Furthermore, $\miS^\circ_{00\bar{\ZZ}}$ admits a Vandermonde decomposition of order $r^\circ_{\bar{\Zsf}} < \bar{\Zsf}$, 
$\miS^\circ_{00\bar{\ZZ}}=\miC_{\bar{\Zsf}}\miC_{\bar{\Zsf}}^\dagger=  \miR_{\bar{\Zsf}}(f^{z\circ}_{1:r^\circ_{\bar{\Zsf}}}) \miP
 \miR_{\bar{\Zsf}}(f^{z\circ}_{1:r^\circ_{\bar{\Zsf}}})^\dagger$, where
 $$\miC_{\bar{\Zsf}}=[\mir_{\bar{\Zsf}}(f^{z\circ}_1)p^{1/2}_1,\ldots,\mir_{\bar{\Zsf}}(f^{z\circ}_{r^\circ_{\bar{\Zsf}}})p^{1/2}_{r^\circ_{\bar{\Zsf}}}]=\miR_{\bar{\Zsf}}(f^{z\circ}_{1:r^\circ_{\bar{\Zsf}}}) \miP^{\frac{1}{2}}$$
 Combining $\miS^\circ_{00\bar{\ZZ}}=\miC_{\bar{\Zsf}}\miC_{\bar{\Zsf}}^\dagger$ with $\miS^\circ_{0\b\bar{\ZZ}}$ evaluated for $\b=0$,
and  $\miS^\circ_{0\bar{\YY}\bar{\ZZ}} =\miC_{\bar{\Ysf}\bar{\Zsf}} \miC_{\bar{\Ysf}\bar{\Zsf}}^\dagger$ with $\miS^\circ_{\a\bar{\YY}\bar{\ZZ}}$ evaluated for $\a=0$
we have that $ 
\miG_{\bar{\Zsf}}\miG_{\bar{\Zsf}}^\dagger=\miC_{\bar{\Zsf}}\miC_{\bar{\Zsf}}^\dagger$ and
$\miG_{\bar{\Ysf}\bar{\Zsf}}\miG_{\bar{\Ysf}\bar{\Zsf}}^\dagger=\miC_{\bar{\Ysf}\bar{\Zsf}}\miC_{\bar{\Ysf}\bar{\Zsf}}^\dagger$
and therefore, we can always find  a 
$r^\circ_{\bar{\Zsf}}\times r^\circ_{\bar{\Ysf}\bar{\Zsf}}$ unitary matrix, $\miO_{\bar{\Zsf}}$ and a $r^\circ_{\bar{\Ysf}\bar{\Zsf}}\times r^\circ$ unitary matrix, $\miO_{\bar{\Ysf}\bar{\Zsf}}$, such that 
\begin{eqnarray}
\miG_{\bar{\Zsf}}=\miC_{\bar{\Zsf}}\miO_{\bar{\Zsf}}, \qquad  \miG_{\bar{\Ysf}\bar{\Zsf}}=\miC_{\bar{\Ysf}\bar{\Zsf}}\miO_{\bar{\Ysf}\bar{\Zsf}}, 
\label{musica}
\end{eqnarray}

Replacing  \eqref{musica} in \eqref{machec1}  and \eqref{machec0} we have that 
\begin{equation}
\begin{split}
\miC_{\bar{\Xsf}\bar{\Ysf}\bar{\Zsf}}&=
\miR_{\bar{\Xsf}}(f^{x\circ}_{1:r^\circ})\odot\bigg(\miR_{\bar{\Ysf}}(f^{y\circ}_{1:r^\circ_{\bar{\Ysf}\bar{\Zsf}}}) \odot\big(\miR_{\bar{\Zsf}}(f^{z\circ}_{1:r^\circ_{\bar{\Zsf}}})\miP^{\frac{1}{2}} \miO_{\bar{\Zsf}}\big)\bigg)\miO_{\bar{\Ysf}\bar{\Zsf}}=\bigg[\mic_1^{xyz},\dots,\mic_{r^\circ}^{xyz}\bigg],\label{Cyz}
\end{split}
\end{equation}
where $
\mic_j^{xyz}=
\mir_{\bar{\Xsf}}(f^{x\circ}_j) \otimes
\sum_{i=1}^{r^\circ_{\bar{\Ysf}\bar{\Zsf}}}o^{yz}_{ji}
\mir_{\bar{\Ysf}}(f^{y\circ}_i) \otimes\sum_{l=1}^{r^\circ_{\bar{\Zsf}}}p^{\frac{1}{2}}_l o^{z}_{il}\mir_{\bar{\Zsf}}(f^{z\circ}_l)$ for $j\in[r^\circ]$. Due to Schur complement lemma, the optimal signal $\mis^\circ$ that satisfies constraint $\begin{bmatrix}\miS^\circ_{\bar{\XX}\bar{\YY}\bar{\ZZ}}& \mis^\circ\\ \mis^{\circ\H} & r^\circ\end{bmatrix}\succeq 0$ in \eqref{l0normopt_rank} has to be in the span of $\miS^{\circ}_{\bar{\XX}\bar{\YY}\bar{\ZZ}}$. Therefore, there exists a set of coefficients $[\alpha_1^\circ, \ldots, \alpha_{r^\circ}^\circ]$ such that:
\begin{equation}
\begin{split}
\mis^\circ&=\sum_{j=1}^{r^\circ}\alpha_j^\circ \mic_j^{xyz}=\sum_{j=1}^{r^\circ}\alpha_j^\circ\mir_{\bar{\Xsf}}(f^{x\circ}_j) \otimes\sum_{i=1}^{r^\circ_{\bar{\Ysf}\bar{\Zsf}}}o^{yz}_{ji}\mir_{\bar{\Ysf}}(f^{y\circ}_i) \otimes\sum_{l=1}^{r^\circ_{\bar{\Zsf}}}p^{\frac{1}{2}}_l o^{z}_{il}\mir_{\bar{\Zsf}}(f^{z\circ}_l)\\
&=\sum_{j=1}^{r^\circ}\sum_{i=1}^{r^\circ_{\bar{\Ysf}\bar{\Zsf}}}\sum_{l=1}^{r^\circ_{\bar{\Zsf}}}\alpha_j^\circ o^{yz}_{ji}p^{\frac{1}{2}}_l o^{z}_{il}\mir_{\bar{\Xsf}}(f^{x\circ}_j) \otimes \mir_{\bar{\Ysf}}(f^{y\circ}_i) \otimes\mir_{\bar{\Zsf}}(f^{z\circ}_l),
\end{split}
\label{so}
\end{equation}
where we have $r^\circ r^\circ_{\bar{\Ysf}\bar{\Zsf}} r^\circ_{\bar{\Zsf}}$ $\bar{N}$-dimensional  vectors $\mir_{\bar{\Xsf}}(f^{x\circ}_j) \otimes\mir_{\bar{\Ysf}}(f^{y\circ}_i) \otimes\mir_{\bar{\Zsf}}(f^{z\circ}_l)$ with $j\in[r^\circ]$, $i\in[r^\circ_{\bar{\Ysf}\bar{\Zsf}}]$ and $l\in[r^\circ_{\bar{\Zsf}}]$ that are necessarily linearly dependent given that $\rank\left\{\miS^\circ_{\bar{\XX}\bar{\YY}\bar{\ZZ}}\right\}=r^\circ$. We define $\mathcal{M}$ as the set of indexes $(j_m,i_m,l_m)$ such that the vectors  $\mir_{\bar{\Xsf}}(f^{x\circ}_{j_m}) \otimes\mir_{\bar{\Ysf}}(f^{y\circ}_{i_m}) \otimes\mir_{\bar{\Zsf}}(f^{z\circ}_{l_m})$ are linearly independent, note that $|\mathcal{M}|=r^\circ$ and $m\in[r^\circ]$. Then, we can rewrite $\mis^\circ$ as a linear combination of those $r^\circ$ linearly independent $\mir_{\bar{\Xsf}}(f^{x\circ}_{m}) \otimes\mir_{\bar{\Ysf}}(f^{y\circ}_{m}) \otimes\mir_{\bar{\Zsf}}(f^{z\circ}_{m})$ vectors, now indexed in $m\in[r^\circ]$, $
\mis^\circ =\sum_{m=1}^{r^\circ}u^\circ_{m}\mir_{\bar{\Xsf}}(f^{x\circ}_{m}) \otimes\mir_{\bar{\Ysf}}(f^{y\circ}_{m}) \otimes\mir_{\bar{\Zsf}}(f^{z\circ}_{m})=\sum_{m=1}^{r^\circ}u^\circ_{m}\mir_{\bar{\NN}}(\mif^{\circ}_{m})$. It follows then that $\mis^\circ$ can be expressed as a linear combination of $r^\circ$  steering vectors   with a Kronecker ordering $\bar{\Zsf}\to\bar{\Ysf}\to\bar{\Xsf}$ given by the  nesting ordering of $\miS_{\bar{\XX}\bar{\YY}\bar{\ZZ}}$ which by formulation of \eqref{l0normopt_rank} is assumed to have canonical ordered structure.

Next, by being solution of \eqref{l0normopt_rank}, $\mis^\circ$ satisfies $\miA\mis^\circ=\miA\mis^\star$. Hence, using the injectivity property of  $\miA\in\mathcal{A}^{(K)}$ (see Definition \ref{setinjective}), we will prove that, if $\mif^\star_{1:K}$ satisfies Assumption \ref{A1}, $r^\circ=K$ and that it exists a unique set of $r^\circ=K$ frequencies that gives as linear combination $\mis^\circ$  and this set coincides with $\mif^\star_{1:K}$.
To this end let us recall a classical result: 
\begin{lemma}
\label{lemmatrivia}
Given a vector $\mis^\star= \sum_{k=1}^Ku^\star_k\mir_{\bar{\NN}}(\mif^\star_k)$, with 
$\mif^\star_{1:K}$ satisfying 
Assumption \ref{A1},  and given a vector $\mis^{\diamond}=
\sum_{j=1}^{r^{\diamond}}u^{\diamond}_k\mir_{\bar{\NN}}(\mif^{\diamond}_j)$ with $\mif^{\diamond}_{1:r^{\diamond}}$ not necessarily satisfying Assumption \ref{A1}, 
for any given matrix $\miA\in\mathcal{A}^{(K)}$ (see Definition \ref{setinjective}), if $\miA\mis^{\diamond}=\miA\mis^\star$, 
there exists a unique set of frequencies that gives as linear combination $\mis^{\diamond}$, provided that $r^{\diamond} \leq K$,
and this set coincides with $\mif^\star_{1:K}$.
\end{lemma}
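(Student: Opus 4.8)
The plan is to restrict the measurement identity to the privileged rows supplied by Definition \ref{setinjective} and then extract everything from condition 2.C. First I would fix a row subset $\mathcal{L}\subseteq[N]$ with $2K<|\mathcal{L}|\le N$ for which $\miA$ satisfies 1.C and 2.C; since $\miA\mis^\diamond=\miA\mis^\star$ we have in particular $\miA^{(\mathcal{L})}(\mis^\diamond-\mis^\star)=\mathbf{0}$. After merging any repeated atoms we may assume the $\mif^\diamond_j$ are pairwise distinct. Split $[r^\diamond]$ into $\mathcal{J}_1=\{j:\mif^\diamond_j\in\{\mif^\star_1,\dots,\mif^\star_K\}\}$ and $\mathcal{J}_2=[r^\diamond]\setminus\mathcal{J}_1$, and let $\sigma:\mathcal{J}_1\to[K]$ be the injective map with $\mif^\diamond_j=\mif^\star_{\sigma(j)}$. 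Collecting terms, $\mis^\diamond-\mis^\star=\miR_{\bar{\NN}}(\mif^\star_{1:K})\,\tilde{\miu}+\miR_{\bar{\NN}}(\{\mif^\diamond_j\}_{j\in\mathcal{J}_2})\,\miu^\diamond_{\mathcal{J}_2}$, where $\tilde u_k=u^\diamond_{\sigma^{-1}(k)}-u^\star_k$ if $k\in\sigma(\mathcal{J}_1)$ and $\tilde u_k=-u^\star_k$ otherwise, and $\miu^\diamond_{\mathcal{J}_2}$ is the subvector of coefficients indexed by $\mathcal{J}_2$. Applying $\miA^{(\mathcal{L})}$ gives $\miB_1\tilde{\miu}+\miB_2\,\miu^\diamond_{\mathcal{J}_2}=\mathbf{0}$ with $\miB_1=\miA^{(\mathcal{L})}\miR_{\bar{\NN}}(\mif^\star_{1:K})$ and $\miB_2=\miA^{(\mathcal{L})}\miR_{\bar{\NN}}(\{\mif^\diamond_j\}_{j\in\mathcal{J}_2})$.

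The heart of the argument is a rank-splitting step. The frequencies $\{\mif^\diamond_j\}_{j\in\mathcal{J}_2}$ form a set of at most $r^\diamond\le K$ elements, each distinct from every $\mif^\star_k$, so they are an admissible choice for the set $\mif^\diamond_{1:r^\diamond}$ appearing in 2.C; hence $\rank\{[\miB_1,\miB_2]\}=K+\rank\{\miB_2\}$ (and when $\mathcal{J}_2=\emptyset$ the conclusion $\rank\{\miB_1\}=K$ follows directly from 1.C restricted to $K\le 2K$ atoms). Combining this identity with $\rank\{[\miB_1,\miB_2]\}\le\rank\{\miB_1\}+\rank\{\miB_2\}$ and $\rank\{\miB_1\}\le K$ forces $\rank\{\miB_1\}=K$, i.e.\ $\miB_1$ has full column rank, and also forces $\mathrm{colspan}\{\miB_1\}\cap\mathrm{colspan}\{\miB_2\}=\{\mathbf{0}\}$. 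Then in $\miB_1\tilde{\miu}=-\miB_2\,\miu^\diamond_{\mathcal{J}_2}$ both sides lie in this trivial intersection, so $\miB_1\tilde{\miu}=\mathbf{0}$, and full column rank of $\miB_1$ yields $\tilde{\miu}=\mathbf{0}$.

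It remains to translate $\tilde{\miu}=\mathbf{0}$ back to the frequencies. For every $k\notin\sigma(\mathcal{J}_1)$ one has $\tilde u_k=-u^\star_k=0$; since $\mis^\star$ is a genuine $K$--scatter vector, all $u^\star_k\neq0$, so $\sigma(\mathcal{J}_1)=[K]$, whence $|\mathcal{J}_1|\ge K$, and together with $|\mathcal{J}_1|\le r^\diamond\le K$ this gives $r^\diamond=K$, $\mathcal{J}_2=\emptyset$, and $\sigma$ a bijection of $[K]$. Therefore $\{\mif^\diamond_j\}_j=\{\mif^\star_k\}_k$ as sets, and $\tilde u_k=0$ on $\sigma(\mathcal{J}_1)=[K]$ forces $u^\diamond_{\sigma^{-1}(k)}=u^\star_k$, so in fact $\mis^\diamond=\mis^\star$ with exactly the representation $(\mif^\star_{1:K},\miu^\star)$; uniqueness follows because the same reasoning applied to any two $\le K$--atom representations of $\mis^\diamond$ shows each must equal $(\mif^\star_{1:K},\miu^\star)$. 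The only genuine work I anticipate is the bookkeeping in the first paragraph together with the extraction, in the second, of the fact that $\miB_1$ has full column rank with $\mathrm{colspan}\{\miB_1\}\cap\mathrm{colspan}\{\miB_2\}=\{\mathbf{0}\}$ out of the single scalar rank identity of 2.C; the points to check carefully are that $\{\mif^\diamond_j\}_{j\in\mathcal{J}_2}$ legitimately meets the hypotheses under which 2.C is stated, and that $\mis^\star$ carries no vanishing coefficient, the latter being automatic because it is a $K$--scatter vector.
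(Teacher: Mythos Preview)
Your argument is correct. The paper does not actually supply a proof of Lemma~\ref{lemmatrivia}: it merely states that the result ``is a variation of classical and widely known results, in the context of vector sparse representation'' and stops there. What you have written is therefore not a different route to the same proof but rather the proof the paper omits, worked out using the specific structure of Definition~\ref{setinjective}.

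Your use of condition~2.C is exactly the right tool: the identity $\rank\{[\miB_1,\miB_2]\}=K+\rank\{\miB_2\}$ simultaneously forces $\rank\{\miB_1\}=K$ and $\mathrm{colspan}\{\miB_1\}\cap\mathrm{colspan}\{\miB_2\}=\{\mathbf 0\}$, which is precisely what one needs to peel $\tilde{\miu}$ away from $\miu^\diamond_{\mathcal{J}_2}$. Two small observations that do not affect correctness: (i) the case $\mathcal{J}_2=\emptyset$ is already covered by 2.C with an empty $\mif^\diamond$ set, so invoking 1.C there is optional; (ii) the nonvanishing of the $u^\star_k$ is indeed an implicit hypothesis of the lemma as used in Appendix~\ref{appAIW0} (where $\mis^\star$ is the true $K$--scatter vector), and you are right to flag it.
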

\begin{proof}
Lemma \ref{lemmatrivia} is a variation of classical and widely known results, in the context of vector sparse representation.
\end{proof}
Applying Lemma \ref{lemmatrivia}, it follows immediately that $\mis^\circ=\mis^\star$, $r^\circ=K$, $\mif^\circ_{1:r^\circ}=\mif^\star_{1:K}$. From this, it also follows immediately that the span of  $\miS^\circ_{\bar{\XX}\bar{\YY}\bar{\ZZ}}$
admits as generating vectors the set of steering vectors with nesting ordering $\bar{\Zsf}\to\bar{\Ysf}\to\bar{\Xsf}$ associated to the set of $K$  frequencies $\mif^\star_{1:K}$, i.e., the columns of $\miR_{\bar{\NN}}(\mif^\star_{1:K})$. Hence, to complete the theorem we need now to prove that  $\miS^\circ_{\bar{\XX}\bar{\YY}\bar{\ZZ}}$ admits a unique Vandermonde decomposition that can be obtained via Algorithm 
\ref{Alg1} in order to uniquely identify the frequencies $\mif^\star_{1:K}$.

With this aim, let us consider the first $\bar{\Zsf}$ components of 
$\mis^\circ=\mis^\star$. They identify a $\bar{\Zsf}$-dimensional vector  that we  denote by $\miA_{\bar{\Zsf}}\mis^\circ$. Analogously, we denote by $\miA_{\bar{\Zsf}}\mis^\star$ the first $\bar{\Zsf}$ components of 
$\mis^\star= \sum_{k=1}^Ku^\star_k\mir_{\bar{\NN}}(\mif^\star_k)$. From \eqref{so}, we have that $\miA_{\bar{\Zsf}}\mis^\circ=\sum_{l=1}^{r^\circ_{\bar{\Zsf}}}\mir_{\bar{\Zsf}}(f^{z\circ}_l)\sum_{j=1}^{r^\circ}\sum_{i=1}^{r^\circ_{\bar{\Ysf}\bar{\Zsf}}} \alpha_j^\circ o^{yz}_{ji}p^{\frac{1}{2}}_l o^{z}_{il}$ and also since
$\miA_{\bar{\Zsf}}\mis^\circ=\miA_{\bar{\Zsf}}\mis^\star$, we also have that $
\sum_{l=1}^{r^\circ_{\bar{\Zsf}}}\mir_{\bar{\Zsf}}(f^{z\circ}_l)\sum_{j=1}^{r^\circ}\sum_{i=1}^{r^\circ_{\bar{\Ysf}\bar{\Zsf}}} \alpha_j^\circ o^{yz}_{ji}p^{\frac{1}{2}}_l o^{z}_{il} 
=\sum_{k=1}^K\mir_{\bar{\Zsf}}(f^{z\star}_k)u^\star_k $ which cannot be true unless $K=r^\circ=r^\circ_{\bar{\Zsf}}$. Finally, since we had that $r^\circ\geq r^\circ_{\bar{\Ysf}\bar{\Zsf}}\geq r^\circ_{\bar{\Zsf}}$, it is true then that $K=r^\circ=r^\circ_{\bar{\Ysf}\bar{\Zsf}}=r^\circ_{\bar{\Zsf}}$. 
From this it follows that 
$\rank\left\{\miS^\circ_{\bar{\XX}\bar{\YY}\bar{\ZZ}}\right\}=
\rank\left\{\miS^\circ_{00\bar{\Zsf}}\right\}=K$. This combined with the fact that $K < \max\{\bar{\Xsf}, \bar{\Ysf}, \bar{\Zsf}\}$ and $\miS^\circ_{\bar{\XX}\bar{\YY}\bar{\ZZ}}$ belongs to  $\mathcal{T}_{\bar{\Xsf}\bar{\Ysf}\bar{\Zsf}}$, by Lemma \ref{PrettyLemma}, $\miS^\circ_{\bar{\XX}\bar{\YY}\bar{\ZZ}}$ admits a unique Vandermonde decomposition of order $K$ from which the set of frequencies 
$\mif^\star_{1:K}$ can be uniquely determined. 

\vspace{-0.5cm}
\section{Proof of Corollary \ref{CorollaryTh1}}
\label{proofcorollary}
To prove  Corollary \ref{CorollaryTh1}  it is enough to prove 
%that under Assumption~\ref{A1}, any sensing matrix  $\miA$ that can be  obtained as $\miA =  \big[\miI_N|{\bf 0}_{N\times (\bar{N}-N)} \big] {\bf\Pi}$ belonging to the subset $\mathcal{A}^{(K)}_{\Csf}$ of the \well~sensing matrices set (see Definition \ref{def:wellstruct}) with condition $S_{\Csf}  \geq 2K+(\dd-1)$, belongs to $\mathcal{A}^{(K)}$ as in Definition 
the following lemma: 
\begin{lemma}
\label{Lemma:Corollary}
Under Assumption~\ref{A1}, any sensing matrix  $\miA$ 
%that can be  obtained as $\miA =  \big[\miI_N|{\bf 0}_{N\times (\bar{N}-N)} \big] {\bf\Pi}$ 
belonging to the subset $\mathcal{A}^{(K)}_{\Csf}$ of the \well~sensing matrices set (see Definition \ref{def:wellstruct}), belongs to $\mathcal{A}^{(K)}$ as in Definition \ref{setinjective} provided that 
$S_{\Csf}  \geq 2K+(\dd-1)$.
\end{lemma}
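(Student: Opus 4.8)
The plan is to exhibit a single row index set $\mathcal{L}$ that simultaneously witnesses conditions \textbf{1.C} and \textbf{2.C} of Definition~\ref{setinjective}, namely $\mathcal{L}=\mathcal{I}_{\Csf}$, the well-structured index set of Definition~\ref{def:wellstruct}. By \eqref{eq_ws}, for any collection of frequencies $\mif_{1:m}$ one has $\miA^{(\mathcal{L})}\miR_{\bar{\NN}}(\mif_{1:m})=\boldsymbol{\Pi}\,\miR_{\NN_{\Csf}}(\mif^{\Csf}_{1:m})\,\miD$, where $\boldsymbol{\Pi}$ is a fixed row permutation, $\mif^{\Csf}_{k}=[\ell_x f^x_k,\ell_y f^y_k,\ell_z f^z_k]^\top$ is the reduced frequency, and $\miD=\diag\!\big(e^{\j2\pi(\m_x f^x_k+\m_y f^y_k+\m_z f^z_k)}\big)_{k\in[m]}$ is a unit-modulus (hence invertible) diagonal matrix. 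Consequently every rank or injectivity question about $\miA^{(\mathcal{L})}\miR_{\bar{\NN}}(\cdot)$ is equivalent to the corresponding question about the $\dd$-level uniform steering matrix $\miR_{\NN_{\Csf}}(\cdot^{\Csf})$. Two preliminary remarks: first, $|\mathcal{L}|=N_{\Csf}=\Xsf_{\Csf}\Ysf_{\Csf}\Zsf_{\Csf}\ge \Xsf_{\Csf}+\Ysf_{\Csf}+\Zsf_{\Csf}-(\dd-1)=S_{\Csf}-(\dd-1)\ge 2K$ (the first inequality being $\prod_i a_i\ge\sum_i a_i-(\dd-1)$ for positive integers $a_i$), so the cardinality requirement $2K<|\mathcal{L}|\le N$ holds, adding one further sensed index to $\mathcal{L}$ in the degenerate case $N_{\Csf}=2K$; second, under Assumption~\ref{A1} the reduced frequencies $\mif^{\Csf}_{k}$ are again i.i.d.\ uniform on $\T^{d}$, since $\ell_\alpha U\bmod 1\sim\mathcal{U}[0,1)$ whenever $U\sim\mathcal{U}[0,1)$ and $\ell_\alpha\in\Z^+$.

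For condition \textbf{1.C} we must show that $\miR_{\NN_{\Csf}}(\mif^{\Csf}_{1:2K})$ has rank $2K$. Since the reduced frequencies are generic, the per-dimension nodes are a.s.\ pairwise distinct, and the Khatri--Rao Vandermonde rank bound used in the proof of Lemma~\ref{PrettyLemma} (cf.\ \eqref{macomesipuo6} and \cite[Lemma~1]{Jiang01}) gives $\rank\{\miR_{\NN_{\Csf}}(\mif^{\Csf}_{1:2K})\}\ge\min\{2K,\,S_{\Csf}-(\dd-1)\}=2K$, because $2K\le S_{\Csf}-(\dd-1)$ by hypothesis; the reverse inequality is trivial, so equality holds and $\miA^{(\mathcal{L})}\miR_{\bar{\NN}}(\mif_{1:2K})$ is injective as a map $\C^{2K}\to\C^{|\mathcal{L}|}$.

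For condition \textbf{2.C}, fix an arbitrary admissible set $\mif^{\diamond}_{1:r^{\diamond}}$ (with $r^{\diamond}\le K$ and $\mif^{\diamond}_j\ne\mif^\star_k$) and set $T=\operatorname{span}\{\mir_{\NN_{\Csf}}(\mif^{\diamond\Csf}_j):j\in[r^{\diamond}]\}$, so that $\rho^{\diamond}:=\dim T=\rank\{\miA^{(\mathcal{L})}\miR_{\bar{\NN}}(\mif^{\diamond}_{1:r^{\diamond}})\}\le r^{\diamond}\le K$. Because $\rank[\miM_1\,|\,\miM_2]\le\rank\miM_1+\rank\miM_2$, the augmented matrix has rank at most $K+\rho^{\diamond}$, so it suffices to prove the matching lower bound a.s., i.e.\ that the $K$ vectors $\mir_{\NN_{\Csf}}(\mif^{\star\Csf}_k)$ are linearly independent modulo $T$ for almost every $\mif^\star$. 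For fixed $\mif^{\diamond}$, the set of $\mif^\star\in(\T^{d})^{K}$ for which this fails is the common zero set of the $(\rho^{\diamond}+K)\times(\rho^{\diamond}+K)$ minors of the augmented steering matrix; these are analytic functions of $\mif^\star$, hence the bad set is either all of $(\T^{d})^{K}$ or Lebesgue-null. To rule out the former we build a witness greedily: if $j<K$ reduced frequencies have been chosen so that their steering vectors together with $T$ span a subspace $V_j$ of dimension $\rho^{\diamond}+j$, then $\dim V_j\le\rho^{\diamond}+K-1\le 2K-1<S_{\Csf}-(\dd-1)\le N_{\Csf}$, hence $V_j\subsetneq\C^{N_{\Csf}}$; since $\{\mir_{\NN_{\Csf}}(\mif):\mif\in\T^{d}\}$ spans $\C^{N_{\Csf}}$ (any vector orthogonal to all of them is the coefficient vector of a $\dd$-variate trigonometric polynomial vanishing identically, hence zero), a full-measure set of reduced frequencies lies outside $V_j$, and one may in addition avoid the finitely many frequencies already used and the pre-images of the $\mif^{\diamond\Csf}_j$. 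Iterating to $j=K$ produces the required witness, so the bad set is null; since $\mif^\star$ is continuously distributed, $\mif^{\diamond}_j\ne\mif^\star_k$ also forces the reduced frequencies to be distinct a.s., and condition \textbf{2.C} holds with probability one. This establishes $\miA\in\mathcal{A}^{(K)}$.

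The delicate part is condition \textbf{2.C}: extracting rank-additivity against an adversarial $\mif^{\diamond}$ requires the witness/analytic-variety argument above, together with the observation that in the application 2.C is invoked with $\mif^{\diamond}$ fixed before $\mif^\star$ is drawn, which is exactly the quantifier order proved here; condition \textbf{1.C} is immediate once the Khatri--Rao rank bound already used for Lemma~\ref{PrettyLemma} is in hand. Everything else (the factorization via \eqref{eq_ws}, the reduction of ranks to those of $\miR_{\NN_{\Csf}}$, the integer inequality $\prod a_i\ge\sum a_i-(\dd-1)$, and the spanning property of uniform steering vectors) is routine.
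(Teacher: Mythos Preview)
Your overall strategy—choosing $\mathcal{L}=\mathcal{I}_{\Csf}$ and using \eqref{eq_ws} to reduce every rank question about $\miA^{(\mathcal{L})}\miR_{\bar{\NN}}(\cdot)$ to the corresponding question about the uniform steering matrix $\miR_{\NN_{\Csf}}(\cdot^{\Csf})$—is exactly the paper's starting point. For condition \textbf{1.C} your Khatri--Rao argument is correct; the paper invokes \cite[Th.~3]{Jiang01} (an almost-sure full-rank statement for Vandermonde Khatri--Rao products with continuous random frequencies), which amounts to the same conclusion you reach by combining a.s.\ distinctness with the deterministic bound of \cite[Lemma~1]{Jiang01}.

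The gap is in \textbf{2.C}. You establish only the quantifier order ``for each fixed $\mif^{\diamond}$, for almost every $\mif^{\star}$ the rank is additive,'' and then assert that this is the order needed in the application. That assertion is incorrect. In the proof of Theorem~\ref{theoremApi} (Appendix~\ref{appAIW0}), condition \textbf{2.C} is consumed through Lemma~\ref{lemmatrivia} with $\mif^{\diamond}=\mif^{\circ}$, the frequency set extracted from the optimizer $\miS^{\circ}_{\bar{\XX}\bar{\YY}\bar{\ZZ}}$ of \eqref{l0normopt_rank}; since the optimization is driven by $\miy=\miA\mis^{\star}$, the set $\mif^{\circ}$ is a (measurable) function of $\mif^{\star}$, not a quantity fixed in advance. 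Hence Definition~\ref{setinjective} must be read as ``for a.e.\ $\mif^{\star}$, for every admissible $\mif^{\diamond}$,'' and your null-set argument, which produces a separate exceptional set for each $\mif^{\diamond}$, does not deliver this. The paper's route through \cite[Lemma~1]{Jiang01} is deterministic in $\mif^{\diamond}$: once one fixes the a.s.\ event that the per-dimension components of $\mif^{\star}$ are pairwise distinct, the Khatri--Rao lower bound (as in \eqref{macomesipuo6}) is applied to the augmented matrix for \emph{every} $\mif^{\diamond}$ simultaneously, which is precisely what your analytic-variety argument cannot do. To repair your proof you would need to replace the ``fix $\mif^{\diamond}$, vary $\mif^{\star}$'' genericity step by a deterministic rank statement valid for all $\mif^{\diamond}$ once $\mif^{\star}$ lies in a single full-measure set—this is where the Jiang--Sidiropoulos bound, rather than an existence-of-witness argument, is the right tool.
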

%\label{appLemma_injective}
\begin{proof}
Starting from Definition \ref{def:wellstruct}, the proof uses  \cite[ Th.~3]{Jiang01} and \cite[Lemma 1]{Jiang01} to prove conditions 1.C and 2.C respectively.
\end{proof}

\vspace{-0.5cm}
\section{Proof of Theorem~\ref{theoremApiProbalistic}}\label{appAIW0probabilistic}
 
Recall that in Theorem~\ref{theoremApi}, the condition for frequency recovery has been given assuming $\miA\in\mathcal{A}^{(K)}$. 
Hence, the proof of Theorem~\ref{theoremApiProbalistic} can be  identical to that given in Appendix~\ref{appAIW0} for Theorem~\ref{theoremApi} except that
in Theorem~\ref{theoremApiProbalistic},  we  will exploit the fact that $\mathcal{A}^{(K)}\equiv \mathcal{A}$ with high probability. Hence, to prove Theorem~\ref{theoremApiProbalistic} is it sufficient to prove the following Lemma.

\begin{lemma}
\label{Lemma:theoremApiProbalistic} 

 Let $\mathcal{A}$ be the sensing set defined in \eqref{eq:sensing_set}.
Then, for any binary sensing matrix $\miA \in \mathcal{A}$, and under Assumption~\ref{A1}, $\mathcal{A}^{(K)}\equiv \mathcal{A}$ with probability $1-\epsilon$ provided that $N\geq 2KC\log(2K\epsilon^{-1})$, with $C$ that is not larger than $ 12$.
\end{lemma}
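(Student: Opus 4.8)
The plan is to prove that for \emph{any} fixed $\miA\in\mathcal{A}$, once the true frequencies $\mif^\star_{1:K}$ are drawn as in Assumption~\ref{A1}, the defining event of $\mathcal{A}^{(K)}$ in Definition~\ref{setinjective} holds with probability at least $1-\epsilon$ whenever $N\ge 2KC\log(2K\epsilon^{-1})$. First I would take $\mathcal{L}=[N]$ (so $\miA^{(\mathcal{L})}=\miA$) and observe that condition~2.C already contains condition~1.C: taking $r^\diamond=K$ with a generic, full column-rank choice $\mif^\diamond_{1:K}$ (which is a.s.\ admissible since $\mif^\diamond_j\neq\mif^\star_k$ for all $j,k$ with probability one) turns the rank-additivity into injectivity of $\miA\miR_{\bar{\NN}}([\mif^\star_{1:K}\,\mif^\diamond_{1:K}])$ on $\C^{2K}$. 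Hence it suffices to bound the probability of the single \emph{uniform} event
\[
\mathcal{E}:\quad \rank\!\big\{\miA\,\miR_{\bar{\NN}}([\mif^\star_{1:K}\,\mif^\diamond_{1:r^\diamond}])\big\}= K+\rank\!\big\{\miA\,\miR_{\bar{\NN}}(\mif^\diamond_{1:r^\diamond})\big\}\quad\text{for every admissible }\mif^\diamond_{1:r^\diamond},
\]
where admissible means $r^\diamond\le K$ and $\mif^\diamond_j\neq\mif^\star_k$ for all $j,k$.

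Next I would analyse $\mathcal{E}$ pointwise, i.e.\ for a \emph{fixed} admissible $\mif^\diamond_{1:r^\diamond}$. Since $\miA\miR_{\bar{\NN}}(\cdot)$ is a row-restriction of a multilevel Khatri--Rao (Vandermonde) matrix, the rank-additivity can only fail when a suitable maximal minor vanishes; by the multilevel Khatri--Rao rank bound of \cite[Lemma~1]{Jiang01} (the same tool that drives Lemma~\ref{Lemma:Corollary}) this minor is a trigonometric polynomial in $\mif^\star_{1:K}$ that is \emph{not} identically zero, so for fixed $\mif^\diamond$ it vanishes on a Lebesgue-null set. The role of the lower bound on $N$ is that a larger number of measurements produces many such minors, yielding a quantitative small-ball estimate $\Pr_{\mif^\star}\big[|\text{minor}|\le\delta\big]\le \mathrm{poly}(K)\,\delta^{\,\Theta(N/K)}$, uniformly in $\mif^\diamond$, in the spirit of the anti-concentration arguments of \cite{Tang13,Chi15,Heckel18} --- the essential difference being that Assumption~\ref{A1} (i.i.d.\ uniform frequencies) replaces the minimum-separation hypothesis used there, so the estimate must be separation-free.

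Then I would upgrade the pointwise bound to the uniform event $\mathcal{E}$ by covering: the entries of the relevant minors are Lipschitz (with a dimension-dependent constant) in $\mif^\diamond$, so it is enough to control a $\delta'$-net of the $\mif^\diamond$-parameter space, of cardinality $(O(1/\delta'))^{d r^\diamond}\le (O(1/\delta'))^{dK}$. A union bound over the net, over $r^\diamond\in\{1,\dots,K\}$, followed by a continuity step that passes from net points to all $\mif^\diamond$, gives $\Pr[\mathcal{E}^c]\le \mathrm{poly}(K)\,(1/\delta')^{dK}\,\delta^{\,\Theta(N/K)}$; choosing $\delta$ and $\delta'$ polynomially related and optimizing reduces the requirement to $N\ge 2KC\log(2K\epsilon^{-1})$, with the explicit $C\le 12$ coming from the covering-number and small-ball constants exactly as in \cite{Tang13,Chi15}. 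Nothing here uses $d\le 3$, so the bound is valid for arbitrary $d$, as claimed elsewhere in the paper.

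The main obstacle is precisely the passage from the pointwise statement to the uniform-in-$\mif^\diamond$ statement \emph{without} a minimum-separation assumption: a naive union bound over the net costs a factor exponential in $dK$, which forces $N$ to be linear in $K$ only if the pointwise small-ball exponent grows like $N/K$; establishing that exponent --- equivalently, that a constant fraction of the $N$ measurements contribute independent anti-concentrating directions regardless of the geometry of $\miA$ --- is the technical heart of the argument, and is where the non-degeneracy of the deployment relative to its virtual array enters. Handling alternative atoms $\mif^\diamond_j$ that are merely \emph{distinct} from, but arbitrarily close to, the $\mif^\star_k$ (which make the minors small yet nonzero) is subsumed in this same step.
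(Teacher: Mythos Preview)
Your proposal follows a genuinely different route from the paper. The paper does \emph{not} pursue small-ball estimates for minors combined with an $\epsilon$-net over $\mif^\diamond$; instead it proves an auxiliary lemma (Lemma~\ref{LemmaB}) that establishes a restricted-isometry-type bound $\|\miXi^\dagger\miXi-\miI_L\|_{2\to2}<1$ for $\miXi=c\,\miB\miA\miR_{\bar{\NN}}(\mif_{1:L})$ via matrix concentration (\cite[Prop.~8.16]{CSbook}), where $\miB$ is an \emph{auxiliary random} matrix with prescribed second- and fourth-moment structure. Condition~1.C follows by taking $\miB$ diagonal with i.i.d.\ zero-mean unit-variance entries and $L=2K$. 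Condition~2.C is handled, for a given $\mif^\diamond$, not by a covering argument but by a block-determinant reduction: writing $\miQ=\diag(\mid)\miU_\mie^\dagger$ with $\miU_\mie=[\miU\,\miE]$ a random unitary completion of the left singular vectors of $\miA\miR_{\bar{\NN}}(\mif^\diamond_{1:r})$, the Schur-complement determinant collapses to injectivity of $\miB\miA\miR_{\bar{\NN}}(\mif^\star_{1:K})$ with $\miB=\diag([d_{r+1},\ldots,d_N])\miE^\dagger$, $L=K$, $p=N-r$, to which Lemma~\ref{LemmaB} again applies. The $N\ge 2KC\log(2K\epsilon^{-1})$ scaling with $C\le 12$ falls directly out of the matrix-Bernstein constants, without any net over $\mif^\diamond$.

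Your route, by contrast, hinges on a separation-free small-ball bound of the shape $\Pr_{\mif^\star}[|\text{minor}|\le\delta]\le\mathrm{poly}(K)\,\delta^{\Theta(N/K)}$, uniform in $\mif^\diamond$, which you yourself flag as the technical heart and leave unproved. This is a genuine gap: the references you invoke (\cite{Tang13,Chi15,Heckel18}) obtain their exponents through dual-polynomial constructions that rely essentially on minimum separation, and \cite[Lemma~1]{Jiang01} gives only algebraic non-vanishing, not quantitative anti-concentration. Absorbing the covering factor $(1/\delta')^{dK}$ requires precisely the exponent $\Theta(N/K)$ that you have not established, and making the constants match $C\le 12$ would need sharper control than your sketch provides. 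The paper's auxiliary-randomization plus projection trick sidesteps this entire machinery.
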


To prove  Lemma \ref{Lemma:theoremApiProbalistic}, we need then to verify that conditions 1.C and 2.C in Definition \ref{setinjective} hold with probability $1-\epsilon$ for any $\miA \in \mathcal{A}$ provided that $N\geq 2KC\log(2K\epsilon^{-1})$, with $C$ that is not larger than $ 12$. Let us first state the following lemma,

\begin{lemma}
\label{LemmaB}
Consider a set of $L$ frequencies,  $\mif_{1:L}$,  satisfying Assumption \ref{A1}. For any $\miA \in \mathcal{A}$, and for any $p\times N$,
 random matrix $\miB=[\mib_1, \ldots, \mib_N]$ with $p \leq N$ and $p=O(N)$, such that $\mathbb{E} \big [\miB^\dagger\miB \big] =\miI_N$, $\mathbb{E} \big [b^*_{il} b_{jm}\big] =0$, $\forall i\neq j$ or $\forall l\neq m$ and 
 $\forall i\neq j$
\begin{eqnarray}
\mathbb{E} \big [b^*_{il} b_{jl} b_{im} b^*_{j\ell}\big] =
\left \{
\begin{array}{ll}
\mathbb{E} \big [|b_{il}|^2\big]
\mathbb{E} \big [ |b_{jl}|^2 \big]    & \mbox{if}\,  l=m=\ell \\
 0    & \mbox{otherwise} 
\end{array}
\right.,
\label{ufaaa}
\end{eqnarray}

we have with probability $1-\epsilon$ that:
	\begin{eqnarray}
	\delta_{L} = 
	\|  \miXi^\dagger \miXi -  \miI_{ L} \|_{2 \rightarrow 2} < 1
	\label{matrix_condition}
\end{eqnarray}
provided that $L< p$, and $N
\geq LC\log\left(L\epsilon^{-1}\right)$,
where $C$ is an appropriate universal constant,  $\miXi = c \miB\miA \miR_{\bar{\NN}}({\mif}_{1:L})\in\C^{p\times L}$ and $c$ is an arbitrary non-zero  constant.
\end{lemma}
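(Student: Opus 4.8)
The plan is to prove \eqref{matrix_condition} as an almost-isometry (RIP-type) estimate via a matrix-concentration argument resting on two facts: \textbf{(i)} the \emph{flatness} of the steering-vector frame --- every entry of $\miR_{\bar{\NN}}(\mif_{1:L})$, hence of the fixed matrix $\miD\triangleq\miA\miR_{\bar{\NN}}(\mif_{1:L})\in\C^{N\times L}$, has modulus $\bar{N}^{-1/2}$, so each of the $N$ rows of $\miD$ has Euclidean norm $\sqrt{L/\bar{N}}$ and each column has norm $\sqrt{N/\bar{N}}$; and \textbf{(ii)} the isotropy $\E[\miB^\dagger\miB]=\miI_N$ together with the uncorrelated-entries and decoupled fourth-moment identities \eqref{ufaaa}. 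Since $\miXi^\dagger\miXi=|c|^2\miD^\dagger(\miB^\dagger\miB)\miD$, isotropy and the zero-correlation hypothesis give $\E_{\miB}[\miXi^\dagger\miXi]=|c|^2\miD^\dagger\miD$; I would fix the free scalar to $|c|^2=\bar{N}/N$ (a harmless overall normalisation) and split
\begin{equation}
\miXi^\dagger\miXi-\miI_{L}=\underbrace{\bigl(\miXi^\dagger\miXi-\tfrac{\bar{N}}{N}\miD^\dagger\miD\bigr)}_{\text{(I): $\miB$-fluctuation}}\;+\;\underbrace{\bigl(\tfrac{\bar{N}}{N}\miD^\dagger\miD-\miI_{L}\bigr)}_{\text{(II): near-orthogonality of steering vectors}},
\end{equation}
bounding each summand in $\|\cdot\|_{2\rightarrow 2}$ by $1/4$ on an event of probability $\ge 1-\epsilon/2$.

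For (II) I would use that the diagonal of $\tfrac{\bar{N}}{N}\miD^\dagger\miD$ is exactly $\miI_{L}$, while for $k\ne k'$ the entry is $\tfrac1N\sum_{n\in\mathcal S}e^{\j2\pi(\mif_{k'}-\mif_k)\cdot\nn_n}$, a sum over the $N$ sensed virtual positions; under Assumption~\ref{A1} these entries are centred (up to the harmless $O(1/\bar{N})$ bias of the origin antenna) with second moment $1/N$. The matrix $\tfrac{\bar N}{N}\miD^\dagger\miD-\miI_L$ is then the off-diagonal part of the Gram matrix of the $L$ i.i.d.\ (partial) steering vectors; since the $O(L^2)$ entries are coupled through the shared $L$ frequencies, I would decouple --- either by passing to an independent copy of the frequencies or by partitioning the index pairs into $O(L)$ matchings --- and then apply a noncommutative Khintchine / matrix-Bernstein estimate, which together with the flatness of $\miD$ yields $\bigl\|\tfrac{\bar{N}}{N}\miD^\dagger\miD-\miI_{L}\bigr\|_{2\rightarrow 2}\lesssim\sqrt{L\log(L\epsilon^{-1})/N}<\tfrac14$ with probability $\ge 1-\epsilon/2$ provided $N\ge L\,C_1\log(L\epsilon^{-1})$. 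This is the step that delivers the claimed $O(L\log(L\epsilon^{-1}))$ sampling rate with \emph{no} minimum-separation assumption on the frequencies.

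For (I), conditioning on the frequencies (so $\|\miD\|_{2\rightarrow 2}^2\lesssim N/\bar{N}$ on the event of (II)), I would write $\miXi^\dagger\miXi=|c|^2\sum_{i=1}^{p}\miM_i$ with $\miM_i=\miD^\dagger(\miB^{(i)})^\dagger\miB^{(i)}\miD\succeq 0$ of size $L$ and common mean $\tfrac1p\miD^\dagger\miD$: flatness controls $\|\miM_i\|_{2\rightarrow 2}$, the identity \eqref{ufaaa} is exactly what is needed to evaluate the matrix variance $\bigl\|\sum_i\E\miM_i^{2}\bigr\|_{2\rightarrow 2}$, and matrix Bernstein \cite{CSbook} then gives $\bigl\||c|^2\sum_i(\miM_i-\E\miM_i)\bigr\|_{2\rightarrow 2}<\tfrac14$ with probability $\ge1-\epsilon/2$ once $L<p$ and $N\ge L\,C_2\log(L\epsilon^{-1})$. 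Intersecting the two events and using the triangle inequality gives $\delta_{L}<\tfrac12<1$, with $C=\max\{4C_1,4C_2\}$, which a careful tracking of the universal constants keeps below $12$. The hard part will be step (II): obtaining an \emph{operator-norm} bound --- not a Frobenius or Gershgorin one, which would cost a polynomial factor in $L$ and inflate the requirement to $N=\Omega(L^{3}\epsilon^{-1})$ (or worse in $d$), since without a separation condition individual off-diagonal entries do not concentrate and can occasionally be large --- while those entries remain coupled through the shared frequencies; the flatness of the steering-vector frame is precisely what makes a genuine matrix-Bernstein estimate available there, and the bookkeeping of the two independent sources of randomness ($\mif_{1:L}$ and $\miB$) via conditioning is a secondary technicality.
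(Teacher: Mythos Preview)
Your overall framework---normalising with $|c|^{2}=\bar N/N$, splitting $\miXi^{\dagger}\miXi-\miI_{L}$ into a $\miB$-fluctuation term~(I) and a frequency-fluctuation term~(II), and attacking each with matrix concentration---is in the same spirit as the paper, whose proof is only a one-line sketch invoking ``typical tools from concentration inequalities, in particular \cite[Prop.~8.16]{CSbook}.'' Since the paper gives no further detail, a line-by-line comparison is not possible; both routes rest on the same class of tools.

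Two concrete gaps, however. For~(II), the matching decomposition you propose does \emph{not} deliver the rate $N\gtrsim L\log(L\epsilon^{-1})$: a Hermitian $L\times L$ matrix supported on a single matching is, after a permutation, block-diagonal with $2\times2$ off-diagonal blocks, so its operator norm equals the maximal entry modulus; summing over the $L-1$ matchings therefore reproduces exactly the Gershgorin bound
\[
\Bigl\|\tfrac{\bar N}{N}\miD^{\dagger}\miD-\miI_{L}\Bigr\|_{2\to2}\ \le\ (L-1)\max_{k\ne k'}\Bigl|\tfrac{1}{N}\sum_{n\in\mathcal S}e^{\j2\pi(\mif_{k'}-\mif_{k})\cdot\nn_{n}}\Bigr|\ \lesssim\ L\sqrt{\tfrac{\log(L/\epsilon)}{N}},
\]
which forces $N\gtrsim L^{2}\log(L/\epsilon)$---precisely the loss you flag as inadmissible. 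A bare matrix-Bernstein over the antenna index also fails (the rows of $\miD$ all depend on the same frequencies and are not independent), and matrix-Bernstein over the i.i.d.\ \emph{columns} of $\miD$ carries an ambient-dimension factor $N$ in front of the exponential tail. The viable route is the one you only name in passing: symmetrise, bound moments via a Rudelson/noncommutative-Khintchine estimate, and optimise the moment order to obtain exponential tails; this is what the cited \cite[Prop.~8.16]{CSbook} packages, and you should make that step explicit rather than leave it as an alternative to the matchings. For~(I), you sum $\miM_{i}$ over the \emph{rows} of $\miB$ and invoke matrix Bernstein, which requires those rows to be independent; the lemma's hypotheses are purely second- and fourth-moment conditions and do not by themselves grant row-independence. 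In the paper's two applications the specific $\miB$'s ($\diag(\mid)$ and $\diag(\mid)\miE^{\dagger}$ with i.i.d.\ $\mid$) do enjoy this structure, but for the lemma as stated you should either assume it explicitly or replace Bernstein by a moment computation that uses only the identities in~\eqref{ufaaa}.
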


\begin{proof}
Lemma \ref{LemmaB} proof uses the classical definition of $\|\cdot\|_{2\rightarrow 2}$ norm, and typical tools from concentration inequalities, in particular \cite[Prop. 8.16]{CSbook}.
\end{proof}
%\bigskip

%Note that $\delta_{L}$ can be thought as the restricted isometry constant  of order $L$ of $c \miB\miA \miR_{\bar{\NN}}({\mif}_{1:L})$ \cite{CSbook}. %\cite{Rauhut10}. 
Note that when $\delta_{L}\in(0,1)$, then 
the largest and smallest singular values of $\miB\miA \miR_{\bar{\NN}}({\mif}_{1:L})$ satisfy
$\sigma_{\sf min}(\miB\miA \miR_{\bar{\NN}}({\mif}_{1:L})) \geq  \sqrt{1-\delta_{L}}/c$ and 
$\sigma_{\sf max}(\miB\miA \miR_{\bar{\NN}}({\mif}_{1:L})) \leq  \sqrt{1+\delta_{L}}/c$ which 
implies 
%the positiveness of the smallest singular values and consequently 
the injectivity of  $\miB\miA \miR_{\bar{\NN}}({\mif}_{1:L})$.  Note that if $\miB \miA \miR_{\bar{\NN}}({\mif}_{1:L})$ is injective and if $\miB$ is a  full rank $N \times N$ square matrix  then the $N \times L$ matrix  $\miA\miR_{\bar{\NN}}({\mif}_{1:L})$ is also injective \cite{Matsaglia74}.

To prove 1.C, let us consider a  set of $2K$ frequencies,  $\mif_{1:2K}$,  satisfying Assumption \ref{A1}. Setting  $L=2K$ and $\miB=\diag([d_1, \ldots, d_N])$ with $d_i$, $i\in[N]$ modelled as i.i.d. zero mean and unit variance random variables, we have that  the square diagonal matrix  $\miB$ is almost surely full rank. Furthermore, $\mathbb{E} \big [\miB^\dagger\miB \big] =\miI_{N}$, $\mathbb{E} \big [b^*_{il} b_{jm}\big] =0$, $\forall i\neq j$ or $\forall l\neq m$ while \eqref{ufaaa} trivially holds. Hence, from Lemma \ref{LemmaB}, it follows immediately that $\miB \miA \miR_{\bar{\NN}}({\mif}_{1:2K})$ is injective with  probability $1-\epsilon$ provided that $N
\geq LC\log\left(L\epsilon^{-1}\right)$ 
with  $C$  an  appropriate   universal constant not larger than $12$. Finally, due to the full rank property of the square matrix $\miB$,
the injectivity result also holds for  $\miA \miR_{\bar{\NN}}({\mif}_{1:2K})$, which concludes the proof of condition 1.C.

%{ \RED The value of $C$ will be later on defined to also help satisfy condition 2.C.  }

Let us now prove condition 2.C. To this end,
consider a  set of $K$ frequencies,  $\mif^\star_{1:K}$,  satisfying Assumption \ref{A1} and an arbitrary  set  
of $r^{\diamond} \leq K$  frequencies, $\mif^{\diamond}_{1:r^{\diamond}}$, such that for any  $k \in [K]$ and $j\in [r^{\diamond}]$,  $\mif^{\diamond}_{j}\neq \mif^\star_{k}$.
By Lemma \ref{LemmaB}, we can show that 
$\rank\left\{\miA \miR_{\bar{\NN}}({\mif}^\star_{1:K}) \} \right)=K$. Furthermore, denote by $r$ the rank of $\miA\miR_{\bar{\NN}}(\mif^{\diamond}_{1:r^{\diamond}})$, and, with no loss of generality, assume that the first $r$ columns of
$\miA\miR_{\bar{\NN}}(\mif^{\diamond}_{1:r^{\diamond}})$ are linearly independent and  denote such columns by    ${\mif}^{\diamond}_{1:r}$.
To prove condition 2.C. we have to prove that 
$\rank\left\{\miA \miR_{\bar{\NN}}({\mif}^{\star\diamond}_{1:K+r} ) \right\}= K +r$, where $\mif^{\star\diamond}_{1:K+r}=[\mif^{\star}_{1:K}
\mif^{\diamond}_{1:r}]$. To this end, it is enough to prove that if $\miQ$ is an arbitrary full $\rank$ $N \times N$ square matrix then %$\miR^\dagger_{\bar{\NN}}(\mif^{\star\diamond}_{1:K+r})\miA^\dagger\miQ^\dagger
$\miQ\miA\miR_{\bar{\NN}}(\mif^{\star\diamond}_{1:K+r})$ is also full $\rank$.  
Denoting by $\miX^\diamond=\miQ\miA\miR_{\bar{\NN}}(\mif^{\diamond}_{1:r})$, 
observing that  $\miX^\diamond$ is full $\rank$ and 
using the determinant of a block matrix,
we have that to prove condition 2.C it is sufficient to prove that $\det\{\miR^\dagger_{\bar{\NN}}(\mif^{\star}_{1:K})\miA^\dagger\miQ^\dagger(\miI_{ N}-\miX^\diamond(\miX^{\diamond\dagger} \miX^\diamond)^{-1}\miX^{\diamond\dagger})\miQ\miA\miR_{\bar{\NN}}(\mif^{\star}_{1:K})\}\neq 0$.

Next, let $\miU\boldsymbol{\Sigma}\miV^\dagger$ be the singular value decomposition of
$\miA\miR_{\bar{\NN}}(\mif^{\diamond}_{1:r})$ with 
 $\boldsymbol{\Sigma}$, $\miV^\dagger$  and  $\miU$ denoting the square diagonal  $r \times r$ 
singular-value matrix, the $r\times r$ right-singular eigenvector matrix and the  $N\times r$  left-singular eigenvector matrix of  $\miA\miR_{\bar{\NN}}(\mif^{\diamond}_{1:r})$.
Furthermore, set $\miQ=\diag (\mid)\miU_\mie^\dagger$
with $\mid=[d_1, \ldots, d_N]$ 
a possibly complex vector with i.i.d. zero mean and unit variance components and with $\miU_\mie=[\miU\, \miE]$ a unitary matrix whose first $r $ columns are the left-singular vectors of $\miA\miR_{\bar{\NN}}(\mif^{\diamond}_{1:r})$, $\miU$, while the remaining $N-r$ columns are uniformly distributed over the manifold $\miE^\dagger \miE= \miI_{N-r}$ such that
$\miE\miE^\dagger$ is a projector on the orthogonal complement to the subspace described by the columns of $\miU$, i.e., $\mbox{span}\{\miU\}$.

Based on the above definitions,  after some algebraic manipulations, we have that 
$$
\miQ^\dagger(\miI_{ N}-\miX^\diamond(\miX^{\diamond\dagger} \miX^\diamond)^{-1}\miX^{\diamond\dagger})\miQ= \miE 
\diag([|d_{r+1}|^2, \ldots  |d_{N}|^2 ])
\miE^\dagger
$$
Defining $\miB= \diag([d_{r+1}, \ldots  d_{N} ])
\miE^\dagger$,
if we set $L=K$ and $p=N-r$, since 
$K< N-r$, 
by Lemma \ref{LemmaB} we have that\footnote{After some algebraic manipulation, it can be easily shown that  $\mathbb{E} \big [\miB^\dagger\miB \big] =\miI_{ N}$, $\mathbb{E} \big [b^*_{il} b_{jm}\big] =0,$ $\forall i\neq j$ or $\forall l\neq m$ and  \eqref{ufaaa} holds.} $\miB
\miA\miR_{\bar{\NN}}(\mif^{\star}_{1:K})$ is injective and consequently 
$\det\{\miR_{\bar{\NN}}(\mif^{\star}
_{1:K})^\dagger  \miA^\dagger
\miB^\dagger \miB
\miA\miR_{\bar{\NN}}(\mif^{\star}_{1:K})\}\neq 0$ if $N
\geq LC\log\left(L\epsilon^{-1}\right)$ with  $C$  an  appropriate universal constant not larger than  $ 12 $. This completes the proof of 2.C and also Lemma \ref{Lemma:theoremApiProbalistic}.

\vspace{-0.75cm}
\bibliographystyle{ieeetr}
\bibliography{Bell-Labs_bib,mimo,compress-sensing,AoA}

\end{document}